\def\@seccntformat#1{\protect\makebox[0pt][r]{\csname the#1\endcsname\hspace{12pt}}}\makeatother
\numberwithin{equation}{section}
\newcommand{\transpose}{^\top\! }
\newcommand{\inner}[2]{\left\langle{#1},{#2}\right\rangle}
\newcommand{\innersmall}[2]{\langle{#1},{#2}\rangle}
\newcommand{\innerbig}[2]{\big\langle{#1},{#2}\big\rangle}
\newcommand{\MSE}{\mathrm{MSE}}
\newcommand{\trace}{\mathrm{trace}}
\newcommand{\skeww}[1]{\operatorname{skew}\!\left( #1 \right)}
\newcommand{\PP}{\mathrm{P}}
\newcommand{\Exp}{\mathrm{Exp}}
\newcommand{\Log}{\mathrm{Log}}
\newcommand{\expect}{\mathbb{E}}
\newcommand{\expectt}[1]{\mathbb{E}\left\{{#1}\right\}}
\newcommand{\T}{\mathrm{T}}
\newcommand{\HH}{\mathrm{H}}
\newcommand{\VV}{\mathrm{V}}
\newcommand{\M}{\mathcal{M}}
\newcommand{\p}{\mathcal{P}}
\newcommand{\SOn}{{\mathrm{SO}(n)}}
\newcommand{\On}{{\mathrm{O}(n)}}
\newcommand{\son}{{\mathfrak{so}(n)}}
\newcommand{\SOt}{{\mathrm{SO}(3)}}
\newcommand{\Rnn}{{\mathbb{R}^{n\times n}}}
\newcommand{\Rn}{{\mathbb{R}^n}}
\newcommand{\grad}{\mathrm{grad}}
\newcommand{\sign}{\mathrm{sign}}
\newcommand{\diag}{\mathrm{diag}}
\newcommand{\D}{\mathrm{D}}
\newcommand{\frobnorm}[1]{\left\|{#1}\right\|}
\newcommand{\smallfrobnorm}[1]{\big\|{#1}\big\|}
\newcommand{\norm}[1]{\left\|{#1}\right\|}
\newcommand{\sqnorm}[1]{\left\|{#1}\right\|^2}
\newcommand{\sqfrobnorm}[1]{\frobnorm{#1}^2}
\newcommand{\dmu}{\mathrm{d}\mu}
\newcommand{\SHORTEN}[1]{}
\newcommand{\floor}[1]{\lfloor #1 \rfloor}
\newcommand{\uniform}{\mathrm{Uni}}
\newcommand{\langevin}{\mathrm{Lang}}
\newcommand{\langout}{\mathrm{LangUni}}
\newcommand{\Zeq}{{Z_\mathrm{eq}}}
\newcommand{\feq}{{f_\mathrm{eq}}}
\newcommand{\bfR}{\mathbf{R}}
\newcommand{\bfxi}{{\bm{\xi}}}
\newcommand{\bfX}{{\mathbf{X}}}
\newcommand{\bfeta}{{\bm{\eta}}}
\newcommand{\bfOmega}{\boldsymbol{\Omega}}
\newcommand{\ddiag}{\mathrm{ddiag}}
\newcommand{\dist}{\mathrm{dist}}
\newtheorem{theorem}{Theorem}[section]
\newtheorem{lemma}[theorem]{Lemma}
\newtheorem{assumption}{Assumption}
\newtheorem{example}{Example}
\newtheorem{definition}{Definition}
\newtheorem{remark}{Remark}
\newcommand{\add}[1]{{#1}}
\newcommand{\remove}[1]{}
\newcommand{\removesafe}[1]{}
\newcommand{\change}[2]{{#2}}
\newcommand{\changesafe}[2]{{#2}}
\begin{document}

\title{\LARGE Cramér-Rao bounds for synchronization of rotations}

\author{
{\sc By Nicolas Boumal}$^{\,1*}$, {\sc Amit Singer}$^{\,2}$,\\{\sc P.-A. Absil}$^{\,1}$ {\sc and Vincent D.\ Blondel}$^{\,1}$ \\[5pt]
\small (1) Department of Mathematical Engineering, ICTEAM Institute,\\ \small  Université catholique de Louvain, Belgium.\\[2pt]
\small (2) Program in Applied and Computational Mathematics,\\ \small  Princeton University, NJ, USA.\\[2pt]
$^*${\footnotesize Corresponding author: nicolas.boumal@uclouvain.be}}

\date{\small Compiled on \today}

\maketitle

\begin{abstract}
{Synchronization of rotations is the problem of estimating a set of rotations $R_i\in\SOn, i = 1\ldots N$ based on noisy measurements of relative rotations $R_i^{}R_j\transpose$. This fundamental problem has found many recent applications, most importantly in structural biology. We provide a framework to study synchronization as estimation on Riemannian manifolds for arbitrary $n$ under a large family of noise models. The noise models we address\remove{ essentially} encompass zero-mean isotropic noise, and we develop tools for Gaussian-like as well as heavy-tail types of noise in particular. As a main contribution, we \change{establish formulas for}{derive} the \remove{Fisher information matrix and associated }Cramér-Rao bounds of synchronization\add{, that is, lower-bounds on the variance of unbiased estimators}. We find that these bounds are structured by the pseudoinverse of the measurement graph Laplacian, where edge weights are proportional to measurement quality. We leverage this to provide interpretation \add{in terms of random walks} and visualization tools for these bounds in both the anchored and anchor-free scenarios. Similar bounds previously established were limited to rotations in the plane and Gaussian-like noise.}

{Synchronization of rotations, estimation on manifolds, estimation on graphs, graph Laplacian, Fisher information, Cramér-Rao bounds, distributions on the rotation group, Langevin.}
\\
2000 Math Subject Classification: 
62F99, 
94C15, 
22C05, 
05C12, 
\end{abstract}

\section{Introduction}
\label{sec:intro}

Synchronization of rotations is the problem of estimating rotation matrices $R_1, \ldots, R_N$ from noisy measurements of relative rotations $R_i^{}R_j\transpose$. The set of available measurements gives rise to a graph structure, where the $N$ nodes correspond to the rotations $R_i$ and an edge is present between two nodes $i$ and $j$ if a measurement of $R_i^{}R_j\transpose$ is given. Depending on the application, some rotations may be known in advance or not. The known rotations, if any, are called \emph{anchors}. In the absence of anchors, it is only possible to recover the rotations up to a global rotation, since the measurements only reveal relative information.

Motivated by the pervasiveness of synchronization of rotations in applications, we propose a derivation and analysis of Cramér-Rao bounds for this estimation problem. Our results hold for rotations in the \emph{special orthogonal group}
\begin{align}
	\SOn & = \{R \in \Rnn : R\transpose R = I \textrm{ and } \det R = 1\}
	\label{eq:SOn}
\end{align}
for arbitrary $n$ and for a large family of practically useful noise models.

Synchronization of rotations appears naturally in a number of important applications. Tron and Vidal for example consider a network of cameras~\citep{tron2009distributed}. Each camera has a certain position in $\mathbb{R}^3$ and orientation in $\SOt$. For some pairs of cameras, a calibration procedure produces a noisy measurement of relative position and relative orientation. The task of using all relative orientation measurements simultaneously to estimate the configuration of the individual cameras is a synchronization problem. Cucuringu et al.\ address sensor network localization based on inter-node distance measurements~\citep{cucuringu2011sensor}. In their approach, they decompose the network in small, overlapping, rigid patches. Each patch is easily embedded in space owing to its rigidity, but the individual embeddings are noisy. These embeddings are then aggregated by aligning overlapping patches. For each pair of such patches, a measurement of relative orientation is produced. Synchronization permits the use all of these measurements simultaneously to prevent error propagation. In related work, a similar approach is applied to the molecule problem~\citep{cucuringu2011eigenvector}. Tzeneva et al.\ apply synchronization to the construction of 3D models of objects based on scans of the objects under various unknown orientations~\citep{tzvetelina}. Singer and Shkolnisky study cryo-EM imaging~\citep{singer2011eigen}. In this problem, the aim is to produce a 3D model of a macro-molecule based on many projections (pictures) of the molecule under various random and unknown orientations. A procedure specific to the cryo-EM imaging technique helps estimating the relative orientation between pairs of projections, but this process is very noisy. In fact, most measurements are outliers. The task is to use these noisy measurements of relative orientations of images to recover the true orientations under which the images were acquired. This naturally falls into the scope of synchronization of rotations, and calls for very robust algorithms. \add{More recently, Sonday \protect{et al.\ use} synchronization as a means to compute rotationally invariant distances between snapshots of trajectories of dynamical systems, as an important preprocessing stage before dimensionality reduction~{\citep{SondaySingerKevrekidis2013}}. In a different setting, Yu applies synchronization of in-plane rotations (under the name of angular embedding) as a means to rank objects based on pairwise relative ranking measurements~{\citep{Yu2009luminance,Yu2012angularembedding}}. This approach is in contrast with existing techniques which realize the embedding on the real line, but appears to provide unprecedented robustness.}

\subsection{Previous work}

Tron and Vidal, in their work about camera calibration, develop distributed algorithms based on consensus on manifolds to solve synchronization on $\mathbb{R}^3\rtimes \SOt$~\citep{tron2009distributed}. Singer studies synchronization of phases, that is, rotations in the plane, and reflects upon the generic nature of synchronization as the task of estimating group elements $g_1, \ldots, g_N$ based on measurements of their ratios $g_i^{}g_j^{-1}$~\citep{singer2010angular}. \change{In that work, the author focuses on synchronization in the presence of many outliers in the measurements and establishes that synchronization is surprisingly robust against such outliers. Bounds are derived based on information-theoretic arguments to establish how many measurements need to be accurate for synchronization to be possible. A fast algorithm based on eigenvector computations is shown to obtain good solutions. In particular, it performs better than random estimation as soon as the information-theoretic threshold is reached.}{In that work, the author focuses on synchronization in the presence of many outliers. A fast algorithm based on eigenvector computations is proposed, corresponding to a relaxation of an otherwise untractable optimization problem. The eigenvector method is shown to be remarkably robust. In particular, it is established that a large fraction of measurements can be outliers while still guaranteeing the algorithm performs better than random in expectation.} In further work, Bandeira et al.\ \change{go deeper in the analysis and generalize the bounds and}{derive Cheeger-type inequalities for synchronization on the orthogonal group under adversarial noise and generalize} the eigenvector method to rotations in $\Rn$~\citep{bandeira2012cheeger}. Hartley et al.\ develop a robust algorithm based on a Weiszfeld iteration to compute L1-means on $\SOn$, and extend the algorithm to perform robust synchronization of rotations~\citep{hartley2011l1}. \add{In a more recent paper~{\cite{hartley2013rotation}}, some of these authors and others address a broad class of rotation averaging problems, with a specific outlook for characterizations of the existence and uniqueness of global optimizers of the related optimization problems. Synchronization is addressed too under the name of \emph{multiple rotation averaging}. }\add{Wang and Singer propose a robust algorithm for synchronization called LUD for \emph{least unsquared deviation}~{\citep{wang2012LUD}}. It is based on a convex relaxation of an L1 formulation of the synchronization problem and comes with exact and stable recovery guarantees under a large set of scenarios. }Russel et al.\ develop a decentralized algorithm for synchronization on the group of translations $\Rn$~\citep{russell2011optimal}. Barooah and Hespanha study the covariance of the BLUE estimator for synchronization on $\Rn$ with anchors. This covariance coincides with the Cramér-Rao bound (CRB) under Gaussian noise. They give interpretations of the covariance in terms of the resistance distance on the measurement graph~\citep{barooah2007estimation}. Howard et al.\ study synchronization on the group of translations $\Rn$ and on the group of phases $\mathrm{SO}(2)$~\citep{howard2010estimation}. They establish CRB's for synchronization in the presence of Gaussian-like noise on these groups and provide decentralized algorithms to solve synchronization. Their derivation of the CRB's seems to rely heavily on the commutativity (and thus flatness) of $\mathbb{R}^n$ and $\mathrm{SO}(2)$, and hence does not apply to synchronization on $\SOn$ in general. Furthermore, they only analyze Gaussian-like noise.

CRB's are a classical tool in estimation theory~\citep{rao1945information} that provide a lower-bound on the variance of any unbiased estimator for an estimation problem, based on the measurements' distribution. The classical results focus on estimation problems for which the sought parameter belongs to a Euclidean space. In the context of synchronization, this is not sufficient, since the parameters we seek belong to the manifold of rotations $\SOn$. Important work by Smith~\citep{smith2005covariance} as well as by Xavier and Barroso~\citep{xavier2005intrinsic} extends the theory of CRB's to the realm of manifolds in a practical way. Furthermore, in the absence of anchors, \change{synchronization can only be solved up to a global rotation. This}{the global rotation} ambiguity\remove{, as we shall see,} gives rise to a singular Fisher information matrix (FIM). As the CRB's are usually expressed in terms of the inverse of the FIM, this is problematic. Xavier and Barroso provide a nice geometric interpretation in terms of estimation on quotient manifolds of the well-known fact that one may use the pseudoinverse of the FIM in such situations~\citep{xavier2004singularfisher}. It is then apparent that to establish CRB's for synchronization, one needs to either impose anchors, leading to a submanifold geometry, or work on the quotient manifold. In~\citep{crbsubquot}, an attempt is made at providing a unified framework to build such CRB's. We use these tools in the present work.

Other authors have established CRB's for sensor network localization (SNL) and synchronization problems. Ash and Moses study SNL based on inter-agent distance measurements, and notably give an interpretation of the CRB in the absence of anchors~\citep{ash2007relative}. Chang and Sahai tackle the same problem~\citep{chang2006cramer}. As we mentioned earlier, Howard et al.\ derive CRB's for synchronization on $\Rn$ and $\mathrm{SO}(2)$~\citep{howard2010estimation}. CRB's for synchronization on $\Rn$ are re-derived as an example in~\citep{crbsubquot}. A remarkable fact is that, for all these problems of estimation on graphs, the pseudoinverse of the graph Laplacian plays a fundamental role in the CRB---although not all authors explicitly reflect on this. As we shall see, this special structure is rich in interpretations, many of which exceed the context of synchronization of rotations specifically.

\subsection{Contributions and outline}

In this work, we state the problem of synchronization of rotations in $\SOn$ for arbitrary $n$ as an estimation problem on a manifold---Section~\ref{sec:synchro}. We describe the geometry of this manifold both for anchored synchronization (giving rise to a submanifold geometry) and for anchor-free synchronization (giving rise to a quotient geometry)---Section~\ref{sec:geometry}. Among other things, this paves the way for maximum-likelihood estimation using optimization on manifolds~\citep{AMS08}, which is the focus of other work~\citep{boumal2013MLE}.

We describe a family of noise models (probability density functions) on $\SOn$ that fulfill a few assumptions---Section~\ref{sec:measuresson}. We show that this family is both useful for applications (it essentially contains zero-mean, isotropic noise models) and practical to work with (the expectations one is lead to compute via integrals on $\SOn$ are easily converted into classical integrals on $\Rn$). In particular, this family includes a kind of heavy-tailed distribution on $\SOn$ that appears to be new. We describe this distribution and we are convinced it will prove useful for other estimation problems on $\SOn$ with outliers.

In Section~\ref{sec:fimsynch}, we derive the Fisher information matrix (FIM) for synchronization and establish it is structured by the Laplacian of the measurement graph, where edge weights are proportional to the quality of their respective measurements. The FIM plays a central role in the Cramér-Rao bounds (CRB) we establish for anchored and anchor-free synchronization---Section~\ref{sec:crbsynch}.\add{ The main tools used to that effect are intrinsic versions of the CRB's, as developed in~{\citep{smith2005covariance}} and, in a formulation more directly useful to our setting, in~{\citep{crbsubquot}}.} The CRB's are structured by the pseudoinverse of the Laplacian of the measurement graph. We derive clear interpretations of these bounds in both cases, and note they differ significantly.

As a main result for anchored synchronization, we show that for any unbiased estimator $\hat R_i$ of the rotation $R_i$, asymptotically for small errors,
\begin{align}
	\expectt{\dist^2(R_i, \hat R_i)} \geq d^2\, (\mathcal{L}_A^\dagger)_{ii},
	\label{eq:intro1}
\end{align}
where $\dist(R_i, \hat R_i) = \|\log(R_i\transpose\hat R_i^{})\|_\mathrm{F}$ is the geodesic distance on $\SOn$, $d = n(n-1)/2$, $\mathcal{L}_A$ is the Laplacian of the weighted measurement graph with rows and columns corresponding to anchors set to zero and $\dagger$ denotes the Moore-Penrose pseudoinverse---see~\eqref{eq:erroranchored}. The better a measurement is, the larger the weight on the associated edge is---see~\eqref{eq:infoweight}. This bound holds in a small-error regime under the assumption that noise on different measurements is independent, that the measurements are isotropically distributed around the true relative rotations and that there is at least one anchor in each connected component of the graph. \add{The right-hand side of this inequality is zero if node $i$ is an anchor, and is small if node $i$ is strongly connected to anchors. More precisely, it is proportional to the ratio between the average number of times a random walker starting at node $i$ will be at node $i$ before hitting an anchored node and the total amount of information available in measurements involving node $i$.}

As a main result for anchor-free synchronization, we show that for any unbiased estimator $\hat R_i^{}\hat R_j\transpose$ of the relative rotation $R_i^{}R_j\transpose$, asymptotically for small errors,
\begin{align}
	\expectt{\dist^2(R_i^{}R_j\transpose, \hat R_i^{}\hat R_j\transpose)} \geq d^2\,(e_i-e_j)\transpose \mathcal{L}^\dagger (e_i-e_j),
	\label{eq:intro2}
\end{align}
where $\mathcal{L}$ is the Laplacian of the weighted measurement graph and $e_i$ is the $i^\textrm{th}$ column of the $N\times N$ identity matrix---see~\eqref{eq:erroranchorfree}. This bound holds in a small-error regime under the assumption that noise on different measurements is independent, that the measurements are isotropically distributed around the true relative rotations and that the measurement graph is connected.\add{ The right-hand side of this inequality is proportional to the squared Euclidean commute time distance (ECTD)~{\citep{saerens2004principal}} on the weighted graph. It measures how strongly nodes $i$ and $j$ are connected. More explicitly, it is proportional to the average time a random walker starting at node $i$ walks before hitting node $j$ and then node $i$ again.}

Section~\ref{sec:analysis} hosts a few comments on the CRB's. In particular, evidence is presented that the CRB might be achievable, a PCA-like visualization tool is detailed, a link with the Fiedler value of the graph is described and the robustness of synchronization versus outliers is confirmed, via arguments that differ from those in~\citep{singer2010angular}.

\add{\protect\paragraph{Notation.} $I_n$ is the identity matrix of size $n$. When it is clear from the context, the subscript is omitted. $\On = \{R\in\Rnn : R\transpose R = I\}$ is the orthogonal group. We synchronize $N$ rotations in $\SOn$~\eqref{eq:SOn}. Anchors (if any) are indexed in $A \subset \{1,\ldots,N\}$. $d = \dim \SOn = n(n-1)/2$. $\otimes$ denotes the Kronecker product. $\frobnorm{A}^2 = \trace(A\transpose A)$ is the squared Frobenius norm. $\skeww{A} = (A-A\transpose)/2$ is the skew-symmetric part of $A$. $A^\dagger$ is the Moore-Penrose pseudoinverse of $A$. Bold letters $\bfR$ denote tuples $(R_1, \ldots, R_N)$.}

\section{Synchronization of rotations}
\label{sec:synchro}

Synchronization of rotations is the problem of estimating a set of rotations $R_1, \ldots, R_N \in \SOn$ from noisy measurements of some of the relative rotations $R_i^{}R_j\transpose$. In this section, we model synchronization as an estimation problem on a manifold, since the set of rotations $\SOn$ is a Lie group, i.e., a group and a manifold at the same time.

In our estimation problem, the target quantities (the parameters) are the rotation matrices $R_1, \ldots, R_N \in \SOn$. The natural parameter space is thus:
\begin{align}
	\p = \SOn\times \cdots \times \SOn \qquad (N \textrm{ copies}).
	\label{eq:P}
\end{align}
Let $[N] \triangleq \{1,\ldots,N\}$. Consider a set $\mathcal{E} \subset [N]\times [N]$ such that $(i, j)\in\mathcal{E} \Rightarrow i\neq j$ and $(j,i)\in\mathcal{E}$. This set defines an undirected graph over $N$ nodes,
\begin{align}
	G & = ([N], \mathcal{E}) & (\textrm{the measurement graph}).
\end{align}
For each edge $(i, j) \in \mathcal{E}$, $i<j$, we have a measurement $H_{ij} \in \SOn$ of the form
\begin{align}
	H_{ij} = Z_{ij}^{}R_i^{}R_j\transpose,
	\label{eq:Hij}
\end{align}
where $Z_{ij}$ is a random variable distributed over $\SOn$ following a probability density function (pdf) $f_{ij} : \SOn \to \mathbb{R}^+$, with respect to the Haar measure $\mu$ on $\SOn$---see Section~\ref{sec:measuresson}. For example, when $Z_{ij}$ is deterministically equal to the identity matrix $I$, the measurement is perfect, whereas when $Z_{ij}$ is uniformly distributed over (the compact set) $\SOn$, the measurement contains no information. We say that the measurement is \emph{unbiased}, or that the noise has \emph{zero-mean}, if the mean value of $Z_{ij}$ is the identity matrix---a notion we make precise in Section~\ref{sec:measuresson}. We also say that noise is \emph{isotropic} if its probability density function is only a function of distance to the identity. Different notions of distance on $\SOn$ yield different notions of isotropy. In Section~\ref{sec:measuresson} we give a few examples of useful zero-mean, isotropic distributions on $\SOn$.

Pairs $(i, j)$ and $(j, i)$ in $\mathcal{E}$ refer to the same measurement. By symmetry, for $i<j$, we define $H_{ji}^{} = Z_{ji}^{}R_j^{}R_i\transpose = H_{ij}\transpose$ and the random variable $Z_{ji}$ and its density $f_{ji}$ are defined accordingly in terms of $f_{ij}$ and $Z_{ij}$. In particular,
\begin{align}
	Z_{ji} & = R_j^{}R_i\transpose Z_{ij}\transpose R_i^{} R_j\transpose, \textrm{ and } & f_{ij}(Z_{ij}) & = f_{ji}(Z_{ji}).
	\label{eq:ijfixesji}
\end{align}
The pdf's $f_{ij}$ and $f_{ji}$ are linked as such because the Haar measure $\mu$ is invariant under the change of variable relating $Z_{ij}$ and $Z_{ji}$.

In this work, we restrict our attention to noise models that fulfill the three following assumptions:
\begin{assumption}[smoothness and support]\label{assu:smoothpositive}
	Each pdf $f_{ij} : \SOn \to \mathbb{R}^+ = (0, +\infty)$ is a smooth, positive function.
\end{assumption}
\begin{assumption}[independence]\label{assu:independence}
The $Z_{ij}$'s associated to different edges of the measurement graph are independent random variables. That is, if $(i,j)\neq(p,q)$ and $(i,j)\neq(q,p)$,  $Z_{ij}$ and $Z_{pq}$ are independent.
\end{assumption}
\begin{assumption}[invariance]\label{assu:invariant}
	Each pdf $f_{ij}$ is invariant under orthogonal conjugation, that is, $\forall Z\in\SOn, \forall Q \in \On, f_{ij}(QZQ\transpose) = f_{ij}(Z).$ We say $f_{ij}$ is a \emph{spectral function}, since it only depends on the eigenvalues of its argument. The eigenvalues of matrices in $\mathrm{SO}(2k)$ have the form $e^{\pm i\theta_1}, \ldots, e^{\pm i\theta_k}$, with $0 \leq \theta_1, \ldots, \theta_k \leq \pi$. The eigenvalues of matrices in $\mathrm{SO}(2k+1)$ have an additional eigenvalue 1.
\end{assumption}
Assumption~\ref{assu:smoothpositive} is satisfied for all the noise models we consider; it could be relaxed to some extent but would make some of the proofs more technical. Assumption~\ref{assu:independence} \add{is admittedly a strong restriction but }is necessary to make the joint pdf of the whole estimation problem easy to derive, leading to an easy expression for the log-likelihood function. As we will see in Section~\ref{sec:fimsynch}, it is also at the heart of the nice Laplacian structure of the Fisher information matrix. Assumption~\ref{assu:invariant} is a technical condition that will prove useful in many respects. One of them is the observation that pdf's which obey Assumption~\ref{assu:invariant} are easy to integrate over $\SOn$. We expand on this in Section~\ref{sec:measuresson}, where we also show that a large family of interesting pdf's satisfy these assumptions, namely, zero-mean isotropic distributions.

Under Assumption~\ref{assu:independence}, the \emph{log-likelihood} of an estimator $\hat \bfR = (\hat R_1, \ldots, \hat R_N) \in \p$, given the measurements $H_{ij}$, is given by:
\begin{align}
	L(\hat \bfR) & = \frac{1}{2} \sum_{(i,j)\in\mathcal{E}} \log f_{ij}(H_{ij}^{}\hat R_j^{} \hat R_i\transpose) = \frac{1}{2} \sum_{i = 1}^N \sum_{j\in V_i} \log f_{ij}(H_{ij}^{}\hat R_j^{} \hat R_i\transpose),
	\label{eq:L}
\end{align}
where $V_i \subset [N]$ is the set of neighbors of node $i$, i.e., $j\in V_i \Leftrightarrow (i, j) \in \mathcal{E}$. The coefficient $1/2$ reflects the fact that measurements $(i,j)$ and $(j,i)$ give the same information and are deterministically linked. Under Assumption~\ref{assu:smoothpositive}, $L$ is a smooth function on the smooth manifold $\p$.


The log-likelihood function is invariant under a global rotation. Indeed,
\begin{align}
	\forall \hat \bfR \in \p,\ \forall Q \in \SOn,\quad L(\hat \bfR Q) = L(\hat \bfR ),
	\label{eq:invariance}
\end{align}
where $\hat \bfR Q$ denotes $(\hat R_1Q, \ldots, \hat R_NQ) \in \p$. This invariance encodes the fact that all sets of rotations of the form $\hat \bfR Q$ yield the same distribution of the measurements $H_{ij}$, and are hence \change{indistinguishable, i.e., rotations can only be recovered up to a global rotation}{equally likely estimators}.

To resolve the ambiguity, one can follow at least two courses of action. One is to include additional constraints, most naturally in the form of anchors, i.e., assume some of the rotations are known\footnote{If we only know that $R_i$ is close to some matrix $\bar R$, and not necessarily equal to it, we may add a phony node $R_{N+1}$ anchored at $\bar R$, and link that node and $R_i$ with a high confidence measure $H_{i,N+1}=I_n$. This makes it possible to have ``soft anchors''.}. The other is to acknowledge the invariance by working on the associated quotient space.

Following the first path, the parameter space becomes a Riemannian submanifold of $\p$. Following the second path, the parameter space becomes a Riemannian quotient manifold of $\p$. In the next section, we describe the geometry of both. \remove{In Section~{\ref{sec:measuresson}}, we review useful tools to compute integrals on $\SOn$ and describe probability density functions that are both useful to model practical problems and fulfill our assumptions. In Section~{\ref{sec:fimsynch}} we use the expression of the log-likelihood function~{\eqref{eq:L}} to derive the Fisher information matrix for synchronization, which plays a central role in deriving the Cramér-Rao bounds for synchronization with and without anchors---see Section~{\ref{sec:crbsynch}}.}

\begin{remark}[A word about other noise models]

We show that measurements of the form $H_{ij}^{} = Z_{ij,1}^{}R_i^{}R_j\transpose Z_{ij,2}^{}$, with $Z_{ij,1}$ and $Z_{ij,2}$ two \remove{independent }random rotations with pdf's satisfying Assumptions 1--3, satisfy the noise model considered in the present work. In doing so, we use some material from Section~\ref{sec:measuresson}. For notational convenience, let us consider $H = Z_1RZ_2$, with $Z_1, Z_2$ two \remove{independent }random rotations with pdf's $f_1, f_2$ satisfying Assumptions 1--3, $R \in \SOn$ fixed. Then, the pdf of $H$ is the function $h : \SOn \to \mathbb{R}^+$ given by (essentially) the convolution of $f_1$ and $f_2$ on $\SOn$:
\begin{align}
	h(H) & = \int_\SOn f_1(Z) f_2(R\transpose Z\transpose H)\, \dmu(Z) 
		   = \int_\SOn f_1(Z) f_2(Z\transpose H R\transpose) \,\dmu(Z),
\end{align}
where we used that $f_2$ is spectral: $f_2(R\transpose Z\transpose H) = f_2(R R\transpose Z\transpose H R\transpose)$. Let $\Zeq$ be a random rotation with smooth pdf $\feq$. We will shape $\feq$ such that the random rotation $\Zeq R$ has the same distribution as $H$. This condition can be written as follows: for all measurable subsets $S\subset \SOn$,
\begin{align}
	\int_S h(Z) \,\dmu(Z) = \int_{SR\transpose} \feq(Z) \,\dmu(Z) = \int_S \feq(ZR\transpose) \,\dmu(Z),
\end{align}
where, going from the second to the third integral, we used the change of variable $Z:=ZR\transpose$ and the bi-invariance of the Haar measure $\mu$. In words: for all $S$, the probability that $H$ belongs to $S$ must be the same as the probability that $\Zeq R$ belongs to $S$. This must hold for all $S$, hence $\feq(\Zeq R\transpose) = h(\Zeq)$, or equivalently:
\begin{align}
	\feq(\Zeq) = h(\Zeq R) = \int_\SOn f_1(Z) f_2(Z\transpose \Zeq) \,\dmu(Z).
\end{align}
This uniquely defines the pdf of $\Zeq$. It remains to show that $\feq$ is a spectral function. For all $Q\in\On$,
\begin{align}
	\feq(Q \Zeq Q\transpose)
		& = \int_\SOn f_1(Z) f_2(Z\transpose Q \Zeq Q\transpose) \,\dmu(Z) \\
		\textrm{($f_2$ is spectral)} \quad & = \int_\SOn f_1(Z) f_2(Q\transpose Z\transpose Q \Zeq) \,\dmu(Z) \\
		\textrm{(change of variable: $Z:=QZQ\transpose$)} \quad & = \int_\SOn f_1(QZQ\transpose ) f_2(Z\transpose \Zeq) \,\dmu(Z) \\
		\textrm{($f_1$ is spectral)} \quad & = \int_\SOn f_1(Z) f_2(Z\transpose \Zeq) \,\dmu(Z) = \feq(\Zeq).
\end{align}
Hence, the noise model $H_{ij}^{} = Z_{ij,1}^{}R_i^{}R_j\transpose Z_{ij,2}^{}$ can be replaced with the model $H_{ij}^{} = Z_{ij,\mathrm{eq}}^{}R_i^{}R_j\transpose$ and the pdf of $Z_{ij,\mathrm{eq}}$ is such that it falls within the scope of the present work.

In particular, if $f_1$ is a point mass at the identity, so that $H = RZ_2$ (noise multiplying the relative rotation on the right rather than on the left), $\feq = f_2$, so that it does not matter whether we consider $H_{ij}^{} = Z_{ij}^{}R_i^{}R_j\transpose$ or $H_{ij}^{} = R_i^{}R_j\transpose Z_{ij}^{}$: they have the same distribution.

\end{remark}

\section{Geometry of the parameter spaces}
\label{sec:geometry}

\add{This section defines the notions of distance involved in the CRB's~\eqref{eq:intro1} and~\eqref{eq:intro2}. It introduces tools to obtain the gradient involved in the FIM~\eqref{eq:gradiLhatR} as well as the Riemannian $\Log$ map necessary to define notions of error and bias for estimators on manifolds. This is achieved by defining the parameter spaces for both the anchored and the anchor-free scenario and endowing them with a proper Riemannian structure.}

We start with a quick reminder of the geometry of $\SOn$. We then go on to describe the parameter spaces for the anchored and the anchor-free cases of synchronization. It is assumed that the reader is familiar with standard concepts from Riemannian geometry~\citep{AMS08,boothby1986introduction,oneill}. \add{For readers less comfortable with Riemannian geometry, it may be helpful to consider anchored synchronization only at first. The distinction between both cases is clearly delineated in the remainder of the paper, most of which remains relevant even if only the anchored case is considered.}

The group of rotations $\SOn$~\eqref{eq:SOn} is a connected, compact Lie group of dimension $d = n(n-1)/2$. \add{Being a Lie group, it is also a manifold and thus admits a tangent space $\T_Q\SOn$ at each point $Q$. The tangent space at the identity plays a special role. It is known as the \emph{Lie algebra} of $\SOn$ and is the set of skew-symmetric matrices:}
\begin{align}
	\T_I\SOn = \son & \triangleq \{ \Omega \in \Rnn : \Omega + \Omega\transpose = 0 \}.
\end{align}
\add{The other tangent spaces are easily obtained from $\son$:}
\begin{align}
	\T_Q\SOn = Q\son = \{Q\Omega : \Omega \in \son\}.
\end{align}
\change{As usual, we consider $\SOn$ as a Riemannian submanifold of the general linear group, and hence endow it with the usual metric}{We endow $\SOn$ with the usual Riemannian metric by defining the following inner product on all tangent spaces}:
\begin{align}
	\inner{Q\Omega_1}{Q\Omega_2}_Q & = \trace(\Omega_1\transpose\Omega_2^{}), & \sqnorm{Q\Omega}_Q = \inner{Q\Omega}{Q\Omega}_Q = \sqfrobnorm{\Omega}.
\end{align}
For better readability, we often omit the subscripts $Q$. The orthogonal projector from the embedding space $\Rnn$ onto the tangent space $\T_Q\SOn$ is:
\begin{align}
	\PP_Q(H) = Q\skeww{Q\transpose H}, \textrm{ with } \skeww{A} \triangleq (A-A\transpose)/2.
	\label{eq:projSOn}
\end{align}
\add{It plays an important role in the computation of gradients of functions on $\SOn$, which will come up in deriving the FIM. }The exponential map and the logarithmic map accept simple expressions in terms of matrix exponential and logarithm:
\begin{align}
	\Exp_Q & \colon \T_Q\SOn \to \SOn, &
	\Log_Q & \colon \SOn \to \T_Q\SOn \\
	\Exp_Q&(Q\Omega) = Q\exp(\Omega), &
	\Log_{Q_1}&(Q_2) = Q_1\log(Q_1\transpose Q_2^{}). \label{eq:ExpLogSOn}
\end{align}
The mapping $t \mapsto \Exp_Q(tQ\Omega)$ defines a geodesic curve on $\SOn$, passing through $Q$ with velocity $Q\Omega$ at time $t=0$. Geodesic curves have zero acceleration and may be considered as the equivalent of straight lines on manifolds. The logarithmic map $\Log_{Q}$ is (locally) the inverse of the exponential map $\Exp_{Q}$. \add{In the context of an estimation problem, $\Log_Q(\hat Q)$ represents the estimation error of $\hat Q$ for the parameter $Q$, that is, it is a notion of difference between $Q$ and $\hat Q$.} The geodesic (or Riemannian) distance on $\SOn$ is the length of the shortest path (the geodesic arc) joining two points:
\begin{align}
	\dist(Q_1, Q_2) = \norm{\Log_{Q_1}(Q_2)}_{Q_1} = \smallfrobnorm{\log(Q_1\transpose Q_2^{})}.
	\label{eq:distSOn}
\end{align}
In particular, for rotations in the plane ($n = 2$) and in space ($n = 3$), the geodesic distance between $Q_1$ and $Q_2$ is $\sqrt{2}\theta$, where $\theta\in[0,\pi]$ is the angle by which $Q_1\transpose Q_2^{}$ rotates.

Let $\tilde f : \Rnn \to \mathbb{R}$ be a differentiable function, and let $f = \tilde f|_{\SOn}$ be its restriction to $\SOn$. The gradient of $f$ is a tangent vector field to $\SOn$ uniquely defined by:
\begin{align}
	\inner{\grad f(Q)}{Q\Omega} = \D f(Q)[Q\Omega] \quad \forall \Omega \in \son,
\end{align}
with $\grad f(Q) \in \T_Q\SOn$ and $\D f(Q)[Q\Omega]$ the directional derivative of $f$ at $Q$ along $Q\Omega$. Let $\nabla \tilde f(Q)$ be the usual gradient of $\tilde f$ in $\Rnn$. Then, the gradient of $f$ is easily computed as~\citep[eq.\,(3.37)]{AMS08}:
\begin{align}
	\grad f(Q) = \PP_Q(\nabla \tilde f(Q)).
\end{align}
In the sequel, we often write $\nabla f$ to denote the gradient of $f$ seen as a function in $\Rnn$, even if it is defined on $\SOn$.

The parent parameter space for synchronization is the product Lie group $\p = \SOn^N$. Its geometry is trivially obtained by element-wise extension of the geometry of $\SOn$ just described. In particular, tangent spaces and the Riemannian metric are given by:
\begin{align}
	\T_{\bfR}\p & = \{ \bfR \bfOmega = (R_1\Omega_1, \ldots, R_N\Omega_N) : \Omega_1, \ldots, \Omega_N \in \son \}, \label{eq:TRP}\\
	\inner{\bfR\bfOmega}{\bfR\bfOmega'}_{\bfR} & = \sum_{i=1}^N \trace(\Omega_i\transpose \Omega_i'). \label{eq:metricTRP}
\end{align}

\subsection{Anchored case}
\label{subsec:anchoredgeometry}

In specific applications, we may know some of the rotation matrices $R_i$. Let $A \subset [N]$ be the set of indices of known rotations, called \emph{anchors}. The associated parameter space
\begin{align}
	\p_A = \{\hat \bfR = (\hat R_1, \ldots, \hat R_N)\in\p : \forall i\in A, \hat R_i = R_i\}
	\label{eq:pA}
\end{align}
is a Riemannian submanifold of $\p$. The tangent space at $\hat \bfR\in \p_A$ is given by:
\begin{align}
	\T_{\hat \bfR}\p_A & = \{\mathbf{H}\in\T_{\hat \bfR}\p : \forall i\in A, H_i = 0 \},
\end{align}
such that the orthogonal projector $\PP_{\hat \bfR} : \T_{\hat \bfR}\p \to \T_{\hat \bfR}\p_A$ simply sets to zero all components of a tangent vector that correspond to anchored rotations. All tools on $\p_A$ (exponential and logarithmic map for example) are inherited in the obvious fashion from $\p$. In particular, the geodesic distance on $\p_A$ is:
\begin{align}
	\dist^2(\hat \bfR, \hat \bfR') & = \sum_{i\notin A} \smallfrobnorm{\log(\hat R_i\transpose \hat R_i')}^2.
	\label{eq:distanchored}
\end{align}

\subsection{Anchor-free case}
\label{subsec:nonanchoredgeometry}

When no anchors are provided, \remove{the rotations can only be recovered up to a common rotation, since the measurements only provide information regarding relative rotations. More precisely, }the distribution of the measurements $H_{ij}$~\eqref{eq:Hij} is the same whether the true rotations are $(R_1, \ldots, R_N)$ or $(R_1Q, \ldots, R_NQ)$, regardless of $Q\in\SOn$. Consequently, the measurements contain no information as to which of those sets of rotations is the right one. This leads to the definition of the equivalence relation $\sim$:
\begin{align}
	(R_1, \ldots, R_N) \sim (R_1', \ldots, R_N') \quad \Leftrightarrow \quad
 \exists Q \in\SOn : R_i = R_i'Q \textrm{ for } i = 1,\ldots, N.
	\label{eq:equivrelation}
\end{align}
This equivalence relation partitions $\p$ into equivalence classes, often called \emph{fibers}. The quotient space (the set of equivalence classes)
\begin{align}
	\p_\emptyset \triangleq \p/\sim
	\label{eq:pempty}
\end{align}
is again a smooth manifold (in fact, $\p_\emptyset$ is a \emph{coset manifold} because it results from the quotient of the Lie group $\p$ by a closed subgroup of $\p$~\cite[Prop.\,11.12]{oneill}). The notation reminds us that the set of anchors $A$ is empty. Naturally, the log-likelihood function $L$~\eqref{eq:L} is constant over equivalence classes and hence descends as a well-defined function on $\p_\emptyset$.
Each fiber
\begin{align}
	[\bfR] & = \{ (R_1Q, \ldots, R_NQ) : Q \in \SOn \} \in \p_\emptyset
	\label{eq:equivclass}
\end{align}
is a Riemannian submanifold of the total space $\p$. As such, at each point $\bfR$, the fiber $[\bfR]$ admits a tangent space that is a subspace of $\T_{\bfR}\p$. That tangent space to the fiber is called the \emph{vertical space} at $\bfR$, noted $\VV_{\bfR}$. Vertical vectors point along directions that are parallel to the fibers. Vectors orthogonal, in the sense of the Riemannian metric~\eqref{eq:metricTRP}, to all vertical vectors form the \emph{horizontal space} $\HH_{\bfR} = (\VV_{\bfR})^\perp$, such that the tangent space $\T_{\bfR}\p$ is equal to the direct sum $\VV_{\bfR} \oplus \HH_{\bfR}$. Horizontal vectors are orthogonal to the fibers, hence point toward the other fibers, i.e., the other points on the quotient space $\p_\emptyset$.

Because $\p_\emptyset$ is a coset manifold, the projection
\begin{align}
	\pi : \p \to \p_\emptyset : \bfR \mapsto \pi(\bfR) = [\bfR]
\label{eq:submersion}
\end{align}
is a \emph{submersion}. That is, the restricted differential $\D\pi|_{\HH_{\bfR}}$ is a full-rank linear map between $\HH_\bfR$ and $\T_{[\bfR]}\p_\emptyset$. Practically, this means that the horizontal space $\HH_{\bfR}$ is naturally identified to the (abstract) tangent space $\T_{[\bfR]}\p_\emptyset$. This results in a practical means of representing abstract vectors of $\T_{[\bfR]}\p_\emptyset$ simply as vectors of $\HH_{\bfR}\subset \T_{\bfR}\p$, where $\bfR$ is any arbitrarily chosen member of $[\bfR]$. Each horizontal vector $\xi_{\bfR}$ is unambiguously related to its abstract counterpart $\xi_{[\bfR]}$ in $\T_{[\bfR]}\p_\emptyset$ via
\begin{align}
	\xi_{[\bfR]} = \D\pi(\bfR)[\xi_{\bfR}].
	\label{eq:horizontallift}
\end{align}
The representation $\xi_{\bfR}$ of $\xi_{[\bfR]}$ is called the \emph{horizontal lift} of $\xi_{[\bfR]}$ at $\bfR$, a notion made precise in~\citep[\S\,3.5.8]{AMS08} and depicted for intuition in~\citep[Fig.\,1--2]{xavier2004singularfisher}---these figures are also reproduced in~\citep[Fig.\,2--3]{crbsubquot}.

Consider $\xi_{[\bfR]}$ and $\eta_{[\bfR]}$, two tangent vectors at $[\bfR]$. Let $\xi_\bfR$ and $\eta_\bfR$ be their horizontal lifts at $\bfR \in [\bfR]$ and let $\xi_{\bfR'}$ and $\eta_{\bfR'}$ be their horizontal lifts at $\bfR' \in [\bfR]$. The Riemannian metric on $\p$~\eqref{eq:metricTRP} is such that $\inner{\xi_\bfR}{\eta_\bfR}_\bfR = \inner{\xi_{\bfR'}}{\eta_{\bfR'}}_{\bfR'}$. This motivates us to define the metric
\begin{align}
	\inner{\xi_{[\bfR]}}{\eta_{[\bfR]}}_{[\bfR]} = \inner{\xi_\bfR}{\eta_\bfR}_\bfR
\end{align}
on $\p_\emptyset$, which is then well defined (it does not depend on our choice of $\bfR$ in $[\bfR]$) and turns the restricted differential $\D\pi(\bfR) : \HH_{\bfR} \to \T_{[\bfR]}\p_\emptyset$ into an isometry. This is a Riemannian metric and it is the only such metric such that $\pi$~\eqref{eq:submersion} is a  \emph{Riemannian submersion} from $\p$ to $\p_\emptyset$~\cite[Prop.\,2.28]{gallot2004riemannian}. 
Hence, $\p_\emptyset$ is a \emph{Riemannian quotient manifold} of $\p$.


%

We now describe the vertical and horizontal spaces of $\p$ w.r.t.\ the equivalence relation~\eqref{eq:equivrelation}. Let $\bfR\in\p$ and $Q : \mathbb{R} \to \SOn : t \mapsto Q(t)$ such that $Q$ is smooth and $Q(0) = I$. Then, the derivative $Q'(0) = \Omega$ is some skew-symmetric matrix in $\son$. Since $(R_1Q(t), \ldots, R_NQ(t)) \in [\bfR]$ for all $t$, it follows that $\frac{\mathrm{d}}{\mathrm{d}t} \left. (R_1Q(t), \ldots, R_NQ(t))\right|_{t=0} = (R_1\Omega, \ldots, R_N\Omega)$ is a tangent vector to the fiber $[\bfR]$ at $\bfR$, i.e., it is a vertical vector at $\bfR$. All vertical vectors have such form, hence:
\begin{align}
	\VV_{\bfR} & = \big\{ (R_1\Omega, \ldots, R_N\Omega) : \Omega \in \son \big\}.
	\label{eq:vertical}
\end{align}
A horizontal vector $(R_1\Omega_1, \ldots, R_N\Omega_N) \in \HH_{\bfR}$ is orthogonal to all vertical vectors, i.e., $\forall \Omega\in\son$,
\begin{align}
	0 = \inner{(R_1\Omega_1, \ldots, R_N\Omega_N)}{(R_1\Omega, \ldots, R_N\Omega)} & = \sum_{i=1}^N \trace((R_i\Omega_i)\transpose R_i\Omega) 
= \innerbig{\sum_{i=1}^N \Omega_i}{\Omega}.
\end{align}
Since this is true for all skew-symmetric matrices $\Omega$, we find that the horizontal space is defined as:
\begin{align}
	\HH_{\bfR} & = \big\{ (R_1\Omega_1, \ldots, R_N\Omega_N) : \Omega_1, \ldots, \Omega_N\in \son \textrm{ and } \sum_{i=1}^N \Omega_i = 0\big\}.
	\label{eq:horizontal}
\end{align}
This is not surprising: vertical vectors move all rotations in the same direction, remaining in the same equivalence class, whereas horizontal vectors move away toward other equivalence classes.

We now define the logarithmic map on $\p_\emptyset$. Considering two points $[\bfR], [\hat \bfR] \in \p_\emptyset$, the logarithm $\Log_{[\bfR]}([\hat \bfR])$ is the smallest tangent vector in $\T_{[\bfR]}\p_\emptyset$ that brings us from the first equivalence class to the other through the exponential map. In other words: it is the error vector of {$[\hat\bfR]$} in estimating {$[\bfR]$}. Working with the horizontal lift representation
\begin{align}
	\D\pi(\bfR)|_{\HH_{\bfR}}^{-1}[\Log_{[\bfR]}([\hat \bfR])] & = (R_1\Omega_1, \ldots, R_N\Omega_N) \in \HH_{\bfR},
	\label{eq:loganchored}
\end{align}
the $\Omega_i$'s are skew-symmetric matrices solution of:
\begin{align}
	\min_{\Omega_i \in \son, Q\in\SOn} & \sqfrobnorm{\Omega_1} + \cdots + \sqfrobnorm{\Omega_N}, \\
	\textrm{ such that } & R_i\exp(\Omega_i) = \hat R_i^{} Q, i = 1\ldots N, \textrm{ and} \\
								& \Omega_1 + \cdots + \Omega_N = 0.
\end{align}
The rotation $Q$ sweeps through all members of the equivalence class $[\hat \bfR]$ in search of the one closest to $\bfR$. By substituting $\Omega_i = \log(R_i\transpose \hat R_i^{}Q)$ in the objective function, we find that the objective value as a function of $Q$ is $\sum_{i=1\ldots N} \|\log(R_i\transpose \hat R_i^{}Q)\|_\textrm{F}^2$. Critical points of this function w.r.t.\ $Q$ verify $\sum_{i=1}^N \Omega_i = 0$, hence we need not enforce the last constraint: all candidate solutions are horizontal vectors. Summing up, we find that the squared geodesic distance on $\p_\emptyset$ obeys:
\begin{align}
	\dist^2([\bfR], [\hat \bfR]) & = \min_{Q\in\SOn} \sum_{i=1}^N \smallfrobnorm{\log(R_i\transpose \hat R_i^{}Q)}^2.
	\label{eq:distanchorfree}
\end{align}
Since $\SOn$ is compact, this is a well-defined quantity. Let $Q\in\SOn$ be one of the global minimizers. Then, an acceptable value for the logarithmic map is
\begin{align}
	\D\pi(\bfR)|_{\HH_{\bfR}}^{-1}[\Log_{[\bfR]}([\hat \bfR])] & = \left( R_1\log(R_1\transpose \hat R_1^{}Q), \ldots, R_N\log(R_N\transpose \hat R_N^{}Q) \right).
	\label{eq:Logpempty}
\end{align}
Under reasonable proximity conditions on $[\bfR]$ and $[\hat \bfR]$, the global maximizer $Q$ is uniquely defined, and hence so is the logarithmic map. \add{An optimal $Q$ is a \emph{Karcher mean}---or \emph{intrinsic mean} or \emph{Riemannian center of mass}---of the rotation matrices $\hat R_1\transpose R_1, \ldots, \hat R_N\transpose R_N$. Hartley et al.~{\citep{hartley2013rotation}}, among others, give a thorough overview of algorithms to compute such means as well as uniqueness conditions.}

\section{Measures, integrals and distributions on $\SOn$}
\label{sec:measuresson}

\add{To define a noise model for the synchronization measurements~{\eqref{eq:Hij}}, we now cover a notion of probability density function (pdf) over $\SOn$ and give a few examples of useful pdf's.}

Being a compact Lie group, $\SOn$ admits a unique bi-invariant Haar measure $\mu$ such that $\mu(\SOn) = 1$~\citep[Thm\,3.6, p.\,247]{boothby1986introduction}. Such a measure verifies, for all measurable subsets $S\subset\SOn$ and for all $L,R\in\SOn$, that $\mu(LSR) = \mu(S)$, where $LSR \triangleq \{ LQR : Q\in S \} \subset \SOn$. That is, the measure of a portion of $\SOn$ is invariant under left and right actions of $\SOn$. We will need something slightly more general:
\begin{lemma}[extended bi-invariance]\label{lem:biinvariance}
$\forall L, R\in\On$ such that $\det(LR) = 1$, $\forall S \subset \SOn$ measurable, $\mu(LSR) = \mu(S)$ holds.
\end{lemma}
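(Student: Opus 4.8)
The plan is to reduce the claim to the ordinary bi-invariance of the Haar measure $\mu$ on $\SOn$ by absorbing the "orthogonal but not special-orthogonal" part of $L$ and $R$ into a single conjugation. First I would dispose of the trivial sub-cases: if both $L,R\in\SOn$, the statement is exactly the bi-invariance of $\mu$ already recalled. So the interesting case is $\det L = \det R = -1$ (this is the only possibility compatible with $\det(LR)=1$ other than both being $+1$). Fix a reference reflection $F\in\On$ with $\det F = -1$, for instance $F = \diag(-1,1,\ldots,1)$, and note $F^{-1} = F$. Write $L = L_0 F$ and $R = F R_0$ where $L_0 = LF \in \SOn$ and $R_0 = FR \in \SOn$ (both have determinant $(-1)(-1)=1$). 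Then for any measurable $S\subset\SOn$,
\begin{align}
	LSR = L_0 F S F R_0 = L_0\, (F S F)\, R_0.
\end{align}

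Next I would observe that $F S F = \{ FQF : Q\in S\}$ is again a measurable subset of $\SOn$: conjugation $Q\mapsto FQF$ is a homeomorphism of $\Rnn$ mapping $\On$ to $\On$, and it preserves determinant, hence maps $\SOn$ to $\SOn$ and $S$ to a measurable subset. Now I apply ordinary bi-invariance twice: since $L_0,R_0\in\SOn$,
\begin{align}
	\mu(LSR) = \mu\big(L_0\,(FSF)\,R_0\big) = \mu(FSF).
\end{align}
So the whole lemma comes down to the single claim $\mu(FSF) = \mu(S)$ for every measurable $S\subset\SOn$, i.e.\ that the conjugation map $c_F\colon Q\mapsto FQF$ is measure-preserving on $\SOn$.

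To establish $\mu\circ c_F = \mu$, the cleanest route is uniqueness of the Haar probability measure: $\SOn$ is a compact group, so it carries a \emph{unique} left-invariant probability measure, which is also the unique right-invariant one and is bi-invariant~\citep[Thm\,3.6, p.\,247]{boothby1986introduction}. Define $\nu(S) \triangleq \mu(c_F(S)) = \mu(FSF)$. Then $\nu$ is a Borel probability measure on $\SOn$ (it is a pushforward of $\mu$ along the homeomorphism $c_F^{-1} = c_F$), and it is left-invariant: for $g\in\SOn$, using $c_F(gS) = (FgF)\,c_F(S)$ with $FgF\in\SOn$ and left-invariance of $\mu$,
\begin{align}
	\nu(gS) = \mu\big((FgF)\,c_F(S)\big) = \mu(c_F(S)) = \nu(S).
\end{align}
By uniqueness of the normalized left-invariant measure, $\nu = \mu$, i.e.\ $\mu(FSF) = \mu(S)$. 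Combining this with the two displays above yields $\mu(LSR) = \mu(S)$, which is the assertion. I do not anticipate a genuine obstacle here; the only point requiring a little care is checking that conjugation by $F$ genuinely maps measurable subsets of $\SOn$ to measurable subsets and is a bijection of $\SOn$ (so that $\nu$ is a well-defined probability measure), but this is immediate from $c_F$ being a self-inverse homeomorphism of $\Rnn$ preserving $\SOn$.
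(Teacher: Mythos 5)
Your proof is correct, and it takes a genuinely different route from the paper's. The paper argues by passing to the larger group: it introduces the Haar measure $\mu'$ on $\On$, observes that its restriction to $\SOn$ is again a Haar measure on $\SOn$, invokes uniqueness to get $\mu = \alpha\,\mu'|_{\SOn}$, and then concludes directly from the bi-invariance of $\mu'$ under $\On$-translations; the whole argument is three lines. You instead stay entirely inside $\SOn$: you reduce to the case $\det L = \det R = -1$, factor out a fixed reflection $F$ so that the claim collapses to showing conjugation $c_F\colon Q\mapsto FQF$ preserves $\mu$, and then prove this by exhibiting $\mu\circ c_F$ as a left-invariant probability measure and invoking uniqueness of the Haar measure on $\SOn$. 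Both arguments ultimately rest on Haar uniqueness, but on different groups. The paper's version is shorter but leans on a somewhat delicate point — that restricting the Haar measure of $\On$ to the index-two open subgroup $\SOn$ yields a Haar measure on $\SOn$ — which it states without comment. Your version avoids any appeal to the measure theory of $\On$ at the cost of an explicit decomposition $L = L_0 F$, $R = F R_0$ and a separate check that $c_F$ is a measure-preserving self-homeomorphism of $\SOn$. Both are sound; yours is more self-contained, the paper's is tighter.
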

\begin{proof}
$LSR$ is still a measurable subset of $\SOn$. Let $\mu '$ denote the Haar measure on $\On \supset \SOn$. The restriction of $\mu '$ to the measurables of $\SOn$ is still a Haar measure. By the uniqueness of the Haar measure up to multiplicative constant, there exists $\alpha > 0$ such that for all measurable subsets $T\subset \SOn$, we have $\mu(T) = \alpha\mu'(T)$. Then, $\mu(LSR) = \alpha \mu'(LSR) = \alpha\mu'(S) = \mu(S)$.
\end{proof}

From the general theory of Lebesgue integration, we get a notion of integrals over $\SOn$ associated to the measure $\mu$. Lemma~\ref{lem:biinvariance} then translates into the following property, with $f : \SOn\to\mathbb{R}$ an integrable function:
\begin{align}
	\forall L,R\in\On \textrm{ s.t.\ } \det(LR) = 1, \quad \int_\SOn f(LZR)\, \dmu(Z) & = \int_\SOn f(Z)\, \dmu(Z). 
	\label{eq:biinvarianceintegrals}
\end{align}
This property will play an important role in the sequel.

\SHORTEN{In theory, to compute integrals over $\SOn$, one could parameterize (a measure-1 subset of) the group using $d = \dim\SOn$ coordinates $x_1, \ldots, x_d$, and express $\dmu$ as $m(x_1, \ldots, x_d)\mathrm{d}x_1\mathrm{d}x_2\cdots\mathrm{d}x_d$. For $n=2$, the group $\SOn$ is identifiable with the unit circle. Parameterizing it with angles in radians, the measure is simply $\frac{1}{2\pi}\mathrm{d}x_1$. For $n=3$, the dimension of $\SOn$ is 3, and choosing an appropriate parametrization is already less obvious. Among others, Chiuso et al.\ use an explicit and somewhat involved expression for $\dmu$ in this case~\citep[eq.~(20)]{chiuso2008wide}. For larger values of $n$, the parameterization approach is doomed to become less practical.

Fortunately, in some cases, integration over $\SOn$ can be simplified.} Functions $f$ that are invariant under conjugation, meaning that for all $Z,Q\in\SOn$ we have $f(Z) = f(QZQ^{-1})$, are \emph{class functions}. When the integrand in $\int_\SOn f(Z)\, \dmu(Z)$ is a class function, we are in a position to use the Weyl integration formula specialized to $\SOn$~\citep[Exercise~18.1--2]{liegroupsbump}. All spectral functions (Assumption~\ref{assu:invariant}) are class functions (the converse is also true for $\mathrm{SO}(2k+1)$ but not for $\mathrm{SO}(2k)$). Weyl's formula comes in two flavors depending on the parity of $n$, and essentially reduces integrals on $\SOn$ to integrals over tori of dimension $\floor{n/2}$. In particular, for $n=2$ or $3$, Weyl's formula for class functions $f$ reads:
\begin{align}
	\int_{\mathrm{SO}(2)} f(Z)\, \dmu(Z) & = \frac{1}{2\pi} \int_{-\pi}^\pi f\begin{pmatrix}\cos\theta & -\sin\theta \\ \sin\theta & \cos\theta\end{pmatrix} \, \mathrm{d}\theta, \nonumber \\
	\int_{\mathrm{SO}(3)} f(Z)\, \dmu(Z) & = \frac{1}{2\pi} \int_{-\pi}^\pi f\begin{pmatrix}\cos\theta & -\sin\theta & 0 \\ \sin\theta & \cos\theta & 0 \\ 0 & 0 & 1\end{pmatrix} \, (1-\cos\theta) \, \mathrm{d}\theta.
	\label{eq:weyl23}
\end{align}
For $n=4$, Weyl's formula is a double integral:
\begin{multline}
	\int_{\mathrm{SO}(4)} f(Z)\, \dmu(Z) = \frac{1}{4(2\pi)^2} \int_{-\pi}^{\pi}\! \int_{-\pi}^{\pi} f\left(\diag\left(\begin{pmatrix}\cos\theta_1 & -\sin\theta_1 \\ \sin\theta_1 & \cos\theta_1 \end{pmatrix},\begin{pmatrix}\cos\theta_2 & -\sin\theta_2 \\ \sin\theta_2 & \cos\theta_2 \end{pmatrix}\right)\right) \\ \times |e^{i\theta_1}-e^{i\theta_2}|^2 \cdot |e^{i\theta_1}-e^{-i\theta_2}|^2 \, \mathrm{d} \theta_1 \mathrm{d} \theta_2.
	\label{eq:weyl4}
\end{multline}

Once $\SOn$ is equipped with a measure $\mu$ and accompanying integral notion, we can define distributions of random variables on $\SOn$ via probability density functions (pdf's). In general, a pdf on $\SOn$ is a nonnegative measurable function $f$ on $\SOn$ such that $\int_\SOn f(Z)\, \dmu(Z) = 1$. In this work, for convenience, we further assume pdf's are smooth and positive (Assumption~\ref{assu:smoothpositive}), as we will need to compute the derivatives of their logarithm.

\begin{example}[uniform]
The pdf associated to the \emph{uniform distribution} is $f(Z) \equiv 1$, since we normalized the Haar measure such that $\mu(\SOn) = 1$. We write $Z \sim \uniform(\SOn)$ to mean that $Z$ is a uniformly distributed random rotation. A number of algorithms exist to generate pseudo-random rotation matrices from the uniform distribution~\citep[\S{2.5.1}]{chikuse2003statistics}~\citep{diaconis1987subgroup}. Possibly one of the easiest methods to implement is the following $\mathcal{O}(n^3)$ algorithm, adapted from~\citep[Method~A, p.\,22]{diaconis1987subgroup} with implementation details as in~\citep{mezzadri2007generate} (for large $n$, see the former paper for algorithms with better complexity):
\begin{enumerate}
\item Generate $A\in\Rnn$, such that the entries $A_{ij}\sim\mathcal{N}(0,1)$ are i.i.d.\ normal random variables;
\item Obtain a QR decomposition of $A$: $QR=A$;
\item Set $Q := Q\diag(\sign(\diag(R)))$ (this ensures the mapping $A\mapsto Q$ is well-defined; see~\cite{mezzadri2007generate});
\item $Q$ is now uniform on $\On$. If $\det(Q)=-1$, permute columns 1 and 2 of $Q$. Return $Q$.
\end{enumerate}
\end{example}

\begin{example}[isotropic Langevin]\label{ex:langevindist}
The \emph{isotropic Langevin distribution} on $\SOn$ with mean $Q\in\SOn$ and concentration $\kappa\geq 0$ has pdf
\begin{align}
	f(Z) & = \frac{1}{c_n(\kappa)}\exp(\kappa\,\trace(Q\transpose Z)),
	\label{eq:langevin}
\end{align}
where $c_n(\kappa)$ is a normalization constant such that $f$ has unit mass. We write $Z \sim \langevin(Q, \kappa)$ to mean that $Z$ is a random variable with pdf~\eqref{eq:langevin}. For $\kappa=0$, $Z\sim\uniform(\SOn)$; in the limit $\kappa\to\infty$, $Z = Q$ w.p.\ 1. The isotropic Langevin distribution has a Gaussian-like shape. The Langevin pdf centered around $Q = I$ is a spectral function, i.e., it fulfills Assumption~\ref{assu:invariant}.

The larger the concentration parameter $\kappa$, the more the distribution is concentrated around the mean. By bi-invariance of $\mu$, $c_n(\kappa)$ is independent of $Q$:
\begin{align}
	c_n(\kappa) & = \int_\SOn \exp(\kappa\,\trace(Q\transpose Z))\, \dmu(Z) = \int_\SOn \exp(\kappa\,\trace(Z))\, \dmu(Z).
	\label{eq:cn}
\end{align}
Since the integrand is a class function, Weyl's integration formulas apply for any value of $n$. Using~\eqref{eq:weyl23} and~\eqref{eq:weyl4}, we work out explicit formulas for $c_n(\kappa)$, $n=2,3,4$:
\begin{align}
	c_2(\kappa) & = I_0(2\kappa), \label{eq:c2} \\
	c_3(\kappa) & = \exp(\kappa)(I_0(2\kappa)-I_1(2\kappa)), \label{eq:c3} \\
	c_4(\kappa) & = I_0(2\kappa)^2-2I_1(2\kappa)^2 + I_0(2\kappa)I_2(2\kappa), \label{eq:c4}
\end{align}
in terms of the modified Bessel functions of the first kind, $I_\nu$~\citep{wolframbessel}. See Appendix~\ref{apdx:coefficients} for details.

For $n=2$, the Langevin distribution is also known as the von Mises or Fisher distribution on the circle~\citep{mardia2000directional}. The Langevin distribution on $\SOn$ also exists in anisotropic form~\citep{chiuso2008wide}. Unfortunately, the associated pdf is no longer a spectral function even for $Q=I$, which is an instrumental property in the present work. Consequently, we do not treat anisotropic distributions. Chikuse gives an in-depth treatment of statistics on the Grassmann and Stiefel manifolds~\citep{chikuse2003statistics}, including a study of Langevin distributions on $\SOn$ as a special case.

Based on a uniform sampling algorithm on $\SOn$, it is easy to devise an acceptance-rejection scheme to sample from the Langevin distribution~\citep[\S{2.5.2}]{chikuse2003statistics}. Not surprisingly, for large values of $\kappa$, this tends to be very inefficient. Chiuso et al.\ report using a Metropolis-Hastings--type algorithm instead~\citep[\S{7}]{chiuso2008wide}. Hoff describes an efficient Gibbs sampling method to sample from a more general family of distributions on the Stiefel manifold, which can be modified to work on $\SOn$~\citep{hoff2009simulation}.

\remove{For $\langevin(I,\kappa)$, the following formula will prove useful:
\protect\begin{align}
	\log f(Z) & = \kappa\,\trace(Z) - \log c_n(\kappa), \\
	\nabla\log f(Z) & = \kappa I_n, \\
	\grad\log f(Z) & = \PP_Z(\nabla\log f(Z)) = -\kappa\, Z\skeww{Z}, \label{eq:gradlogflangevin}
\end{align}
where $\nabla$ denotes the usual gradient in the embedding space $\Rnn$, $\grad$ denotes the gradient on $\SOn$ and $\PP_Z$~\eqref{eq:projSOn} is the orthogonal projector from the embedding space onto the tangent space  $\T_Z\SOn$, with $\skeww{Z} \triangleq (Z-Z\transpose)/2$ the skew-symmetric part of $Z$---see Section~{\ref{sec:geometry}}.}
\end{example}

The set of pdf's is closed under convex combinations, as is the set of functions satisfying our assumptions~\ref{assu:smoothpositive} and~\ref{assu:invariant}. We could therefore consider mixtures of Langevin distributions around $I$ with various concentrations. In the next example, we combine the Langevin distribution and the uniform distribution. To the best of our knowledge, this is the first description of a heavy-tail--type distribution on $\SOn$. Such a distribution may prove useful for any application involving outliers in rotation estimation.
\begin{example}[isotropic Langevin with outliers] \label{ex:langevinoutliersdist}
We define the \emph{isotropic Langevin distribution with outliers} on $\SOn$ with mean $Q\in\SOn$, concentration $\kappa>0$ and outlier probability $1-p\in [0, 1]$ via the pdf
\begin{align}
	f(Z) & = \frac{p}{c_n(\kappa)}\exp(\kappa\,\trace(Q\transpose Z)) + (1-p).
	\label{eq:langevinoutliers}
\end{align}
For $Z$ distributed as such, we write $Z\sim\langout(Q, \kappa, p)$. For $p=1$, this is the isotropic Langevin distribution. For $p=0$, this is the uniform distribution. Notice that $f$ is a spectral function for $Q=I$.

A random rotation sampled from $\langout(Q, \kappa, p)$ is, with probability $p$, sampled from $\langevin(Q, \kappa)$, and with probability $1-p$, sampled uniformly at random. Measurements of $Q$ distributed in such a way are outliers with probability $1-p$, i.e., bear no information about $Q$ w.p.\ $1-p$. For $0<p<1$, the $\langout$ is a kind of heavy-tailed distribution on $\SOn$.

\remove{For $\langout(I,\kappa,p)$, the following will prove useful:
\protect\begin{align}
	\grad\log f(Z) & = -\frac{p\kappa\,\exp(\kappa\,\trace\, Z)}{c_n(\kappa)f(Z)} \, Z \skeww{Z},
	\label{eq:gradlogflangout}
\end{align}
with $\skeww{Z} = (Z-Z\transpose)/2$ denoting the skew-symmetric part of $Z$.}
\end{example}

To conclude this section, we remark more broadly that all isotropic distributions around the identity matrix have a spectral pdf. Indeed, let $f : \SOn \to \mathbb{R}$ be isotropic w.r.t.\ the geodesic distance on $\SOn$, $\dist(R_1, R_2) = \frobnorm{\log(R_1\transpose R_2^{})}$~\eqref{eq:distSOn}. That is, there is a function $\tilde f$ such that $f(Z) = \tilde f(\dist(I, Z)) = \tilde f(\frobnorm{\log Z})$. It is then obvious that $f(QZQ\transpose) = f(Z)$ for all $Q\in\On$ since $\log(QZQ\transpose) = Q\log(Z)Q\transpose$. The same holds for the embedded distance $\dist(R_1, R_2) = \frobnorm{R_1-R_2}$. This shows that the assumptions proposed in Section~\ref{sec:synchro} include many interesting distributions.

\SHORTEN{The Langevin-based distributions here described are isotropic w.r.t.\ the embedded distance. For $n=3$, the notions of isotropy w.r.t.\ both distances coincide. In fact, for $n=3$, the set of isotropic pdf's and the set of spectral pdf's are the same, since all of these functions can only depend on the amplitude $0\leq \theta\leq \pi$ of their input rotation. For $n=2$, owing to the commutativity of rotations in the plane, all pdf's are class functions, and all pdf's verifying $f(Z) = f(Z\transpose)$ are spectral functions.}

Similarly we establish that all spectral pdf's have zero bias around the identity matrix $I$. The bias is the tangent vector (skew-symmetric matrix) $\Omega = \expectt{\Log_I(Z)}$, with $Z \sim f$, $f$ spectral. Since $\Log_I(Z) = \log(Z)$~\eqref{eq:ExpLogSOn}, we find, with a change of variable $Z := QZQ\transpose$ going from the first to the second integral, that for all $Q\in\On$:
\begin{align}
	\Omega & = \int_{\SOn} \log(Z)\,f(Z)\dmu(Z) = \int_{\SOn} \log(QZQ\transpose)\,f(Z)\dmu(Z) = Q\Omega Q\transpose.
\end{align}
\remove{Indeed, $\log(QZQ\transpose) = Q\log(Z)Q\transpose$. }Since skew-symmetric matrices are normal matrices and since $\Omega$ and $\Omega\transpose = -\Omega$ have the same eigenvalues, we may choose $Q\in\On$ such that $Q\Omega Q\transpose = -\Omega$.
Therefore, $\Omega = -\Omega = 0$. As a consequence, it is only possible to treat unbiased measurements under the assumptions we make in this paper.

\section{The Fisher information matrix for synchronization}
\label{sec:fimsynch}

As described in Section~\ref{sec:synchro}, the relative rotation measurements $H_{ij}^{} = Z_{ij}^{}R_i^{}R_j\transpose$~\eqref{eq:Hij} reveal information about the true (but unknown) rotations $R_1, \ldots, R_N$. The Fisher information matrix (FIM) we compute here encodes how much information these measurements contain on average. In other words, the FIM is an assessment of the quality of the measurements we have at our disposal for the purpose of estimating the sought parameters. The FIM will be instrumental in deriving Cramér-Rao bounds in the next section.

The FIM is a standard object of study for estimation problems on Euclidean spaces. In the setting of synchronization of rotations, the parameter space is a manifold and we thus need a more general definition of the FIM. We quote, mutatis mutandis, the definition of FIM as stated in~\citep{crbsubquot} following~\citep{smith2005covariance}:
\begin{definition}[FIM]\label{def:FIM}
Let $\p$ be the parameter space of an estimation problem and let $\theta\in\p$ be the (unknown) parameter. Let $f(y;\theta)$ be the pdf of the measurement $y$ conditioned by $\theta$. The log-likelihood function $L : \p \to \mathbb{R}$ is $L(\theta) = \log f(y; \theta)$. Let $e = \{e_1, \ldots, e_d\}$ be an orthonormal basis of the tangent space $\T_\theta\p$ w.r.t.\ the Riemannian metric $\inner{\cdot}{\cdot}_\theta$. The Fisher information matrix of the estimation problem on $\p$ w.r.t.\ the basis $e$ is defined by:
\begin{align}
	F_{ij} & = \expectt{\inner{\grad\,L(\theta)}{e_i}_\theta \cdot \inner{\grad\,L(\theta)}{e_j}_\theta}.
	\label{eq:FIM}
\end{align}
Expectations are taken w.r.t.\ the measurement $y$.
\end{definition}

\add{We will now compute the FIM for the synchronization problem. Much of the technicalities involved originate in the non-commutativity of rotations. It is helpful and informative to first go through this section with the special case $\mathrm{SO}(2)$ in mind. Doing so, rotations commute and the space of rotations has dimension $d = 1$, so that one can reach the final result more directly.}

Definition~\ref{def:FIM} stresses the role of the gradient of the log-likelihood function $L$~\eqref{eq:L}, $\grad\,L(\hat \bfR)$, a tangent vector in $\T_{\hat \bfR}\p$. The $i^\textrm{th}$ component of this gradient, that is, the gradient of the mapping $\hat R_i \mapsto L(\hat \bfR)$ with $\hat R_{j\neq i}$ fixed, is a vector field on $\SOn$ which can be written as:
\begin{align}
	\grad_i\,L(\hat \bfR) 
							 & = \sum_{j\in V_i} \left[ \grad\log f_{ij}(H_{ij}^{}\hat R_j^{}\hat R_i\transpose) \right]\transpose H_{ij} \hat R_j.
\label{eq:gradiLhatR}
\end{align}
Evaluated at the true rotations $\bfR$, this component becomes
\begin{align}
	\grad_i\,L(\bfR) 	& = \sum_{j\in V_i} \left[ \grad\log f_{ij}(Z_{ij}) \right]\transpose Z_{ij} R_i.
\end{align}
The vector field $\grad\log f_{ij}$ on $\SOn$ may be factored into:
\begin{align}
	\grad\log f_{ij}(Z) & = Z G_{ij}\transpose\,(Z),
	\label{eq:defG}
\end{align}
where $G_{ij} : \SOn \mapsto \son$ is a mapping that will play an important role in the sequel. In particular, the $i^\textrm{th}$ gradient component now takes the short form:
\begin{align}
	\grad_i\,L(\bfR) 	& = \sum_{j\in V_i} G_{ij}(Z_{ij}) R_i.
\end{align}

Let us consider a canonical orthonormal basis of $\son$: $(E_1, \ldots, E_d)$, with $d = n(n-1)/2$. For $n = 3$, we pick this one:
\begin{align}
	E_1 & = \frac{1}{\sqrt{2}}\begin{pmatrix} 0 & 1 & 0 \\ -1 & 0 & 0 \\ 0 & 0 & 0 \end{pmatrix}, & E_2 & = \frac{1}{\sqrt{2}}\begin{pmatrix} 0 & 0 & -1 \\ 0 & 0 & 0 \\ 1 & 0 & 0 \end{pmatrix}, & E_3 & = \frac{1}{\sqrt{2}}\begin{pmatrix} 0 & 0 & 0 \\ 0 & 0 & 1 \\ 0 & -1 & 0 \end{pmatrix}.
\label{eq:basisSOt}
\end{align}
An obvious generalization yields similar bases for other values of $n$. We can transport this canonical basis into an orthonormal basis for the tangent space $\T_{R_i}\SOn$ as $(R_iE_1, \ldots, R_iE_d)$.
\remove{To compute the Fisher information matrix, we need an orthonormal basis for the tangent space $\T_{\bfR}\p$ to the manifold $\p = \SOn \times \cdots \times \SOn$.} Let us also fix an orthonormal basis for the tangent space at $\bfR = (R_1, \ldots, R_N)$ of $\p$, as
\begin{multline}
	(\bfxi_{ik})_{i=1\ldots N, k=1\ldots d}, \textrm{ with } \bfxi_{ik} = (0, \ldots, 0, R_iE_k, 0, \ldots, 0), \\ \textrm{ a zero vector except for the $i^\textrm{th}$ component equal to } R_iE_k.
	\label{eq:basisTRP}
\end{multline}
The FIM w.r.t.\ this basis is composed of $N\times N$ blocks of size $d\times d$. Let us index the $(k, \ell)$ entry inside the $(i, j$) block as $F_{ij,k\ell}$. Accordingly, the matrix $F$ at $\bfR$ is defined by (see Definition~\ref{def:FIM}):
\begin{align}
	F_{ij,k\ell} & = \expectt{\innersmall{\grad\,L(\bfR)}{\bfxi_{ik}} \cdot \innersmall{\grad\,L(\bfR)}{\bfxi_{j\ell}}} \nonumber\\[2mm]
						 & = \expectt{\innersmall{\grad_i\,L(\bfR)}{R_iE_k} \cdot \innersmall{\grad_j\,L(\bfR)}{R_jE_\ell}} \nonumber\\[2mm]
						 & = \sum_{r\in V_i} \sum_{s\in V_j} \expectt{\innerbig{G_{ir}(Z_{ir})}{R_i^{} E_k^{} R_i\transpose} \cdot \innerbig{G_{js}(Z_{js})}{R_j^{} E_\ell^{} R_j\transpose}}. \label{eq:Fijkldifficult}
\end{align}

We prove that, in expectation, the mappings $G_{ij}$~\eqref{eq:defG} are zero. This fact is directly related to the standard result from estimation theory stating that the average \emph{score} $V(\theta) = \expectt{\grad\log f(y;\theta)}$ for a given parameterized probability density function $f$ is zero.
\begin{lemma} \label{lem:avgscore}
Given a smooth probability density function $f : \SOn \to \mathbb{R}^+$ and the mapping $G : \SOn \to \son$ such that $\grad\log f(Z) = ZG(Z)$, we have that $\expectt{G(Z)} = 0$, where expectation is taken w.r.t.\ $Z$, distributed according to $f$.
\end{lemma}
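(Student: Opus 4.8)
The plan is to reduce the claim $\expectt{G(Z)}=0$ to the classical "expected score vanishes" identity $\int_\SOn \grad\log f(Z)\,f(Z)\,\dmu(Z) = 0$, and then to strip off the leftward multiplication by $Z$. First I would recall that since $f$ is a smooth positive pdf with $\int_\SOn f\,\dmu = 1$, differentiating this identity (in the Riemannian sense, using $\grad\log f = \frac{1}{f}\grad f$) gives $\expectt{\grad\log f(Z)} = \int_\SOn \grad f(Z)\,\dmu(Z) = 0$ as a tangent-vector-valued integral; one should be slightly careful here because $\grad f(Z)$ lives in the varying tangent space $\T_Z\SOn$, so the cleanest way is to pair against a fixed ambient direction: for any constant $B\in\Rnn$, $\int_\SOn \inner{\grad f(Z)}{\PP_Z(B)}\,\dmu(Z) = \int_\SOn \D f(Z)[\PP_Z(B)]\,\dmu(Z)$, and the right-hand side integrates to zero by the divergence theorem on the closed manifold $\SOn$ (the vector field $Z\mapsto \PP_Z(B)$ is the Killing-type field generating a one-parameter family of isometries, so this is the derivative of $\int_\SOn f = 1$ under that flow). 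Alternatively, and perhaps more in the spirit of this paper, one invokes the bi-invariance of $\mu$~\eqref{eq:biinvarianceintegrals} directly: for fixed $\Omega\in\son$ and $t\in\mathbb{R}$, the change of variable $Z := \exp(t\Omega)Z$ leaves $\int_\SOn f(Z)\,\dmu(Z)=1$ unchanged, so differentiating at $t=0$ yields $\int_\SOn \trace\!\big(\Omega\transpose\, Z G(Z)\transpose\big)\,\dmu(Z) = 0$.

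Next I would use the factorization $\grad\log f(Z) = Z G(Z)\transpose$ from~\eqref{eq:defG} (note $G$ maps into $\son$, so $G(Z)\transpose = -G(Z)$, but I will keep the transpose for bookkeeping). From $\expectt{\grad\log f(Z)} = 0$ written as $\int_\SOn Z G(Z)\transpose f(Z)\,\dmu(Z) = 0$, I want to conclude $\int_\SOn G(Z)\transpose f(Z)\,\dmu(Z) = 0$. This is \emph{not} immediate because the factor $Z$ depends on the integration variable, so one cannot simply "cancel" it. The fix is to pair with an arbitrary fixed $\Omega\in\son$ and exploit invariance more carefully, or: observe that the vanishing of $\expectt{\grad\log f(Z)}$ as an element of $\T_I\SOn$ (after identifying each $\T_Z\SOn$ with $\son$ via left translation, i.e.\ writing $\grad\log f(Z) = Z\cdot(G(Z)\transpose)$ and reading off the $\son$-component $G(Z)\transpose$) is exactly the statement $\expectt{G(Z)\transpose}=\expectt{G(Z)} = 0$. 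Concretely: the identity obtained above from bi-invariance reads $\trace\!\big(\Omega\transpose \int_\SOn G(Z)\transpose f(Z)\,\dmu(Z)\big) = 0$ for every $\Omega\in\son$, and since $\int_\SOn G(Z)\transpose f(Z)\,\dmu(Z)$ is itself skew-symmetric (an integral of skew-symmetric matrices), taking $\Omega$ to range over a basis of $\son$ forces this integral — which is precisely $\expectt{G(Z)}\transpose$ — to be zero.

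The only genuine subtlety, and the step I would write most carefully, is the interchange of derivative and integral that produces the score identity: this needs Assumption~\ref{assu:smoothpositive} ($f$ smooth and positive, hence $\log f$ smooth) together with compactness of $\SOn$, which makes $f$ and all its derivatives bounded, so dominated convergence applies and differentiation under the integral sign is legitimate. Everything else is a short computation. I would present the proof in the order: (i) record the normalization $\int_\SOn f\,\dmu=1$; (ii) apply the bi-invariant change of variable $Z\mapsto\exp(t\Omega)Z$ and differentiate at $t=0$, justifying the interchange via smoothness and compactness; (iii) identify the resulting integral as $\trace(\Omega\transpose\,\expectt{G(Z)}\transpose)$ using~\eqref{eq:defG}; (iv) conclude $\expectt{G(Z)}=0$ by ranging $\Omega$ over a basis of $\son$ and using that $\expectt{G(Z)}\in\son$.
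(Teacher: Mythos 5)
Your overall strategy---differentiate the normalization constraint $\int_\SOn f\,\dmu = 1$ along a one-parameter group of translations (justifying the interchange of derivative and integral via smoothness and compactness), then read off $\expectt{G(Z)}$ by pairing with an arbitrary $\Omega\in\son$---is essentially the same as the paper's, which phrases it as $h(Q) = \int_\SOn f(ZQ)\,\dmu(Z) \equiv 1$ and $\grad_Q h = 0$. However, you put the translation on the wrong side, and your stated intermediate identity does not follow from the change of variable you actually perform. The factorization $\grad\log f(Z) = Z G(Z)$ identifies $\T_Z\SOn$ with $\son$ by \emph{left} multiplication by $Z$; the compatible change of variable is therefore \emph{right} translation $Z := Z\exp(t\Omega)$, which is what the paper effectively uses. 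Your choice $Z := \exp(t\Omega)Z$ (left translation) gives, upon differentiating $\int_\SOn f(\exp(t\Omega)Z)\,\dmu(Z) \equiv 1$ at $t=0$,
\begin{align*}
	0 = \int_\SOn \inner{\grad f(Z)}{\Omega Z}\,\dmu(Z) = \int_\SOn f(Z)\,\trace\!\left(G(Z)\transpose\, Z\transpose \Omega\, Z\right)\dmu(Z),
\end{align*}
which, upon ranging $\Omega$ over a basis of $\son$, proves $\expectt{Z G(Z) Z\transpose} = 0$ rather than $\expectt{G(Z)} = 0$; your written formula $\int_\SOn \trace(\Omega\transpose Z G(Z)\transpose)\,\dmu(Z) = 0$ matches neither quantity. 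Replacing left by right translation repairs the argument: differentiating $\int_\SOn f(Z\exp(t\Omega))\,\dmu(Z) \equiv 1$ at $t=0$ yields
\begin{align*}
	0 = \int_\SOn \inner{\grad f(Z)}{Z\Omega}\,\dmu(Z) = \int_\SOn f(Z)\,\trace\!\left(G(Z)\transpose\Omega\right)\dmu(Z) = \inner{\expectt{G(Z)}}{\Omega},
\end{align*}
because the factor $Z\transpose Z = I$ cancels, and the conclusion then follows exactly as you propose by noting $\expectt{G(Z)}\in\son$. The distinction is material at the lemma's level of generality: only under Assumption~\ref{assu:invariant} (not assumed here) does $G(QZQ\transpose) = Q G(Z) Q\transpose$, hence $Z G(Z) Z\transpose = G(Z)$, make the two identities coincide. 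Your alternative divergence-theorem sketch has the same issue in disguise: the field $Z\mapsto\PP_Z(B)$ is generally not Killing, whereas the left-invariant field $Z\mapsto Z\Omega$ is, and it is the latter you need to pair $\grad f$ against.
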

\begin{proof}
Define $h(Q) = \int_{\SOn} f(ZQ) \, \mathrm{d}\mu(Z)$ for $Q\in\SOn$. Since $f$ is a probability density function, bi-invariance of $\mu$~\eqref{eq:biinvarianceintegrals} yields $h(Q) \equiv 1$. Taking gradients with respect to the parameter $Q$, we get:
$$
	0 = \grad\,h(Q) = \int_{\SOn} \grad_Q f(ZQ)\ \dmu(Z) = \int_{\SOn} Z\transpose \grad f(ZQ)\ \dmu(Z).
$$
With a change of variable $Z := ZQ$, by bi-invariance of $\mu$, we further obtain:
$$
	\int_{\SOn} Z\transpose \grad f(Z)\ \dmu(Z) = 0.
$$
Using this last result and the fact that $\grad\log f(Z) = \frac{1}{f(Z)}\grad\,f(Z)$, we conclude:
\begin{align*}
	\expectt{G(Z)} = \int_{\SOn} Z\transpose \grad\log f(Z)\ f(Z) \dmu(Z) = \int_{\SOn} Z\transpose \grad f(Z)\ \dmu(Z) = 0. \mbox{\qedhere}
\end{align*}
\end{proof}

We now invoke Assumption~\ref{assu:independence} (independence). Independence of $Z_{ij}$ and $Z_{pq}$ for two distinct edges $(i,j)$ and $(p,q)$ implies that, for any two functions $\phi_1, \phi_2 : \SOn \to \mathbb{R}$, it holds that $$\expectt{\phi_1(Z_{ij}) \phi_2(Z_{pq})} = \expectt{\phi_1(Z_{ij})}\expectt{\phi_2(Z_{pq})},$$ provided all involved expectations exist. Using both this and Lemma~\ref{lem:avgscore}, most terms in \eqref{eq:Fijkldifficult} vanish and we obtain a simplified expression for the matrix $F$:
\begin{align}
	F_{ij,k\ell} & =
\begin{cases}
	\displaystyle{\sum_{r\in V_i} \expectt{\inner{G_{ir}(Z_{ir})}{R_i^{} E_k^{} R_i\transpose} \cdot \inner{G_{ir}(Z_{ir})}{R_i^{} E_\ell^{} R_i\transpose}}}, & \textrm{ if } i=j, \\[5mm]
	\displaystyle{\expectt{\inner{G_{ij}(Z_{ij})}{R_i^{} E_k^{} R_i\transpose} \cdot \inner{G_{ji}(Z_{ji})}{R_j^{} E_\ell^{} R_j\transpose}}}, & \textrm{ if } i\neq j \textrm{ and } (i,j) \in \mathcal{E}, \\[5mm]
	0, & \textrm{ if } i\neq j \textrm{ and } (i,j) \notin \mathcal{E}.
\end{cases}
\label{eq:Fijklsimpler}
\end{align}
We further manipulate the second case, which involves both $G_{ij}$ and $G_{ji}$, by noting that those are deterministically linked. Indeed, by symmetry of the measurements ($H_{ij}^{} = H_{ji}\transpose$), we have that (i) $Z_{ji} = R_j^{}R_i\transpose Z_{ij}\transpose R_i^{} R_j\transpose$ and (ii) $f_{ij}(Z_{ij}) = f_{ji}(Z_{ji})$. Invoking Assumption~\ref{assu:invariant}, since $Z_{ij}$ and $Z_{ji}$ have the same eigenvalues, it follows that $f_{ij}(Z) = f_{ji}(Z)$ for all $Z\in\SOn$. As a by-product, it also holds that $G_{ij}(Z) = G_{ji}(Z)$ for all $Z\in\SOn$. Still under Assumption~\ref{assu:invariant}, we show in Appendix~\ref{apdx:GijGji} that
\begin{align}
	\forall Q\in\On, \quad G_{ij}(QZQ\transpose) = Q\,G_{ij}(Z)\,Q\transpose, \quad \textrm{ and } \quad G_{ij}(Z\transpose) = -G_{ij}(Z).
	\label{eq:Gclass}
\end{align}
Combining these observations, we obtain:
\begin{align}
	G_{ji}(Z_{ji}) = G_{ij}(Z_{ji}) = G_{ij}(R_j^{}R_i\transpose \, Z_{ij}\transpose \, R_i^{} R_j\transpose) = -R_j^{}R_i\transpose \, G_{ij}( Z_{ij} ) \, R_i^{} R_j\transpose.
\end{align}
The minus sign, which plays an important role in the structure of the FIM, comes about via the skew-symmetry of $G_{ij}$. The following identity thus holds:
\begin{align}
	\inner{G_{ji}(Z_{ji})}{R_j^{} E_\ell^{} R_j\transpose} & = -\inner{G_{ij}^{}(Z_{ij}^{})}{R_i^{} E_\ell^{} R_i\transpose}.
\label{eq:GijGji2}
\end{align}
This can advantageously be plugged into~\eqref{eq:Fijklsimpler}.

We set out to describe the expectations appearing in~\eqref{eq:Fijklsimpler}, which will take us through a couple of lemmas. Let us, for a certain pair $(i,j)\in\mathcal{E}$, introduce the functions $h_k : \SOn \to \mathbb{R}$, $k=1\ldots d$:
\begin{align}
	h_k(Z) \triangleq \inner{G_{ij}(Z)}{R_i^{}E_k^{}R_i\transpose},
	\label{eq:hk}
\end{align}
where we chose to not overload the notation $h_k$ with an explicit reference to the pair $(i,j)$, as this will always be clear from the context. We may rewrite the FIM in terms of the functions $h_k$, starting from~\eqref{eq:Fijklsimpler} and incorporating~\eqref{eq:GijGji2}:
\begin{align}
	F_{ij,k\ell} & =
\begin{cases}
	\displaystyle{\sum_{r\in V_i} \expectt{h_k(Z_{ir}) \cdot h_\ell(Z_{ir})}}, & \textrm{ if } i=j, \\[5mm]
	\displaystyle{-\expectt{h_k(Z_{ij}) \cdot h_\ell(Z_{ij})}}, & \textrm{ if } i\neq j \textrm{ and } (i,j) \in \mathcal{E}, \\[5mm]
	0, & \textrm{ if } i\neq j \textrm{ and } (i,j) \notin \mathcal{E}.
\end{cases}
\label{eq:Fijklhk}
\end{align}
Another consequence of Assumption~\ref{assu:invariant} is that the functions $h_k(Z)$ and $h_\ell(Z)$ are uncorrelated for $k\neq \ell$, where $Z$ is distributed according to the density $f_{ij}$. As a consequence, $F_{ij,k\ell} = 0$ for $k\neq \ell$, i.e., the $d\times d$ blocks of $F$ are diagonal. We establish this fact in Lemma~\ref{lem:decorrelation}, right after a technical lemma.
\begin{lemma} \label{lem:Pkl}
Let $E, E' \in \son$ such that $E_{ij} = -E_{ji} = 1$ and $E'_{k\ell} = -E'_{\ell k} = 1$ (all other entries are zero), with $\inner{E}{E'} = 0$, i.e., $\{i,j\} \neq \{k,\ell\}$. Then, there exists $P\in\On$ a signed permutation such that $P\transpose E P = E'$ and $P\transpose E' P = -E$.
\end{lemma}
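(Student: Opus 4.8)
The plan is to exploit the fact that conjugation by a signed permutation matrix acts transparently on elementary skew-symmetric matrices. Write $E_{ab}$ for the $n\times n$ matrix with a single $1$ in entry $(a,b)$ and zeros elsewhere, so that $E = E_{ij}-E_{ji}$ and $E' = E_{k\ell}-E_{\ell k}$. Any signed permutation $P$ satisfies $P\transpose e_a = \delta_a e_{\tau(a)}$ for some permutation $\tau$ of $\{1,\ldots,n\}$ and signs $\delta_a\in\{-1,+1\}$, and then $P\transpose E_{ab}P = (P\transpose e_a)(P\transpose e_b)\transpose = \delta_a\delta_b\, E_{\tau(a)\tau(b)}$. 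Hence $P\transpose E P = \delta_i\delta_j(E_{\tau(i)\tau(j)}-E_{\tau(j)\tau(i)})$ and likewise for $E'$, so the problem is purely one of choosing $\tau$ and the signs. Since $\{i,j\}\neq\{k,\ell\}$, these two pairs are either disjoint or share exactly one index, and I treat the two cases separately.

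If $\{i,j\}\cap\{k,\ell\}=\emptyset$, the indices $i,j,k,\ell$ are distinct. Take all $\delta_a=1$---so $P$ is an ordinary permutation matrix, in particular a signed permutation in $\On$---and let $\tau$ be the $4$-cycle $i\mapsto k\mapsto j\mapsto\ell\mapsto i$, fixing all other indices. Then $\tau$ sends the ordered pair $(i,j)$ to $(k,\ell)$ and $(k,\ell)$ to $(j,i)$, which gives precisely $P\transpose E P = E_{k\ell}-E_{\ell k}=E'$ and $P\transpose E'P = E_{ji}-E_{ij}=-E$.

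If the two pairs share one index, only three indices occur; $P$ may be taken to act as the identity on the remaining $n-3$ coordinates, and since a signed permutation solving the problem for $\Pi\transpose E\Pi$ and $\Pi\transpose E'\Pi$ yields one for $E$ and $E'$ after conjugating by the permutation matrix $\Pi$, I may relabel these three indices to $1,2,3$. What remains is a finite check over the few configurations distinguished by which index is shared and by the orientations of $E$ and $E'$. In a representative one, $E=E_{12}-E_{21}$ and $E'=E_{13}-E_{31}$, the choice $\tau=(2\ 3)$ with $\delta_1=\delta_2=1$ and $\delta_3=-1$ gives $P\transpose E P = E_{13}-E_{31}=E'$ and $P\transpose E'P = -(E_{12}-E_{21})=-E$; the other configurations are handled identically. (Equivalently, identifying a $3\times3$ skew-symmetric matrix with its axial vector turns the requirement into finding an orthogonal $Q$ with $Qa=b$ and $Qb=-a$ for two distinct signed standard basis vectors $a,b$, which the $90^\circ$ rotation of the coordinate plane they span---itself a signed permutation---achieves.)

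All computations involved are elementary. The only point needing care is the sign and orientation bookkeeping in the shared-index case, where plain permutation matrices no longer suffice and the signs $\delta_a$ are genuinely needed; there is no deeper obstacle.
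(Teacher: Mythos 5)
Your proof is correct and takes essentially the same constructive route as the paper: case on whether $\{i,j\}$ and $\{k,\ell\}$ are disjoint or share one index, then exhibit a suitable signed permutation directly. The cosmetic differences are that in the disjoint case you achieve the needed sign change with a single unsigned $4$-cycle (using that $\tau$ sends $(i,j)\mapsto(k,\ell)\mapsto(j,i)$), whereas the paper factors $P$ as a double transposition composed with a diagonal reflection, and in the shared-index case the paper spells out the subcases $i=k$, $i=\ell$, $j=k$, $j=\ell$ explicitly while you verify one representative and assert the rest follow by the same bookkeeping.
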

\begin{proof}
See Appendix~\ref{apdx:Pkl}.\qedhere
\end{proof}
\begin{lemma} \label{lem:decorrelation}
Let $Z\in\SOn$ be a random variable distributed according to $f_{ij}$. The random variables $h_k(Z)$ and $h_\ell(Z)$, $k\neq \ell$, as defined in~\eqref{eq:hk} have zero mean and are uncorrelated, i.e., $\expectt{h_k(Z)} = \expectt{h_\ell(Z)} = 0$ and $\expectt{h_k(Z)\cdot h_\ell(Z)} = 0$. Furthermore, it holds that $\expectt{h_k^2(Z)} = \expectt{h_\ell^2(Z)}$.
\end{lemma}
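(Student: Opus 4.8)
The plan is to exploit the orthogonal conjugation invariance of $f_{ij}$ (Assumption~\ref{assu:invariant}) together with the equivariance of $G_{ij}$ recorded in~\eqref{eq:Gclass}. First I would rewrite each $h_k$ in a conjugation-friendly form: since $\inner{A}{B} = \trace(A\transpose B)$ and the inner product is invariant under simultaneous conjugation of both arguments, one has $h_k(Z) = \inner{G_{ij}(Z)}{R_iE_kR_i\transpose} = \inner{R_i\transpose G_{ij}(Z)R_i}{E_k}$, and using $G_{ij}(R_i\transpose Z R_i) = R_i\transpose G_{ij}(Z) R_i$ from~\eqref{eq:Gclass} this becomes $\inner{G_{ij}(R_i\transpose Z R_i)}{E_k}$. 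Because $f_{ij}$ is conjugation-invariant, the change of variable $Z\mapsto R_i Z R_i\transpose$ preserves the distribution; so it suffices to prove the claims for the functions $\tilde h_k(Z) \triangleq \inner{G_{ij}(Z)}{E_k}$ with $Z\sim f_{ij}$, i.e.\ I may assume $R_i = I$ throughout.

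Next, for the zero-mean statement I would invoke Lemma~\ref{lem:avgscore}: $\expectt{G_{ij}(Z)} = 0$, hence $\expectt{\tilde h_k(Z)} = \inner{\expectt{G_{ij}(Z)}}{E_k} = 0$, and similarly for $\ell$. For decorrelation and equality of second moments, fix $k\neq\ell$ and apply Lemma~\ref{lem:Pkl} to the two canonical generators $E, E'$ (suitably scaled copies of $E_k, E_\ell$) to obtain a signed permutation $P\in\On$ with $P\transpose E P = E'$ and $P\transpose E' P = -E$. Then I would perform the change of variable $Z\mapsto PZP\transpose$ in the relevant expectations; since $f_{ij}$ is invariant under this and $G_{ij}(PZP\transpose) = P G_{ij}(Z) P\transpose$, we get, using cyclicity of the trace, $\expectt{\inner{G_{ij}(Z)}{E_k}\inner{G_{ij}(Z)}{E_\ell}} = \expectt{\inner{P G_{ij}(Z)P\transpose}{E_k}\inner{P G_{ij}(Z)P\transpose}{E_\ell}} = \expectt{\inner{G_{ij}(Z)}{P\transpose E_k P}\inner{G_{ij}(Z)}{P\transpose E_\ell P}}$. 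With the identities from Lemma~\ref{lem:Pkl} this equals $\expectt{\inner{G_{ij}(Z)}{E_\ell}\cdot\bigl(-\inner{G_{ij}(Z)}{E_k}\bigr)} = -\expectt{\tilde h_k(Z)\tilde h_\ell(Z)}$, forcing $\expectt{\tilde h_k(Z)\tilde h_\ell(Z)} = 0$. The same computation with both slots equal to $E_k$ gives $\expectt{\tilde h_k^2(Z)} = \expectt{\tilde h_\ell^2(Z)}$ (the sign squares away), establishing the equal-variance claim. One minor bookkeeping point: the basis elements $E_k$ of~\eqref{eq:basisSOt} carry a $1/\sqrt2$ normalization whereas the $E, E'$ of Lemma~\ref{lem:Pkl} have entries $\pm 1$, so I would just note the scaling cancels on both sides of each identity.

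The main obstacle I anticipate is purely one of careful transport: making sure the conjugation by $R_i$ (to reduce to $R_i=I$) and the subsequent conjugation by the signed permutation $P$ interact correctly with both the invariance of the measure/pdf and the equivariance relation $G_{ij}(QZQ\transpose) = QG_{ij}(Z)Q\transpose$, and that the relevant $Q$'s lie in $\On$ (not necessarily $\SOn$) so that the extended bi-invariance of Lemma~\ref{lem:biinvariance} and Assumption~\ref{assu:invariant}, rather than plain bi-invariance, are what is being used. None of the individual steps is deep, but the argument lives or dies on keeping the three layers of symmetry (conjugation invariance of $f_{ij}$, equivariance of $G_{ij}$, and the combinatorial action of $P$ on the basis) aligned.
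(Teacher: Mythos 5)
Your proposal is correct and follows essentially the same strategy as the paper's proof: zero mean from Lemma~\ref{lem:avgscore}, then conjugation by a signed permutation from Lemma~\ref{lem:Pkl} to swap $h_k$ and $h_\ell$ up to sign, combined with the spectral property of $f_{ij}$, the equivariance of $G_{ij}$~\eqref{eq:Gclass}, and the extended bi-invariance of the Haar measure. The only cosmetic difference is that you first reduce to $R_i=I$ by absorbing $R_i$ into a preliminary change of variable, whereas the paper keeps $R_i$ and conjugates directly by $R_i^{}P_{k\ell}^{}R_i\transpose$; these are equivalent.
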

\begin{proof}
The first part follows directly from Lemma~\ref{lem:avgscore}. We show the second part. Consider a signed permutation matrix $P_{k\ell}\in\On$ such that $P_{k\ell}\transpose\, E_k^{} P_{k\ell}^{} = E_\ell^{}$ and $P_{k\ell}\transpose\, E_\ell^{} P_{k\ell}^{} = -E_k^{}$. Such a matrix always exists according to Lemma~\ref{lem:Pkl}. Then, identity~\eqref{eq:Gclass} yields:
\begin{align}
	h_k(R_i^{}P_{k\ell}^{}R_i\transpose \  Z \  R_i^{}P_{k\ell}\transpose\,R_i\transpose) & = \inner{G_{ij}(Z)}{R_i^{} P_{k\ell}\transpose\, E_k^{} P_{k\ell}^{} R_i\transpose} = h_\ell(Z).
\end{align}
Likewise,
\begin{align}
	h_\ell(R_i^{}P_{k\ell}^{}R_i\transpose \ Z \ R_i^{}P_{k\ell}\transpose\,R_i\transpose) & = -h_k(Z).
\end{align}
These identities as well as the (extended) bi-invariance~\eqref{eq:biinvarianceintegrals} of the Haar measure $\mu$ on $\SOn$ and the fact that $f_{ij}$ is a spectral function yield, using the change of variable $Z := R_i^{}P_{k\ell}^{}R_i\transpose \  Z \  R_i^{}P_{k\ell}\transpose\,R_i\transpose$ going from the first to the second integral:
\begin{align}
	\expectt{h_k(Z)\cdot h_\ell(Z)} & = \int_{\SOn} h_k(Z) h_\ell(Z) \ f_{ij}(Z)\dmu(Z) \\
	& =  \int_{\SOn} -h_\ell(Z) h_k(Z) \ f_{ij}(Z)\dmu(Z) = -\expectt{h_k(Z)\cdot h_\ell(Z)}.
\end{align}
Hence, $\expectt{h_k(Z)\cdot h_\ell(Z)} = 0$. We prove the last statement using the same change of variable:
\begin{align*}
	\expectt{h_k^2(Z)} & = \int_{\SOn} h_k^2(Z) \ f_{ij}(Z)\dmu(Z) = \int_{\SOn} h_\ell^2(Z) \ f_{ij}(Z)\dmu(Z) = \expectt{h_\ell^2(Z)}.
\end{align*}
We note that, more generally, it can be shown that the $h_k$'s are identically distributed.
\end{proof}

\SHORTEN{A stronger result states that $h_k$ and $h_\ell$ are identically distributed. A proof is given for completeness, although we will not use this result here:
\begin{lemma} \label{lem:identicaldistribs}
(Sequel of Lemma~\ref{lem:decorrelation}) $h_k(Z)$ and $h_\ell(Z)$ are two identically distributed random variables.
\end{lemma}
\begin{proof}
Let $\mathcal{E}_k(s) = \{ Z\in\SOn : h_k(Z) \leq s \}$. Consider still the same change of variable as in the proof of Lemma~\ref{lem:decorrelation}, $Z = R_i^{}P_{k\ell}^{}R_i\transpose \  Z' \  R_i^{}P_{k\ell}\transpose\,R_i\transpose$. It follows easily that
\begin{align*}
	Z\in\mathcal{E}_k(s) \Leftrightarrow h_k(Z) \leq s \Leftrightarrow h_\ell(Z') \leq s \Leftrightarrow Z'\in\mathcal{E}_\ell(s).
\end{align*}
Hence, this change of variable is a diffeomorphism between $\mathcal{E}_k(s)$ and $\mathcal{E}_\ell(s)$. Using again that $f_{ij}$ is a spectral function and the bi-invariance of $\mu$, we obtain that:
\begin{align*}
	P[h_k(Z)\leq s] & = \int_{\mathcal{E}_k(s)} f_{ij}(Z)\,\dmu(Z) 
						  = \int_{\mathcal{E}_\ell(s)} f_{ij}(Z')\,\dmu(Z') = P[h_\ell(Z)\leq s],
\end{align*}
where $P[E]$ is the probability of event $E$ occurring.
\end{proof}}

The skew-symmetric matrices $(R_i^{}E_1^{}R_i\transpose, \ldots, R_i^{}E_d^{}R_i\transpose)$ form an orthonormal basis of the Lie algebra $\son$. Consequently, we may expand each mapping $G_{ij}$ in this basis and express its squared norm as:
\begin{align}
	G_{ij}(Z) & = \sum_{k=1}^d h_k(Z) \cdot R_i^{}E_k^{}R_i\transpose, & \|G_{ij}(Z)\|^2 & = \sum_{k=1}^d h_k^2(Z).
\end{align}
Since by Lemma~\ref{lem:decorrelation} the quantity $\expectt{h_k^2(Z_{ij})}$ does not depend on $k$, it follows that:
\begin{align}
	\expectt{h_k^2(Z_{ij}^{})} & = \frac{1}{d} \expectt{\|G_{ij}(Z_{ij})\|^2}, \quad k = 1\ldots d.
\end{align}
This further shows that the $d\times d$ blocks that constitute the FIM have constant diagonal. Hence, $F$ can be expressed as the Kronecker product ($\otimes$) of some matrix with the identity $I_d$. Let us define the following (positive) weights on the edges of the measurement graph:
\begin{align}
	w_{ij} = w_{ji} = \expectt{\|G_{ij}(Z_{ij})\|^2} \triangleq \expectt{\|\grad\log f_{ij}(Z_{ij})\|^2},\quad \forall (i,j)\in\mathcal{E}.
	\label{eq:infoweight}
\end{align}
\add{Let $\mathcal{A}\in\mathbb{R}^{N\times N}$ be the adjacency matrix of the measurement graph with $\mathcal{A}_{ij} = w_{ij}$ if $(i,j)\in\mathcal{E}$ and $\mathcal{A}_{ij} = 0$ otherwise, and let $\mathcal{D}\in\mathbb{R}^{N\times N}$ be the diagonal degree matrix such that $\mathcal{D}_{ii} = \sum_{j\in V_i} w_{ij}$. Then, the \emph{weighted Laplacian matrix} $\mathcal{L} = \mathcal{D} - \mathcal{A}$, $\mathcal{L} = \mathcal{L}\transpose \succeq 0$, is given by:}
\begin{align}
	\mathcal{L}_{ij} = 
\begin{cases}
	\sum_{r\in V_i}w_{ir}, & \textrm{ if } i=j, \\
	-w_{ij}, & \textrm{ if } i\neq j \textrm{ and } (i,j) \in \mathcal{E}, \\
	0, & \textrm{ if } i\neq j \textrm{ and } (i,j) \notin \mathcal{E}.
\end{cases}
\label{eq:laplacian}
\end{align}
It is now apparent that the matrix $F \in \mathbb{R}^{dN\times dN}$ is tightly related to $\mathcal{L}$. We summarize this in the following theorem.

\begin{theorem}[FIM for synchronization]\label{thm:FIM}
Let $R_1, \ldots, R_N \in \SOn$ be unknown but fixed rotations and let $H_{ij}^{} = Z_{ij}^{}R_i^{}R_j\transpose$ for $(i,j)\in\mathcal{E}, i<j$, with the $Z_{ij}$'s random rotations which fulfill Assumptions \ref{assu:smoothpositive}--\ref{assu:invariant}. Consider the problem of estimating the $R_i$'s given a realization of the $H_{ij}$'s. The Fisher information matrix (Definition~\ref{def:FIM}) of that estimation problem with respect to the basis~\eqref{eq:basisTRP} is given by
\begin{align}
	F = \frac{1}{d} (\mathcal{L}\otimes I_d),
	\label{eq:Fgeneral}
\end{align}
where $\otimes$ denotes the Kronecker product, $d = \dim \SOn = n(n-1)/2$, $I_d$ is the $d\times d$ identity matrix and $\mathcal{L}$ is the weighted Laplacian matrix~\eqref{eq:laplacian} of the measurement graph.
\end{theorem}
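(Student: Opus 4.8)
The plan is to assemble the pieces prepared above into a short synthesis. I would start from the block-entry formula~\eqref{eq:Fijkldifficult}, which is obtained by substituting the gradient components $\grad_i L(\bfR) = \sum_{r\in V_i} G_{ir}(Z_{ir})R_i$ into Definition~\ref{def:FIM} and expanding the resulting double sum over neighbours. The first reduction is to kill the cross terms: whenever the two edges $(i,r)$ and $(j,s)$ indexing a summand are distinct, Assumption~\ref{assu:independence} factors the expectation into a product, and Lemma~\ref{lem:avgscore} makes each factor vanish because $\expectt{G_{ir}(Z_{ir})}=0$. What survives is exactly~\eqref{eq:Fijklsimpler}: a sum over common neighbours on the diagonal blocks $i=j$, a single term on the off-diagonal blocks with $(i,j)\in\mathcal{E}$, and zero otherwise.

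Next I would treat the off-diagonal blocks, which involve both $G_{ij}$ and $G_{ji}$. Using the symmetry $H_{ij}=H_{ji}\transpose$, the induced relation $Z_{ji} = R_j R_i\transpose Z_{ij}\transpose R_i R_j\transpose$, the equality $f_{ij}=f_{ji}$ on $\SOn$ (hence $G_{ij}=G_{ji}$ as mappings), and the conjugation/inversion identities~\eqref{eq:Gclass} for $G_{ij}$ established in Appendix~\ref{apdx:GijGji}, one gets $G_{ji}(Z_{ji}) = -R_j R_i\transpose G_{ij}(Z_{ij}) R_i R_j\transpose$; the minus sign, produced by $G_{ij}(Z\transpose)=-G_{ij}(Z)$, is precisely what will generate the Laplacian sign pattern. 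Plugging the resulting identity~\eqref{eq:GijGji2} together with the definition~\eqref{eq:hk} of the scalar fields $h_k$ into~\eqref{eq:Fijklsimpler} rewrites every entry of $F$ purely in terms of the random variables $h_k(Z_{ij})$, giving~\eqref{eq:Fijklhk}.

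Finally I would invoke Lemma~\ref{lem:decorrelation}: the $h_k(Z_{ij})$ are mean-zero, pairwise uncorrelated, and have a common second moment, so $\expectt{h_k(Z)h_\ell(Z)}=\delta_{k\ell}\expectt{h_k^2(Z)}$ and $\expectt{h_k^2(Z_{ij})} = \tfrac1d\expectt{\|G_{ij}(Z_{ij})\|^2} = \tfrac1d w_{ij}$ by~\eqref{eq:infoweight}. Hence each $d\times d$ block of $F$ is a scalar multiple of $I_d$: the $(i,i)$ block is $\tfrac1d\big(\sum_{r\in V_i}w_{ir}\big)I_d$, the $(i,j)$ block for $(i,j)\in\mathcal{E}$ is $-\tfrac1d w_{ij}I_d$, and all remaining blocks vanish. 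Comparing with~\eqref{eq:laplacian} identifies this block structure with $\tfrac1d(\mathcal{L}\otimes I_d)$, which is the claim.

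I expect the only genuinely delicate points to be the bookkeeping that makes the independence-plus-zero-score step eliminate exactly the right summands of~\eqref{eq:Fijkldifficult}, and carrying the sign correctly through the off-diagonal blocks—both of which are already carefully set up in the text—so that the proof itself is essentially a guided verification rather than new work.
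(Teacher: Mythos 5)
Your proposal is correct and follows essentially the same route as the paper's own development: starting from~\eqref{eq:Fijkldifficult}, killing cross-edge terms via Assumption~\ref{assu:independence} and Lemma~\ref{lem:avgscore}, resolving the off-diagonal blocks through the sign-flipping identity~\eqref{eq:GijGji2}, and then applying Lemma~\ref{lem:decorrelation} to the scalar fields $h_k$ to make each $d\times d$ block a multiple of $I_d$ with the Laplacian weights~\eqref{eq:infoweight}. The synthesis and the flagged delicate points (the bookkeeping in the independence/zero-score reduction and the sign propagation) match the paper's derivation faithfully.
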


The Laplacian matrix has a number of properties, some of which will yield nice interpretations when deriving the Cramér-Rao bounds. One remarkable fact is that the Fisher information matrix does not depend on $\bfR = (R_1, \ldots, R_N)$, the set of true rotations. This is an appreciable property seen as $\bfR$ is unknown in practice. This stems from the strong symmetries in our problem.

Another important feature of this FIM is that it is rank deficient. Indeed, for a connected measurement graph, $\mathcal{L}$ has exactly one zero eigenvalue (and more if the graph is disconnected) associated to the vector of all ones, $\mathds{1}_N$. The null space of the FIM is thus composed of all vectors of the form $\mathds{1}_N\otimes t$, with $t\in\mathbb{R}^d$ arbitrary. This corresponds to the vertical spaces of $\p$ w.r.t.\ the equivalence relation~\eqref{eq:equivrelation}, i.e., the null space consists in all tangent vectors that move all rotations $R_i$ in the same direction, leaving their relative positions unaffected. This makes perfect sense: the distribution of the measurements $H_{ij}$ is also unaffected by such changes, hence the FIM, seen as a quadratic form, takes up a zero value when applied to the corresponding vectors. We will need special tools to deal with this (structured) singularity when deriving the CRB's in the next section.

Notice how Assumption~\ref{assu:independence} (independence) gave $F$ a block structure based on the sparsity pattern of the Laplacian matrix, while Assumption~\ref{assu:invariant} (spectral pdf's) made each block proportional to the $d\times d$ identity matrix and made $F$ independent of $\bfR$.

In the two following examples, we explicitly compute the weights $w_{ij}$~\eqref{eq:infoweight} associated to two types of noise distributions: (1) unbiased isotropic Langevin distributions (akin to Gaussian noise), and (2) a mix of the former distribution and complete outliers (uniform distribution)---see Section~\ref{sec:measuresson}. Combining formulas for the weights $w_{ij}$ and equations~\eqref{eq:laplacian} and~\eqref{eq:Fgeneral}, one can compute the Fisher information matrix explicitly.

\begin{example}[Langevin distributions] \label{ex:langevin}
(Continued from Example~\ref{ex:langevindist}) Considering the isotropic Langevin distribution $f$~\eqref{eq:langevin}, $\grad\log f(Z) = -\kappa\, Z\skeww{Z}$ and we find that the weight $w$ associated to this noise distribution is a function $\alpha_n(\kappa)$ given by:
\begin{align}
	w = \alpha_n(\kappa) = \expectt{\|\grad\log f(Z)\|^2} & = \frac{\kappa^2}{4}\int_{\SOn} \|Z-Z\transpose\|^2 \ f(Z)\dmu(Z).
\end{align}
Since the integrand is again a class function, we may evaluate this integral using Weyl's integration formulas---see Section~\ref{sec:measuresson} and Appendix~\ref{apdx:coefficients} for an example. In particular, for $n=2$ and $3$, identities~\eqref{eq:weyl23} apply and we derive the following expressions:
\begin{align}
	\alpha_2(\kappa) & = \kappa\frac{I_1(2\kappa)}{I_0(2\kappa)}, & \alpha_3(\kappa) & = \frac{\kappa}{2}\, \frac{(2-\kappa)I_1(2\kappa) + \kappa I_3(2\kappa)}{I_0(2\kappa) - I_1(2\kappa)}.
\end{align}
The functions $I_\nu(z)$ are the modified Bessel functions of the first kind~\eqref{eq:bessel}. We used the formulas for the normalization constants $c_2(\kappa)$~\eqref{eq:c2} and $c_3(\kappa)$~\eqref{eq:c3} as well as the identity $I_1(2\kappa) = \kappa(I_0(2\kappa)-I_2(2\kappa))$.

For the special case $n=2$, taking the concentrations for all measurements to be equal, we find that the FIM is proportional to the unweighted Laplacian matrix $D-A$, with $D$ the degree matrix and $A$ the adjacency matrix of the measurement graph. This particular result was shown before via another method in~\citep{howard2010estimation}. For the derivation in the latter work, commutativity of rotations in the plane is instrumental, and hence the proof method does not---at least in the proposed form---transfer to $\SOn$ for $n \geq 3$.
\end{example}

\begin{example}[Langevin/outlier distributions] \label{ex:langevinoutliers}
(Continued from Example~\ref{ex:langevinoutliersdist})\remove{ Given a concentration $\kappa > 0$ and a probability $p\in[0, 1]$, the pdf
\protect\begin{align}
	f(Z) = \frac{p}{c_n(\kappa)} \exp(\kappa\,\trace\,Z) + 1-p
\end{align}
corresponds to picking a rotation, with probability $p$, around the identity matrix following an isotropic Langevin distribution with concentration $\kappa$, or, with probability $1-p$, uniformly at random on $\SOn$.} Considering the pdf $f$~\eqref{eq:langevinoutliers}, we find $\grad\log f(Z) = \frac{1}{f(Z)} \frac{p\kappa\,\exp(\kappa\trace\,Z)}{c_n(\kappa)} Z\skeww{Z}$. Thus, the information weight is given by
\begin{align}
	w = \alpha_n(\kappa, p) = \int_{\SOn} \frac{1}{f(Z)} \left(\frac{p\kappa\,\exp(\kappa\,\trace\,Z)}{2c_n(\kappa)}\right)^2 \|Z-Z\transpose\|^2 \dmu(Z).
\end{align}
Some algebra using the material in Section~\ref{sec:measuresson} yields, for $n = 2$ and $3$:
\begin{align}
	\alpha_2(\kappa, p) & = \frac{(p\kappa)^2}{c_2(\kappa)} \frac{1}{\pi} \int_0^\pi \frac{(1-\cos 2\theta)\exp(4\kappa\cos\theta)}{p\exp(2\kappa\cos\theta) + (1-p)c_2(\kappa)} \mathrm{d}\theta, \label{eq:alphatwokp} \\[3mm]
	\alpha_3(\kappa, p) & = \frac{(p\kappa)^2 \exp(2\kappa)}{c_3(\kappa)} \frac{1}{\pi} \int_0^\pi \frac{(1-\cos 2\theta)(1-\cos\theta)\exp(4\kappa\cos\theta)}{p\exp(\kappa(1+2\cos\theta)) + (1-p)c_3(\kappa)} \mathrm{d}\theta. \label{eq:alphathreekp}
\end{align}
These integrals may be evaluated numerically. The same machinery goes through for $n \geq 4$.
\end{example}

\section{Cramér-Rao bounds for synchronization}
\label{sec:crbsynch}

Classical Cramér-Rao bounds (CRB's) give a lower bound on the covariance matrix $C$ of any unbiased estimator for an estimation problem in $\Rn$. In terms of the Fisher information matrix $F$ of that problem, the classical result reads $C \succeq F^{-1}$. In our setting, an estimation problem on a manifold with singular $F$, we need to resort to a more general statement of the CRB.

First off, because the parent parameter space $\p$ is a manifold instead of a Euclidean space, we need generalized definitions of bias and covariance of estimators---we will quote them momentarily. And because manifolds are usually curved---as opposed to Euclidean spaces which are flat---the CRB takes up the form $C \succeq F^{-1} + \textrm{ curvature terms}$~\citep{smith2005covariance}. We will compute the curvature terms and show that they become negligible at large signal-to-noise ratios (SNR).

Secondly, owing to the \change{indeterminacies in our estimation problem (based on the $H_{ij}$'s only, we can only recover the $R_i$'s up to a global rotation)}{global rotation ambiguity in synchronization}, the FIM~\eqref{eq:Fgeneral} is singular, with a kernel that is identifiable with the vertical space~\eqref{eq:vertical}. In~\citep{crbsubquot}, CRB's are provided for this scenario by looking at the estimation problem either on the submanifold $\p_A$ (where indeterminacies have been resolved by fixing anchors) or on the quotient space $\p_\emptyset$ (where each equivalence class of rotations is regarded as one point).

We should bear in mind that intrinsic Cramér-Rao bounds are fundamentally asymptotic bounds for large SNR~\citep{smith2005covariance}. At low SNR, the bounds may fail to capture features of the estimation problem that become dominant for large errors. In particular, since the parameter spaces $\p_A$ and $\p_\emptyset$ are compact, there is an upperbound on how badly one can estimate the true rotations. Because of their local nature (intrinsic CRB's result from a small-error analysis), the bounds we establish here are unable to capture this essential feature.\remove{ and should not be used in low SNR regimes.}

\add{In the sequel, the proviso \emph{at large SNR} thus designates noise levels such that efficient estimators commit errors small enough that the intrinsic CRB analysis holds. For reasons that will become clear in this section, for anchor-free synchronization, we define a notion of SNR as the quantity
\protect\begin{align}
	\mathrm{SNR}_\emptyset & =
\frac{(N-1)\expectt{\dist^2(Z_{\mathrm{uni}}, I_n)}}{d^2\trace(\mathcal{L}^\dagger)},
	\label{eq:SNRanchorfree}
\end{align}
where the expectation is taken w.r.t.\ $Z_{\mathrm{uni}}$, uniformly distributed over $\SOn$. The numerator is a baseline which corresponds to the variance of a random estimator---see Section~{\ref{subsec:asymptoticbound}}. The denominator has units of variance as well and is small when the measurement graph is well connected by good measurements. An SNR can be considered large if $\mathrm{SNR}_\emptyset \gg 1$. For anchored synchronization, a similar definition holds with $\mathcal{L}$ replaced by the masked Laplacian $\mathcal{L}_A$~{\eqref{eq:LA}} and $N-1$ replaced by $N-|A|$.
}

We now give a definition of unbiased estimator. After this, we differentiate between the anchored and the anchor-free scenarios to establish CRB's.
\begin{definition}[unbiased estimator]\label{def:unbiasedestim}
Let $\p$ (a Riemannian manifold) be the parameter space of an estimation problem\add{ and let $\M$ be the probability space to which measurements belong}. Let $f(y;\theta)$ be the pdf of the measurement $y\in\M$ conditioned by $\theta\in\p$. An \emph{estimator} is a mapping $\hat\theta : \M \to \p$ assigning a parameter $\hat\theta(y)$ to every possible realization of the measurement $y$. The \emph{bias} of an estimator is the vector field $b$ on $\p$:
\begin{align}
	\forall \theta\in\p, \quad b(\theta) \triangleq \expectt{\Log_\theta(\hat\theta(y))},
\end{align}
where $\Log$ is the (Riemannian) logarithmic map, see~\eqref{eq:loganchored},~\eqref{eq:Logpempty}. An \emph{unbiased estimator} has a vanishing bias $b\equiv 0$.
\end{definition}

\subsection{Anchored synchronization}\label{subsec:anchoredcrb}

When anchors are provided, the rotation matrices $R_i$ for $i\in A$, $A\neq \emptyset$ are known. The parameter space then becomes $\p_A$~\eqref{eq:pA}, which is a Riemannian submanifold of $\p$. The synchronization problem is well-posed on $\p_A$, provided there is at least one anchor in each connected component of the measurement graph.
\change{
We set out to use Theorem 4 in~{\citep{crbsubquot}}, which states Cramér-Rao bounds for estimation problems on Riemannian submanifolds. To this end, we first define the $dN\times dN$ covariance matrix $C_A$ for anchored synchronization, then we introduce a projected FIM $F_A$, and we finally state the bound.}{Let us define the covariance matrix of an estimator for anchored synchronization:}
\begin{definition}[anchored covariance]
Following~\citep[eq.\,(5)]{crbsubquot}, the covariance matrix of an estimator $\hat \bfR$ mapping each possible set of measurements $H_{ij}$ to a point in $\p_A$, expressed w.r.t.\ the orthonormal basis~\eqref{eq:basisTRP} of $\T_{\bfR}\p$, is given by:
\begin{align}
	(C_A)_{ij,k\ell} & = \expectt{\innersmall{\Log_{\bfR}(\hat \bfR)}{\bfxi_{ik}} \cdot \innersmall{\Log_{\bfR}(\hat \bfR)}{\bfxi_{j\ell}}},
	\label{eq:CA}
\end{align}
where the indexing convention is the same as for the FIM. Of course, all $d\times d$ blocks $(i,j)$ such that either $i$ or $j$ or both are in $A$ are zero by construction. In particular, the variance of $\hat \bfR$ is the trace of $C_A$:
\begin{align}
	\trace\,C_A & = \expectt{\|\Log_{\bfR}(\hat \bfR)\|^2} = \expectt{\dist^2(\bfR, \hat \bfR)},
	\label{eq:varA}
\end{align}
where $\dist$ is the geodesic distance on $\p_A$~\eqref{eq:distanchored}.
\end{definition}

\add{Cramér-Rao bounds link this covariance matrix to the Fisher information matrix derived in the previous section through the following result, which is a statement of Theorem 4 in~{\citep{crbsubquot}} with adapted notation:}

\begin{theorem}[\add{anchored CRB}] \label{thm:CRBsubmanifold}
Given any unbiased estimator $\hat\bfR$ for synchronization on $\p_A$, at large SNR, the covariance matrix $C_A$~\eqref{eq:CA} and the Fisher information matrix $F$~\eqref{eq:Fgeneral} obey the matrix inequality (assuming at least one anchor in each connected component):
\begin{align}
	C_A \succeq F_A^\dagger - \frac{1}{3}\left( R_m(F_A^\dagger)F_A^\dagger + F_A^\dagger R_m(F_A^\dagger) \right),
	\label{eq:CRBsubmanifold}
\end{align}
where $F_A = P_A F P_A$ and $P_A$ is the orthogonal projector from $\T_{\bfR}\p$ to $\T_{\bfR}\p_A$, expressed w.r.t.\ the orthonormal basis~\eqref{eq:basisTRP}. The operator $R_m \colon \mathbb{R}^{dN\times dN} \to \mathbb{R}^{dN\times dN}$ involves the Riemannian curvature tensor of $\p_A$ and is detailed in Appendix~\ref{apdx:curvatureterms}.
\end{theorem}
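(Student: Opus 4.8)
The statement is a specialization of the intrinsic Cram\'er--Rao bound for estimation on Riemannian submanifolds, Theorem~4 in~\citep{crbsubquot}, to the parameter space $\p_A$~\eqref{eq:pA}. The plan is therefore to (i) check that the hypotheses of that theorem hold here, (ii) identify the Fisher information matrix of the restricted estimation problem with $F_A = P_A F P_A$, (iii) verify the non-degeneracy that makes $F_A^\dagger$ meaningful, and (iv) account for the curvature term. For (i): $\p_A$ is a Riemannian submanifold of the product manifold $\p$ (Section~\ref{subsec:anchoredgeometry}), so the intrinsic machinery applies; the regularity conditions needed for an intrinsic CRB --- smoothness of the log-likelihood and a support independent of the parameter, which together license differentiating under the integral sign --- follow from Assumption~\ref{assu:smoothpositive} together with compactness of $\SOn$; and the estimator $\hat\bfR$ is unbiased by hypothesis, in the sense of Definition~\ref{def:unbiasedestim} with the Riemannian $\Log$ of $\p_A$ associated to the distance~\eqref{eq:distanchored}.

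For (ii) I would compute the FIM of the problem seen on $\p_A$. Since $\p_A$ is a Riemannian submanifold, the gradient of $L|_{\p_A}$ is the orthogonal projection of the ambient gradient, $\grad(L|_{\p_A})(\bfR) = P_A\,\grad L(\bfR)$. Writing this FIM in the basis~\eqref{eq:basisTRP} of $\T_{\bfR}\p$ --- whose vectors $\bfxi_{ik}$ span $\T_{\bfR}\p_A$ for $i\notin A$ and span its orthogonal complement for $i\in A$ --- and using that $P_A$ is self-adjoint and idempotent,
\begin{align*}
	\expectt{\innersmall{P_A\grad L(\bfR)}{\bfxi_{ik}}\,\innersmall{P_A\grad L(\bfR)}{\bfxi_{j\ell}}} = \expectt{\innersmall{\grad L(\bfR)}{P_A\bfxi_{ik}}\,\innersmall{\grad L(\bfR)}{P_A\bfxi_{j\ell}}},
\end{align*}
which, because $P_A$ is in this basis the $0$--$1$ block-diagonal matrix that zeroes the anchored blocks, is exactly $P_A F P_A$ with $F$ the parent FIM of Theorem~\ref{thm:FIM}. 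Hence $F_A = P_A F P_A = \frac{1}{d}(\mathcal{L}_A\otimes I_d)$, with $\mathcal{L}_A$ obtained from $\mathcal{L}$~\eqref{eq:laplacian} by zeroing the rows and columns of anchored nodes.

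For (iii): the bound involves $F_A^\dagger$, which restricts to a genuine inverse on $\T_{\bfR}\p_A$ precisely when $F_A$ is positive definite there. Because $\ker\mathcal{L}$ is spanned by the indicator vectors of the connected components of $G$, the principal submatrix of $\mathcal{L}_A$ indexed by $[N]\setminus A$ --- a grounded (Dirichlet) Laplacian --- is positive definite if and only if each connected component of $G$ contains at least one anchor; this is exactly the well-posedness hypothesis, and it is also what makes synchronization on $\p_A$ identifiable. With (i)--(iii) in place, Theorem~4 of~\citep{crbsubquot} yields~\eqref{eq:CRBsubmanifold} verbatim, the operator $R_m$ being built from the Riemannian curvature tensor of $\p_A$; since $\p_A$ is, up to the fixed anchor factors, a product of copies of $\SOn$, this curvature is inherited from $\SOn$, and the explicit form of $R_m$ is the content of Appendix~\ref{apdx:curvatureterms}.

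Finally, at large SNR efficient estimators concentrate near $\bfR$, so $F_A^\dagger = O(1/\mathrm{SNR})$ and the correction $R_m(F_A^\dagger)F_A^\dagger + F_A^\dagger R_m(F_A^\dagger) = O(1/\mathrm{SNR}^2)$ is dominated by $F_A^\dagger$; in that regime the bound reduces to $C_A \succeq F_A^\dagger$, and reading off the diagonal $d\times d$ blocks (using $F_A = \frac{1}{d}(\mathcal{L}_A\otimes I_d)$ and~\eqref{eq:varA}) gives the node-wise statement~\eqref{eq:intro1}. I expect the main obstacle to be the bookkeeping hidden in step~(i) together with the curvature operator: rigorously justifying the interchange of differentiation and integration, checking the local well-definedness of $\Log_{\bfR}$ on the neighbourhood where the small-error analysis is carried out, and translating the abstract curvature term of~\citep{crbsubquot} into the concrete $R_m$ of Appendix~\ref{apdx:curvatureterms}.
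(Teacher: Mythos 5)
Your proposal is correct and follows the same route as the paper: Theorem~\ref{thm:CRBsubmanifold} is presented there as a restatement, with adapted notation, of Theorem~4 in~\citep{crbsubquot}, so the ``proof'' is exactly a specialization of that intrinsic submanifold CRB. Your steps (i)--(iv) correctly fill in the hypothesis-checking the paper leaves implicit---the submanifold structure of $\p_A$, the identification $F_A = P_A F P_A$ via self-adjointness of the tangent projector, positive definiteness of the grounded Laplacian under the one-anchor-per-component condition, and the curvature operator deferred to Appendix~\ref{apdx:curvatureterms}.
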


The effect of $P_A$ is to set all rows and columns corresponding to anchored rotations to zero. Thus, we introduce the masked Laplacian $\mathcal{L}_A$:
\begin{align}
	(\mathcal{L}_A)_{ij} = \begin{cases} \mathcal{L}_{ij} & \textrm{ if } i,j \notin A, \\ 0 & \textrm{ otherwise.} \end{cases}
	\label{eq:LA}
\end{align}
Then, the projected FIM is simply:
\begin{align}
	F_A = \frac{1}{d}(\mathcal{L}_A\otimes I_d).
	\label{eq:FA}
\end{align}
The pseudoinverse of $F_A$ is given by $F_A^\dagger = d(\mathcal{L}_A^\dagger \otimes I_d)$, since for arbitrary matrices $A$ and $B$, it holds that $(A\otimes B)^\dagger = A^\dagger \otimes B^\dagger$~\cite[Fact~7.4.32]{bernstein2009matrix}. Notice that the rows and columns of $\mathcal{L}_A^\dagger$ corresponding to anchors are also zero. Theorem~\ref{thm:CRBsubmanifold} then yields the sought CRB:
\begin{align}
	C_A \succeq d(\mathcal{L}_A^\dagger \otimes I_d) + \textrm{curvature terms}.
\label{eq:CRBanchored}
\end{align}
\remove{We compute the curvature terms explicitly in Appendix~{\ref{apdx:curvatureterms}} and show that they are on the order of SNR$^{-2}$. They can hence be neglected for large SNR, which makes for easier formulas.} In particular, for $n=2$, the manifold $\p_A$ is flat and $d = 1$. Hence, the curvature terms vanish exactly ($R_m \equiv 0$) and the CRB reads:
\begin{align}
	C_A & \succeq \mathcal{L}_A^\dagger.
\label{eq:CRBanchoredplane}
\end{align}
For $n=3$, including the curvature terms as detailed in Appendix~\ref{apdx:curvatureterms} yields this CRB:
\begin{align}
	C_A & \succeq 3 \left(\mathcal{L}_A^\dagger - \frac{1}{4}\left(\ddiag(\mathcal{L}_A^\dagger)\mathcal{L}_A^\dagger + \mathcal{L}_A^\dagger\ddiag(\mathcal{L}_A^\dagger)\right)\right) \otimes I_3,
\label{eq:CRBanchoredspace}
\end{align}
where $\ddiag$ sets all off-diagonal entries of a matrix to zero. \change{The curvature terms are thus on the order of $\mathcal{O}(\lambda_\textrm{max}^2(\mathcal{L}_A^\dagger))$, which indeed becomes small at large SNR.}{At large SNR, that is, for small values of $\trace(\mathcal{L}_A^\dagger)$, the curvature terms hence effectively become negligible compared to the leading term.} For general $n$, neglecting curvature if $n\geq 3$, the variance is lower-bounded as follows:
\begin{align}
	\expectt{\dist^2(\bfR, \hat \bfR)} & \geq d^2\, \trace\,\mathcal{L}_A^\dagger,
\label{eq:varianceanchored}
\end{align}
where $\dist$ is as defined by~\eqref{eq:distanchored}. It also holds for each node $i$ that
\begin{align}
	\expectt{\dist^2(R_i, \hat R_i)} \geq d^2\, (\mathcal{L}_A^\dagger)_{ii}.
\label{eq:erroranchored}
\end{align}
\change{Interestingly, for synchronization on the group of translations $\Rn$, the CRB involves the same Laplacian-related matrix~{\citep{crbsubquot}}. Barooah and Hespanha give interpretations of this bound in terms of the resistance distance on the measurement graph when there is exactly one anchor~{\citep{barooah2007estimation}}. This analogy is notably useful in verifying important properties of the bound, such as its behavior when edges or nodes are added or deleted, etc.}{This leads to a useful interpretation of the CRB in terms of a resistance distance on the measurement graph, as depicted in Figure~{\ref{fig:graphanchored}}. Indeed, for a general setting with one or more anchors, it can be checked that~{\citep{bouldin1973pseudo}}
\protect\begin{align}
	\mathcal{L}_A^\dagger & = (J_A (\mathcal{D} - \mathcal{A}) J_A)^\dagger = (J_A(I_N - \mathcal{D}^{-1}\mathcal{A})J_A)^\dagger \mathcal{D}^{-1},
	\label{eq:randomwalkanchored}
\end{align}
where $J_A$ is a diagonal matrix such that $(J_A)_{ii} = 1$ if $i\in A$ and $(J_A)_{ii} = 0$ otherwise.
It is well-known, e.g.\ from~{\citep[\S\,1.2.6]{doyle2000random}}, that in the first factor of the right-hand side, $((J_A(I_N - \mathcal{D}^{-1}\mathcal{A})J_A)^\dagger)_{ii}$ is the average number of times a random walker starting at node $i$ on the graph with transition probabilities $\mathcal{D}^{-1}\mathcal{A}$ will be at node $i$ before hitting an anchor. This number is small if node $i$ is strongly connected to anchors. In the CRB~\eqref{eq:erroranchored} on node $i$, $(\mathcal{L}_A^\dagger)_{ii}$ is thus the ratio between this anchor-connectivity measure and the overall amount of information available about node $i$ directly, namely $\mathcal{D}_{ii} = \sum_{j \in V_i} w_{ij}$.}

\begin{figure}[t]
\centering
\includegraphics[width=\textwidth]{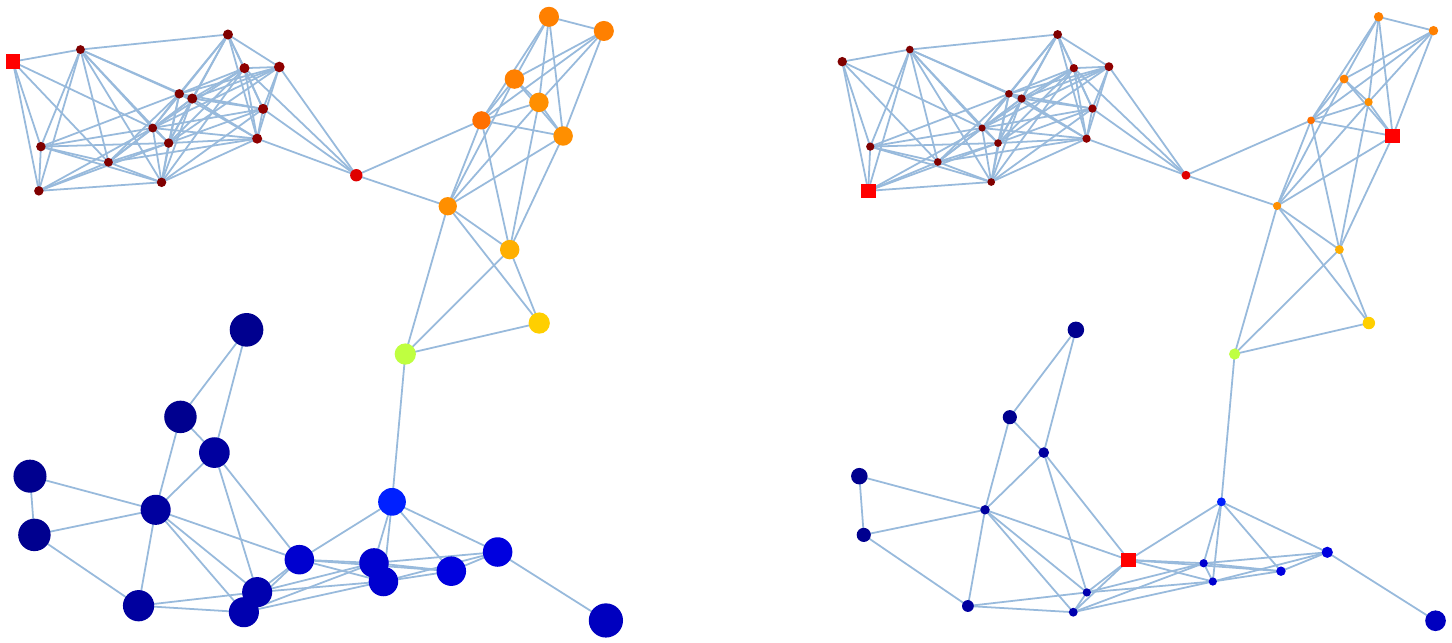}
\caption{The Cramér-Rao bound for \emph{anchored} synchronization~\eqref{eq:erroranchored} limits how well each individual rotation can be estimated. The two identical synchronization graphs above illustrate the effect of anchors. All edges have the same weight (i.i.d.\ noise). Anchors are red squares. Unknown rotations are round nodes\add{ colored according to the second eigenvector of $\mathcal{L}$ to bring out the clusters}. The area of node $i$ is proportional to the lower-bound on the average error for this node $\expectt{\dist^2(R_i, \hat R_i)}$. On the left, there is only one anchor in the upper-left cluster. Hence, nodes in the lower-left cluster, which are separated from the anchor by two bottlenecks, will be harder to estimate accurately than in the situation on the right, where there is one anchor for each cluster. Node positions in the picture are irrelevant.}
\label{fig:graphanchored}
\end{figure}

\subsection{Anchor-free synchronization}\label{subsec:anchorfreecrb}

When no anchors are provided, the global rotation ambiguity leads to the equivalence relation~\eqref{eq:equivrelation} on $\p$, which in turn leads to work on the Riemannian quotient parameter space $\p_\emptyset$~\eqref{eq:pempty}. The synchronization problem is well-posed on $\p_\emptyset$ as long as the measurement graph is connected, which we always assume in this work.
\change{
We set out to use Theorem 5 in~{\citep{crbsubquot}}, which states Cramér-Rao bounds for estimation problems on Riemannian quotient manifolds. To this end, we first define the $dN\times dN$ covariance matrix $C_\emptyset$ for anchor-free synchronization, then we state the bound.}{Let us define the covariance matrix of an estimator for anchor-free synchronization:}
\begin{definition}[anchor-free covariance]
Following~\citep[eq.\,(20)]{crbsubquot}, the covariance matrix of an estimator $[\hat \bfR]$ mapping each possible set of measurements $H_{ij}$ to a point in $\p_\emptyset$ (that is, to an equivalence class in $\p$), expressed w.r.t.\ the orthonormal basis~\eqref{eq:basisTRP} of $\T_{\bfR}\p$, is given by:
\begin{align}
	(C_\emptyset)_{ij,k\ell} & = \expectt{\innersmall{\bfxi}{\bfxi_{ik}} \cdot \innersmall{\bfxi}{\bfxi_{j\ell}}}, \textrm{ with} \nonumber\\
	\bfxi & = (\D\pi(\bfR)|_{\HH_{\bfR}})^{-1}[\Log_{[\bfR]}([\hat \bfR])].
	\label{eq:Cempty}
\end{align}
That is, $\bfxi$ (the random error vector) is the shortest horizontal vector such that $\Exp_{\bfR}(\bfxi) \in [\hat \bfR]$~\eqref{eq:loganchored}. We used the same indexing convention as for the FIM. In particular, the variance of $[\hat \bfR]$ is the trace of $C_\emptyset$:
\begin{align}
	\trace\,C_\emptyset & = \expectt{\|\Log_{[\bfR]}([\hat \bfR])\|^2} = \expectt{\dist^2([\bfR], [\hat \bfR])},
	\label{eq:varempty}
\end{align}
where $\dist$ is the geodesic distance on $\p_\emptyset$~\eqref{eq:distanchorfree}.
\end{definition}

\add{Theorem 5 in~{\citep{crbsubquot}}, stated here with adapted notation, links this covariance matrix to the Fisher information matrix as follows:}

\begin{theorem}[\add{anchor-free CRB}] \label{thm:CRBquotientmanifold}
Given any unbiased estimator $[\hat\bfR]$ for synchronization on $\p_\emptyset$, at large SNR, the covariance matrix $C_\emptyset$~\eqref{eq:Cempty} and the Fisher information matrix $F$~\eqref{eq:Fgeneral} obey the matrix inequality (assuming the measurement graph is connected):
\begin{align}
	C_\emptyset \succeq F^\dagger - \frac{1}{3}\left( R_m(F^\dagger)F^\dagger + F^\dagger R_m(F^\dagger) \right),
	\label{eq:CRBquotientmanifold}
\end{align}
where $R_m \colon \mathbb{R}^{dN\times dN} \to \mathbb{R}^{dN\times dN}$ involves the Riemannian curvature tensor of $\p_\emptyset$ and is detailed in Appendix~\ref{apdx:curvatureterms}.
\end{theorem}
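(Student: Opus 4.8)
\section*{Proof proposal}

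The statement is a transcription of Theorem~5 in~\citep{crbsubquot} to the notation of this paper, so the plan is simply to verify that anchor-free synchronization meets the hypotheses of that theorem and then quote it. Those hypotheses reduce to three points: (i) the parameter space is a Riemannian quotient manifold $\p_\emptyset = \p/\!\sim$ with associated Riemannian submersion $\pi\colon\p\to\p_\emptyset$; (ii) the total-space Fisher information matrix $F$ is singular with kernel exactly equal to the vertical distribution $\VV_\bfR$; and (iii) the log-likelihood is constant along the fibers and descends to a smooth function on $\p_\emptyset$, so that the notion of unbiased estimator of Definition~\ref{def:unbiasedestim} is meaningful on $\p_\emptyset$.

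Item~(i) is precisely what Section~\ref{sec:geometry} establishes: with the product metric~\eqref{eq:metricTRP}, $\p_\emptyset$ is a coset manifold, $\pi$~\eqref{eq:submersion} is a Riemannian submersion, and the vertical space is $\VV_\bfR = \{(R_1\Omega,\dots,R_N\Omega):\Omega\in\son\}$~\eqref{eq:vertical}. Item~(iii) follows from the invariance~\eqref{eq:invariance}, $L(\hat\bfR Q) = L(\hat\bfR)$: the log-likelihood is thus constant on fibers and, being smooth on $\p$ under Assumption~\ref{assu:smoothpositive}, descends to a smooth function on $\p_\emptyset$ because $\pi$ is a submersion. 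In particular $\D L(\bfR)$ annihilates every vertical direction, so $\grad L(\bfR)\in\HH_\bfR$ for every realization of the measurements; this is the fact that makes the pseudoinverse $F^\dagger$ (taken in the total space $\p$) coincide, when restricted to the horizontal space, with the genuine inverse of the Fisher information matrix of the estimation problem on $\p_\emptyset$ under the isometric identification $\HH_\bfR\cong\T_{[\bfR]}\p_\emptyset$.

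It remains to check item~(ii). By Theorem~\ref{thm:FIM}, $F = \tfrac1d(\mathcal{L}\otimes I_d)$ with $\mathcal{L}\succeq 0$ the weighted graph Laplacian~\eqref{eq:laplacian}. When the measurement graph is connected, $\mathcal{L}$ has a one-dimensional kernel spanned by $\mathds{1}_N$, whence $\ker F = \{\mathds{1}_N\otimes t : t\in\mathbb{R}^{d}\}$; writing such a vector in the basis~\eqref{eq:basisTRP} produces exactly $(R_1\Omega,\dots,R_N\Omega)$ with $\Omega = \sum_k t_k E_k\in\son$, that is, the vertical space $\VV_\bfR$. Hence $\ker F = \VV_\bfR$, as required. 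With (i)--(iii) in hand, Theorem~5 of~\citep{crbsubquot} applies and yields~\eqref{eq:CRBquotientmanifold}, where the operator $R_m$ is built from the Riemannian curvature tensor of $\p_\emptyset$, itself obtained from that of the product $\p = \SOn^N$ via O'Neill's formulas for Riemannian submersions. The step I expect to require the most care is the bookkeeping: one must confirm that the random error vector $\bfxi$ in the covariance~\eqref{eq:Cempty} --- the horizontal lift of $\Log_{[\bfR]}([\hat\bfR])$ --- is exactly the quantity whose second moments Theorem~5 bounds, and that the ``large SNR'' proviso, quantified here through~\eqref{eq:SNRanchorfree}, is the regime in which the curvature expansion is valid; the explicit size of the curvature terms, shown to be of order $\mathrm{SNR}_\emptyset^{-2}$ relative to the leading term $F^\dagger$, is deferred to Appendix~\ref{apdx:curvatureterms}.
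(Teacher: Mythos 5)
Your proposal is correct and takes essentially the same route as the paper: Theorem~\ref{thm:CRBquotientmanifold} is stated explicitly as a restatement of Theorem~5 in~\citep{crbsubquot} with adapted notation, and the paper's implicit justification is exactly what you spell out --- that $\p_\emptyset$ is a Riemannian quotient manifold with Riemannian submersion $\pi$ (Section~\ref{sec:geometry}), that $\ker F$ equals the vertical space $\VV_\bfR$ for a connected graph (the remark following Theorem~\ref{thm:FIM}), and that the likelihood is fiber-invariant so the problem descends to $\p_\emptyset$. Your additional observation that $\grad L(\bfR)$ is automatically horizontal is the correct reason the total-space pseudoinverse $F^\dagger$ matches the quotient FIM inverse on $\HH_\bfR$, and the deferral of the curvature terms to Appendix~\ref{apdx:curvatureterms} matches the paper.
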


Theorem~\ref{thm:CRBquotientmanifold} then yields the sought CRB:
\begin{align}
	C_\emptyset \succeq d(\mathcal{L}^\dagger \otimes I_d) + \textrm{curvature terms}.
\label{eq:CRBanchorfree}
\end{align}
We compute the curvature terms explicitly in Appendix~\ref{apdx:curvatureterms} and show they can be neglected for \remove{(reasonably) }large \remove{enough }SNR. In particular, for $n=2$, the manifold $\p_\emptyset$ is flat and $d = 1$. Hence:
\begin{align}
	C_\emptyset \succeq \mathcal{L}^\dagger.
\label{eq:CRBanchorfreeplane}
\end{align}
For $n=3$, the curvature terms are the same as those for the anchored case, with an additional term that decreases as $1/N$. For (not so) large $N$ then, the bound~\eqref{eq:CRBanchoredspace} is a good bound for $n=3$, anchor-free synchronization. For general $n$, neglecting curvature for $n\geq 3$, the variance is lower-bounded as follows:
\begin{align}
	\expectt{\dist^2([\bfR], [\hat \bfR])} & \geq d^2\, \trace\,\mathcal{L}^\dagger,
\label{eq:varianceanchorfree}
\end{align}
where $\dist$ is as defined by~\eqref{eq:distanchorfree}.

For the remainder of this section, we work out an interpretation of~\eqref{eq:CRBanchorfree}. This matrix inequality entails that, for all $x\in\mathbb{R}^{dN}$ (neglecting curvature terms if needed):
\begin{align}
	x\transpose C_\emptyset x \geq d\,x\transpose(\mathcal{L}^\dagger\otimes I_d)x.
\label{eq:inequalityanchorfree}
\end{align}
As both the covariance and the FIM correspond to positive semidefinite operators on the horizontal space $\HH_{\bfR}$, this is really only meaningful when $x$ is the vector of coordinates of a horizontal vector $\bfeta = (\eta_1, \ldots, \eta_N) \in \HH_{\bfR}$. \add{We emphasize that this restriction implies that the anchor-free CRB, as it should, only conveys information about \emph{relative} rotations. It does not say anything about singled-out rotations in particular.} Let $e_i$, $e_j$ denote the $i$-th and $j$-th columns of the identity matrix $I_N$ and let $e_k$ denote the $k$-th column of $I_d$. We consider $x = (e_i-e_j)\otimes e_k$, which corresponds to the zero horizontal vector $\bfeta$ except for $\eta_i = R_iE_k$ and $\eta_j = -R_jE_k$, with $E_k\in\son$ the $k$-th element of the orthonormal basis of $\son$ picked as in~\eqref{eq:basisSOt}. By definition of $C_\emptyset$ and of the error vector $\bfxi = (R_1\Omega_1, \ldots, R_N\Omega_N) \in \HH_{\bfR}$~\eqref{eq:Cempty},
\begin{align}
	x\transpose C_\emptyset x & = \expectt{\inner{\bfxi}{\eta}^2} = \expectt{\inner{\Omega_i-\Omega_j}{E_k}^2}.
\end{align}
On the other hand, we have
\begin{align}
	d\,x\transpose (\mathcal{L}^\dagger\otimes I_d) x & = d\,(e_i-e_j)\transpose \mathcal{L}^\dagger (e_i-e_j).
\end{align}
These two last quantities are related by inequality~\eqref{eq:inequalityanchorfree}. Summing for $k = 1\ldots d$ on both sides of this inequality, we find:
\begin{align}
	\expectt{\sqfrobnorm{\Omega_i-\Omega_j}} & \geq d^2\,(e_i-e_j)\transpose \mathcal{L}^\dagger (e_i-e_j).
\end{align}
Now remember that the error vector $\bfxi$~\eqref{eq:Cempty} is the shortest horizontal vector such that $\Exp_{\bfR}(\bfxi) \in [\hat \bfR]$. Without loss of generality, we may assume that $\hat \bfR$ is aligned such that $\Exp_{\bfR}(\bfxi) = \hat \bfR$. Then, $\hat R_i = R_i\exp(\Omega_i)$ for all $i$. It follows that
\begin{align}
	\hat R_i^{}\hat R_j\transpose & = R_i^{}\exp(\Omega_i^{})\exp(-\Omega_j^{})R_j\transpose, \quad \textrm{ hence} \\
	\dist^2(R_i^{}R_j\transpose, \hat R_i^{}\hat R_j\transpose) & = \sqfrobnorm{\log\big(\exp(\Omega_i)\exp(-\Omega_j)\big)}.
\end{align}
For commuting $\Omega_i$ and $\Omega_j$---which is always the case for $n=2$---we have
\begin{align}
	\log\big(\exp(\Omega_i)\exp(-\Omega_j)\big) = \Omega_i-\Omega_j.
\end{align}
For $n\geq 3$, this still approximately holds \change{at high enough SNR}{in small error regimes} (that is, for small enough $\Omega_i, \Omega_j$), by the Baker-Campbell-Hausdorff formula. Hence,
\begin{align}
	\expectt{\dist^2(R_i^{}R_j\transpose, \hat R_i^{}\hat R_j\transpose)} & \approx \expectt{\sqfrobnorm{\Omega_i-\Omega_j}} \geq d^2\,(e_i-e_j)\transpose \mathcal{L}^\dagger (e_i-e_j).
	\label{eq:erroranchorfree}
\end{align}
The quantity $\trace(\mathcal{D}) \cdot (e_i-e_j)\transpose \mathcal{L}^\dagger (e_i-e_j)$ is sometimes called the squared Euclidean commute time distance (ECTD)~\citep{saerens2004principal} between nodes $i$ and $j$. It is also known as the electrical resistance distance. \add{For a random walker on the graph with transition probabilities $\mathcal{D}^{-1}\mathcal{A}$, this quantity is the average commute time distance, that is, the number of steps it takes on average, starting at node $i$, to hit node $j$ then node $i$ again. The right-hand side of~{\eqref{eq:erroranchorfree}} is thus inversely proportional to the quantity and quality of information linking these two nodes. It decreases whenever the number of paths between them increases or whenever an existing path is made more informative, i.e., weights on that path are increased.}

Still in~\citep{saerens2004principal}, it is shown in Section~5 how principal component analysis (PCA) on $\mathcal{L}^\dagger$ can be used to embed the nodes in a low dimensional subspace such that the Euclidean distance separating two nodes is similar to the ECTD separating them in the graph. For synchronization, such an embedding naturally groups together nodes whose relative rotations can be accurately estimated, as depicted in Figure~\ref{fig:ECTD}.

\remove{The CRB for anchor-free synchronization thus sets a lower-bound on how well one can estimate the relative rotation between $R_i$ and $R_j$. The bound reflects how strongly the corresponding nodes are connected in the information-weighted measurement graph. }

\begin{figure}[t]
\centering
\includegraphics[width=\textwidth]{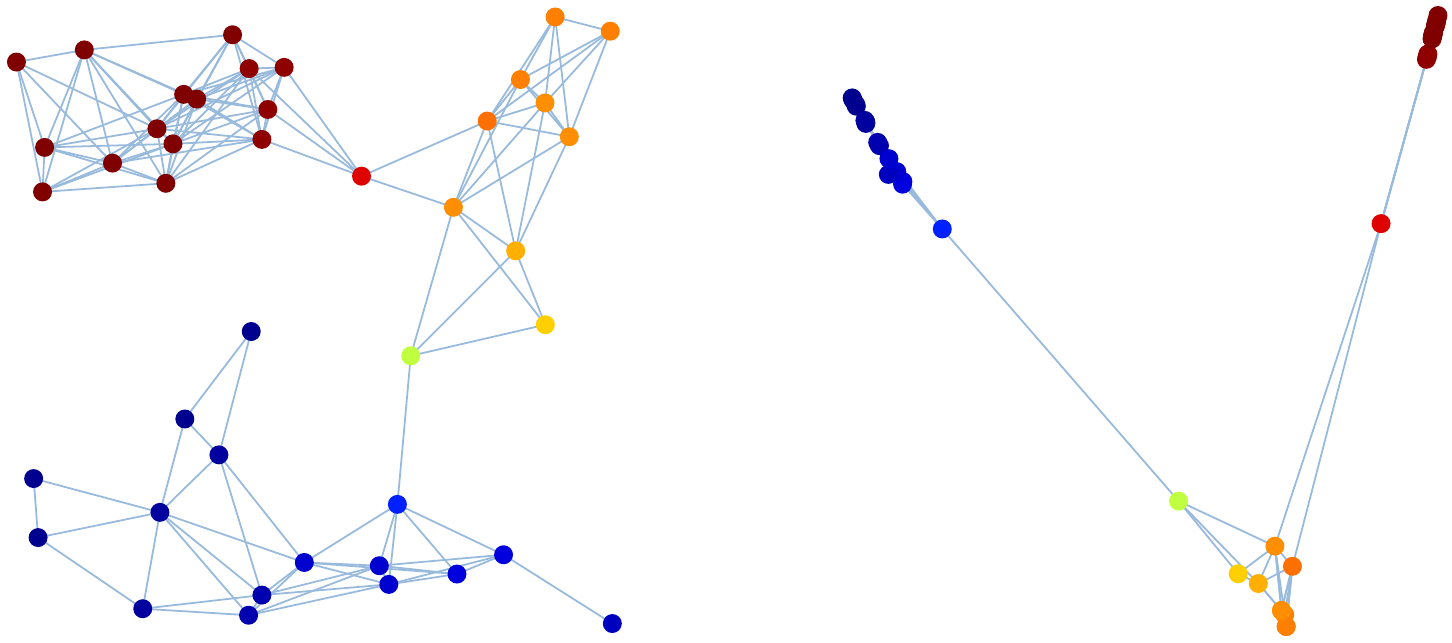}
\caption{\changesafe{The Fisher information matrix for \emph{anchor-free} synchronization is essentially the Laplacian of the measurement graph, whose edges have weights proportional to the amount of information these edges contain~{\eqref{eq:Fgeneral}}. As a result, the Cramér-Rao bound limits the accuracy to which}{The Cramér-Rao bound for \emph{anchor-free} synchronization~{\eqref{eq:erroranchorfree}} limits how well} the relative rotation between two nodes can be estimated, in proportion to the Euclidean commute time distance (ECTD) separating them in the graph. Left: each node in the synchronization graph corresponds to a rotation to estimate and each edge corresponds to a measurement of relative rotation. Noise affecting the measurements is i.i.d., hence all edges have the same weight. Nodes are colored according to the second eigenvector of $\mathcal{L}$ (the Fiedler vector). Node positions are irrelevant. Right: ECTD-embedding of the same graph in the plane, such that the distance between two nodes $i$ and $j$ in the picture is mostly proportional to the ECTD separating them, which is essentially a lower-bound on $\expect\{\dist^2(R_i^{}R_j\transpose, \hat R_i^{}\hat R_j\transpose)\}^{1/2}$. In other words: the closer two nodes are, the better their relative rotation can be estimated. \add{Notice that the node colors correspond to the horizontal coordinate in the right picture.} See Section~{\ref{subsec:visualization}}.}
\label{fig:ECTD}
\end{figure}

\section{Comments on, and consequences of the CRB}
\label{sec:analysis}

\add{So far, we derived the Cramér-Rao bounds for synchronization in both anchored and anchor-free settings. These bounds enjoy a rich structure and lend themselves to useful interpretations, as for the random walk perspective for example. In this section, we start by hinting that the CRB might be achievable, making it all the more relevant. We then stress the limits of validity of the CRB's, namely the large SNR proviso. Visualization tools are proposed to assist in graph analysis. Finally, we focus on anchor-free synchronization and comment upon the role of the Fiedler value of a measurement graph, the synchronizability of Erd\"os-R\'enyi random graphs and the remarkable resilience to outliers of synchronization.}

\subsection{The CRB's might be achievable} \label{subsec:achievable}

\add{Figure~{\ref{fig:plots}} compares the CRB~{\eqref{eq:CRBanchoredspace}} against three existing synchronization algorithms on synthetic data. The MLE method~{\citep{boumal2013MLE}}, which is a proxy for the maximum likelihood estimator for synchronization, seems to achieve the CRB as the SNR increases. There is no proof yet that this is indeed the case, but nevertheless the small gap between the empirical variance of the MLE and the CRB suggests that studying the CRB is relevant to understand synchronization.}

\begin{figure}
	\centering
	\mbox{
		\subfigure{\includegraphics[width=7cm]{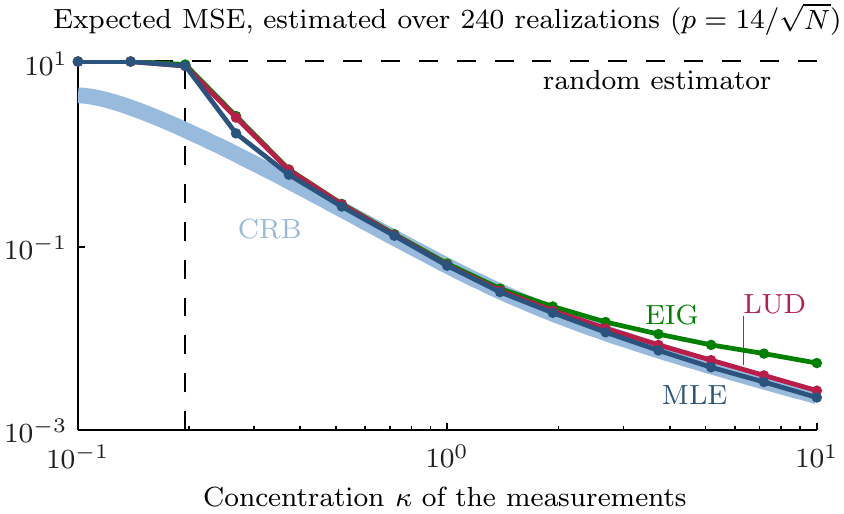}} \quad
		\subfigure{\includegraphics[width=7cm]{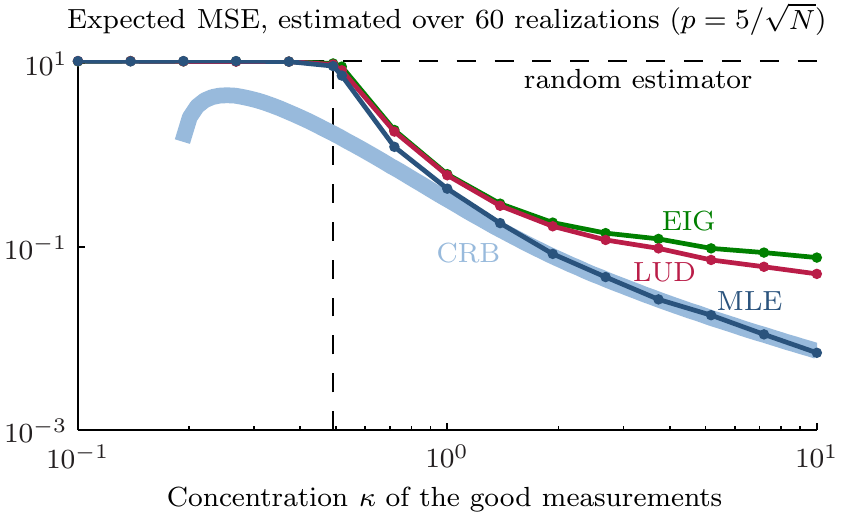}}
	}
	\caption{\add{Comparison of the CRB~{\eqref{eq:CRBanchoredspace}} with the (empirical) expected mean squared error \protect{$\expectt{\dist^2(\bfR, \hat\bfR)/(N-1)}$} (MSE) of known estimators for synchronization of $N = 400$ rotations in $\SOt$ with a complete measurement graph and one anchor. Left: measurements follow the Langevin + outlier model, with 30\% of outliers. Good measurements have variable concentration $\kappa$ on the horizontal axis. The eigenvector method EIG~{\citep{singer2010angular}}, the least unsquared deviation method LUD~{\citep{wang2012LUD}} and the (proxy for the) maximum likelihood estimator MLE~{\citep{boumal2013MLE}} are depicted. MLE has perfect knowledge of the noise model and appears to reach the CRB. LUD performs excellently without specific knowledge of the noise model. Right: same scenario with 75\% of outliers. MLE still appears to reach the CRB. Both: the horizontal dashed line indicates the MSE reached by a random estimator. The vertical dashed line indicates the (theoretically predicted) phase transition point at which the eigenvector method starts performing better than randomly.}}
	\label{fig:plots}
\end{figure}

\subsection{The CRB is an asymptotic bound} \label{subsec:asymptoticbound}

Intrinsic Cramér-Rao bounds are asymptotic bounds, that is, they are meaningful for small errors. \add{This stems from two reasons. First, when the parameter space is curved, only the leading order curvature term has been computed~{\citep{smith2005covariance}}. This induces a Taylor truncation error which restricts the validity of the CRB's to small error regimes. Second, the parameter spaces $\p_A$ and $\p_\emptyset$ are compact, hence there is an upper-bound on the variance of any estimator. The CRB is unable to capture such a global feature because it is derived under the assumption that the logarithmic map $\Log$ is globally invertible, which compactness prevents.} Hence, for arbitrarily low SNR, the CRB without curvature terms will predict an arbitrarily large variance and will be violated \add{(this does not show on Figure~{\ref{fig:plots}} since the CRB depicted includes curvature terms, which in this case make the CRB go to zero at low SNR)}. As a means to locate the point at which the CRB certainly stops making sense, 
consider the problem of estimating a rotation matrix $R\in\SOn$ based on a measurement $Z\in\SOn$ of $R$, and compute the variance of the (unbiased) estimator $\hat R(Z) := Z$ when $Z$ is uniformly random, i.e., when no information is available.

Define $V_n = \expectt{\dist^2(Z, R)}$ for $Z\sim \uniform(\SOn)$. A computation using Weyl's formula yields:
\begin{align}
	V_2 & = \frac{1}{2\pi}\int_{-\pi}^\pi \smallfrobnorm{\log(Z\transpose R)}^2 \,\mathrm{d}\theta = \frac{1}{2\pi}\int_{-\pi}^\pi 2\theta^2 \,\mathrm{d}\theta = \frac{2\pi^2}{3}, &
	V_3 & = \frac{2\pi^2}{3} + 4.
	\label{eq:randomvariance}
\end{align}
A reasonable upper-bound on the variance of an estimator should be $N'V_n$, where $N'$ is the number of independent rotations to estimate ($N-1$ for anchor-free synchronization, $N-|A|$ for anchored synchronization)---see Figure~\ref{fig:plots}. A CRB larger than this should be disregarded.

\remove{Neglecting curvature terms---which we showed is exact for $n=2$ and legitimate for $n\geq 3$ at large SNR---the Cramér-Rao bounds yield simple lower-bounds on the variance of unbiased estimators for synchronization. The right-hand sides of these bounds could in turn be used as cost function in an effort to optimize network structure or anchor placement. We take a closer look at the anchor-free bound to determine what structural properties affect it.}

\subsection{Visualization tools} \label{subsec:visualization}

\add{
In deriving the anchor-free bounds for synchronization, we established that a lower bound on \protect$$\expectt{\dist^2(R_i^{}R_j\transpose, \hat R_i^{}\hat R_j\transpose)}$$ is proportional to the quantity $(e_i-e_j)\transpose \mathcal{L}^\dagger (e_i-e_j)$. Of course, this analysis also holds for anchored graphs. Here, we detail how Figure~{\ref{fig:ECTD}} was produced following a PCA procedure~{\citep{saerens2004principal}} and show how this translates for anchored graphs, as depicted in Figure~{\ref{fig:graphanchored2}}.

We treat both anchored and anchor-free scenarios, thus allowing $A$ to be empty in this paragraph. Let $\mathcal{L}_A = VDV\transpose$ be an eigendecomposition of $\mathcal{L}_A$, such that $V$ is orthogonal and $D = \diag(\lambda_1, \ldots, \lambda_N)$. Let $X = (D^\dagger)^{1/2} V\transpose$, an $N\times N$ matrix with columns $x_1, \ldots, x_N$. Assume without loss of generality that the eigenvalues and eigenvectors are ordered such that the diagonal entries of $D^\dagger$ are decreasing. Then,
\protect\begin{align}
	(e_i-e_j)\transpose \mathcal{L}_A^\dagger (e_i-e_j) & = (e_i-e_j)\transpose VD^\dagger V\transpose (e_i-e_j) = \|x_i - x_j\|^2.
\end{align}
Thus, embedding the nodes at positions $x_i$ realizes the ECTD in $\mathbb{R}^N$. Anchors, if any, are placed at the origin. An optimal embedding, in the sense of preserving the ECTD as well as possible, in a subspace of dimension $k < N$ is obtained by considering $\tilde X$: the first $k$ rows of $X$. The larger the ratio $\sum_{\ell = 1}^k \lambda_\ell^\dagger / \trace(D^\dagger)$, the better the low-dimensional embedding captures the ECTD.

In the presence of anchors, if $j\in A$, then $\mathcal{L}_A^\dagger e_j = 0$ and $(e_i-e_j)\transpose \mathcal{L}_A^\dagger (e_i-e_j) = (\mathcal{L}_A^\dagger)_{ii} = \|x_i\|^2 \approx \|\tilde x_i\|^2$. Hence, the embedded distance to the origin indicates how well a specific node can be estimated.

In practice, this embedding can be produced by computing the $m+k$ eigenvectors of $\mathcal{L}_A$ with smallest eigenvalue, where $m = \max(1, |A|)$ is the number of zero eigenvalues to be discarded (assuming a connected graph). This computation can be conducted efficiently if the graph is structured, e.g., sparse.
}

\begin{figure}[t]
\centering
\includegraphics[width=\textwidth]{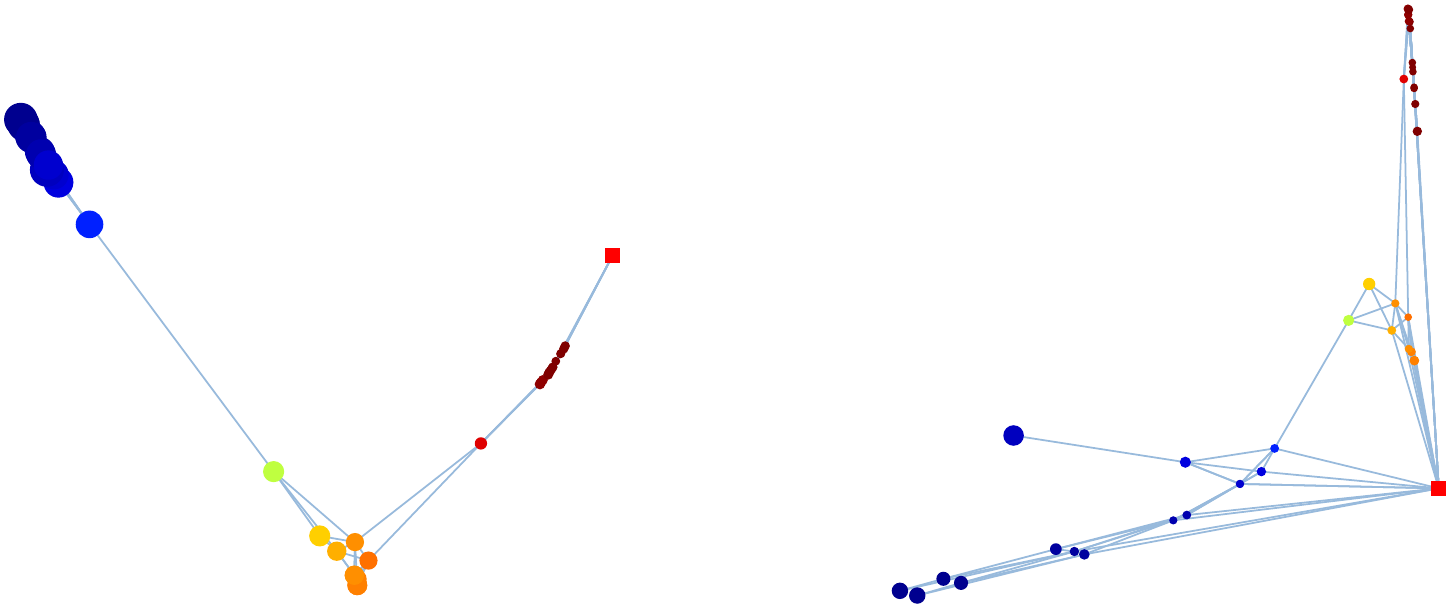}
\caption{\add{The visualization tool described in Section~\ref{sec:analysis}, applied here to the anchored synchronization tasks from Figure~\ref{fig:graphanchored}, produces low-dimensional embeddings of synchronization graphs such that the distance between two nodes is large if their relative rotation is hard to estimate, and their distance to the origin (the anchors: red squares) is large if their individual rotation is hard to estimate.}}
\label{fig:graphanchored2}
\end{figure}

\subsection{A larger Fiedler value is better} \label{subsec:largeFiedler}

We now focus on anchor-free synchronization. At large SNR, the anchor-free CRB~\eqref{eq:varianceanchorfree} normalized by the number of independent rotations $N-1$ reads:
\begin{align}
	\expectt{\MSE} \triangleq \expectt{\frac{1}{N-1}\dist^2([\bfR], [\hat \bfR])} \geq \frac{d^2}{N-1}\,\trace(\mathcal{L}^\dagger),
	\label{eq:varianceLB}
\end{align}
where $\expectt{\MSE}$ as defined is the \emph{expected mean squared error} of an unbiased estimator $[\hat \bfR]$. This expression shows the limiting role of the trace of the pseudoinverse of the information-weighted Laplacian $\mathcal{L}$~\eqref{eq:laplacian} of the measurement graph. This role has been established before for other synchronization problems for simpler groups and simpler noise models~\citep{howard2010estimation}. We now shed some light on this result by stating a few elementary consequences of it. Let
\begin{align}
	0 = \lambda_1 < \lambda_2 \leq \cdots \leq \lambda_N
\end{align}
denote the eigenvalues of $\mathcal{L}$, where $\lambda_2 > 0$ means the measurement graph is assumed connected.

The right-hand side of~\eqref{eq:varianceLB} in terms of the $\lambda_i$'s is given by:
\begin{align}
	\frac{d^2}{N-1}\,\trace(\mathcal{L}^\dagger) & = \frac{d^2}{N-1}\,\sum_{i=2}^N \frac{1}{\lambda_i} \leq \frac{d^2}{\lambda_2}.
\end{align}
The second eigenvalue $\lambda_2$ is known as the  \emph{Fiedler value} (or \emph{algebraic connectivity}) of the information-weighted measurement graph. It is well known that the Fiedler value is low in the presence of bottlenecks in the graph and high in the presence of many, heavy spanning trees. The latter equation translates in the following intuitive statement: by increasing the Fiedler value of the measurement graph, one can force \change{the right-hand side of the CRB~{\eqref{eq:varianceLB}} down}{a lower CRB}. Not surprisingly then, expander graphs are ideal for synchronization, since, by design, their Fiedler value $\lambda_2$ is bounded away from zero while simultaneously being sparse~\citep{hoory2006expander}.

Notice that the Fiedler vector has zero mean (it is orthogonal to $\mathds{1}_N$) and hence describes the horizontal vectors of maximum variance. It is thus also the first axis of the right plot in Figure~\ref{fig:ECTD}.

\subsection{$\trace(\mathcal{L}^\dagger)$ plays a limiting role in synchronization} \label{subsec:traceLdagger}

We continue to focus on anchor-free synchronization. The quantity $\trace(\mathcal{L}^\dagger)$ appears naturally in CRB's for synchronization problems on groups~\citep{howard2010estimation,crbsubquot}. For complete graphs and constant weight $w$, $\trace(\mathcal{L}^\dagger) = \frac{N-1}{wN}$. Then, by~\eqref{eq:varianceLB},
\begin{align}
	\expectt{\MSE} \geq \frac{d^2}{wN}.
	\label{eq:MSEcomplete}
\end{align}
If the measurement graph is sampled from a distribution of random graphs, $\trace(\mathcal{L}^\dagger)$ becomes a random variable. We feel that the study of this random variable for various families of random graph models, such as Erd\"os-R\'enyi, small-world or scale-free graphs~\citep{jamakovic2007relationship} is a question of interest, probably best addressed using the language of random matrix theory.

Let us consider Erd\"os-R\'enyi graphs $G_{N,q}$ with $N$ nodes and edge density $q\in(0,1)$, that is, graphs such that any edge is present with probability $q$, independently from the other edges. Let all the edges have equal weight $w$. Let $\mathcal{L}_{N,q}$ be the Laplacian of a $G_{N,q}$ graph. The expected Laplacian is $\expectt{\mathcal{L}_{N,q}} = wq(NI_N - \mathds{1}_{N\times N})$, which has eigenvalues $\lambda_1=0, \lambda_2=\cdots=\lambda_N = Nwq$. Hence, $\trace(\expectt{\mathcal{L}_{N,q}}^\dagger) = \frac{N-1}{N}\frac{1}{wq}$. A more useful statement can be made using~\citep[Thm.\,1.4]{bryc2006spectral} and~\citep[Thm.\,2]{ding2010spectral}. These theorems state that, asymptotically as $N$ grows to infinity, all eigenvalues of $\mathcal{L}_{N,q}/N$ converge to $wq$ (except of course for one zero eigenvalue). Consequently (details omitted),
\begin{align}
	\lim_{N\to\infty} \trace(\mathcal{L}_{N,q}^\dagger) = \frac{1}{wq}  \textrm{ (in probability).}
	\label{eq:trpinvlapER}
\end{align}
For large $N$, we use the approximation $\trace(\mathcal{L}_{N,q}^\dagger) \approx 1/wq$. Then, by~\eqref{eq:varianceLB}, for large $N$ we have:
\begin{align}
	\expectt{\MSE} \gtrsim \frac{d^2}{wqN}.
	\label{eq:MSEerdosrenyi}
\end{align}
Notice how for fixed measurement quality $w$ and density $q$, the lower-bound on the expected MSE decreases with the number $N$ of rotations to estimate.

\subsection{Synchronization can withstand many outliers} \label{subsec:withstandoutliers}

Consider the Langevin with outliers distribution from Example~\ref{ex:langevinoutliers}, where (on average) a fraction $1-p$ of measurements are sampled uniformly at random, i.e., they are outliers. The information weight $w(p) = \alpha_{n}(\kappa, p)$ for some fixed concentration $\kappa>0$ is given by equations~\eqref{eq:alphatwokp} and~\eqref{eq:alphathreekp} for $n=2$ and 3 respectively. A Taylor expansion around $p=0$ shows that
\begin{align}
	w(p) & = a_{n,\kappa} p^2 + \mathcal{O}(p^3)
\end{align}
for some positive constant $a_{n,\kappa}$. Then, for $p \ll 1$, building upon~\eqref{eq:MSEcomplete} for complete graphs with i.i.d.\ measurements we get:
\begin{align}
	\expectt{\MSE} & \gtrsim \frac{d^2}{a_{n,\kappa} p^2N}.
\end{align}
If one needs to get the right-hand side of this inequality down to a tolerance $\varepsilon^2$, the probability $p$ of a measurement not being an outlier needs to be at least as large as:
\begin{align}
	p_\varepsilon \triangleq \frac{d}{\sqrt{a_{n,\kappa}}\varepsilon} \frac{1}{\sqrt{N}}.
\end{align}
\remove{If $p_\varepsilon > 1$, synchronization cannot be solved to the desired accuracy. }The $1/\sqrt N$ factor is the most interesting: it establishes that as the number of nodes increases, synchronization can withstand a larger fraction of outliers.

This result is to be put in perspective with the bound in~\citep[eq.\,(37)]{singer2010angular} for $n=2$, $\kappa = \infty$, where it is shown that as soon as $p>1/\sqrt{N}$, there is enough information in the measurements (on average) for their eigenvector method to do better than random synchronization. It is also shown there that, as $p^2N$ goes to infinity, the correlation between the eigenvector estimator and the true rotations goes to 1. Similarly, we see that as $p^2N$ increases to infinity, the right-hand side of the CRB goes to zero. Our analysis further shows that the role of $p^2N$ is tied to the problem itself (not to a specific estimation algorithm), and remains the same for $n > 2$ and in the presence of Langevin noise on the good measurements.

Building upon~\eqref{eq:trpinvlapER} for Erd\"os-R\'enyi graphs with $N$ nodes and $M$ edges, we define $p_\varepsilon$ as:
\begin{align}
	p_\varepsilon \triangleq \frac{d}{\sqrt{a_{n,\kappa}}\varepsilon} \sqrt{\frac{N}{2M}}.
	\label{eq:pepsi}
\end{align}

To conclude this remark, we provide numerically computable expressions for $a_{n,\kappa}$, $n=2$ and 3 and give an example:
\begin{align}
	a_{2,\kappa} & = \frac{\kappa^2}{\pi c_2^2(\kappa)} \int_0^\pi (1-\cos 2\theta) \exp(4\kappa\cos\theta) \mathrm{d}\theta, \\[2mm]
	a_{3,\kappa} & = \frac{\kappa^2 e^{2\kappa}}{\pi c_3^2(\kappa)} \int_0^\pi (1-\cos 2\theta) (1-\cos\theta) \exp(4\kappa\cos\theta) \mathrm{d}\theta.
\end{align}
As an example, we generate an Erd\"os-R\'enyi graph with $N = 2500$ nodes and edge density of 60\% for synchronization of rotations in $\SOt$ with i.i.d.\ noise following a $\langout(I, \kappa=7, p)$. The CRB~\eqref{eq:varianceLB}, which requires complete knowledge of the graph to compute $\trace(\mathcal{L}^\dagger)$, tells us that we need $p \geq 2.1\%$ to reach an accuracy level of $\varepsilon = 10^{-1}$ (for comparison, $\varepsilon^2$ is roughly 1000 times smaller than $V_3$~\eqref{eq:randomvariance}). The simple formula~\eqref{eq:pepsi}, which can be computed quickly solely based on the graph statistics $N$ and $M$, yields $p_\varepsilon = 2.2\%$.

\section{Conclusions and perspectives}
\label{sec:conclusions}

In this work, we considered the problem of estimating a set of rotations $R_i\in\SOn$ based on pairwise measurements of relative rotations $R_i^{}R_j\transpose$. We provided a framework to study synchronization as estimation on manifolds for arbitrary $n$ under a large family of noise models. We established formulas for the Fisher information matrix and associated Cramér-Rao bounds of synchronization and provided interpretation and visualization tools for them in both the anchored and anchor-free scenarios. In the analysis of these bounds, we notably pointed out the high robustness of synchronization against random outliers.

Because of the crucial role of the pseudoinverse of the Laplacian $\mathcal{L}^\dagger$ of weighted graphs (and their traces) in the CRB's we established, it would be interesting to study efficient methods to compute such objects, see e.g.~\citep{ho2005pseudo,lin2009fast}.
Likewise, exploring the distribution of $\trace(\mathcal{L}^\dagger)$ seen as a random variable for various models of random graphs should bring some insight as to which networks are naturally easy to synchronize. Expander graphs already emerge as good candidates.

The Laplacian of the measurement graph plays the same role in bounds for synchronization of rotations as for synchronization of translations. Carefully checking the proof given in the present work, it is reasonable to speculate that the Laplacian would appear similarly in CRB's for synchronization on any Lie group, as long as we assume independence of noise affecting different measurements and some symmetry in the noise distribution. Such a generalization would in particular yield CRB's for synchronization on the special Euclidean group of rigid body motions, $\mathbb{R}^3 \rtimes \SOt$.

In other work, we leverage the formulation of synchronization as an estimation problem on manifolds to propose maximum likelihood estimators for synchronization~\citep{boumal2013MLE}. Such approaches result in optimization problems on the parameter manifolds whose geometries we described here. By executing the derivations with the Langevin + outliers noise model, this leads to naturally robust synchronization algorithms.

\section*{Acknowledgments}
We thank the anonymous referees for their insightful comments. This work originated and was partly conducted during visits of NB at PACM, Princeton University. This paper presents research results of the Belgian Network DYSCO (Dynamical Systems, Control, and Optimization), funded by the Interuniversity Attraction Poles Programme, initiated by the Belgian State, Science Policy Office. NB is an FNRS research fellow (Aspirant). A.\,Singer was partially supported by Award Number DMS-0914892 from the NSF, by Award Number FA9550-12-1-0317 from AFOSR, by Award Number R01GM090200 from the National Institute of General Medical Sciences, by the Alfred P. Sloan Foundation and by Award Number LTR DTD 06-05-2012 from the Simons Foundation.

\appendix

\section{Langevin density normalization}
\label{apdx:coefficients}

This appendix presents the derivation of the normalization coefficient $c_4(\kappa)$~\eqref{eq:c4} that appears in the Langevin probability density function~\eqref{eq:langevin}. \add{In doing so, we use Weyl's integration formulas. This method applies well to compute the other integrals we need in this paper so that this appendix can be seen as an example.}

Recall that the coefficient $c_n(\kappa)$ is given by~\eqref{eq:cn}:
\begin{equation}
	c_n(\kappa) = \int_\SOn \exp\left( \kappa\,\trace(Z) \right) \dmu(Z),
\end{equation}
where $\dmu$ is the normalized Haar measure over $\SOn$ such that $\int_{\SOn}\mathrm{d}\mu(Z) = 1$. In particular, $c_n(0) = 1$. The integrand, $g(Z) = \exp\left( \kappa\,\trace(Z) \right)$, is a class function, meaning that for all $Q,Z\in\SOn$ we have $g(Z) = g(QZQ^{-1})$. We are thus in a position to use the Weyl integration formula specialized to $\SOn$~\citep[Exercise~18.1--2]{liegroupsbump}. Formula~\eqref{eq:weyl4} applies,
\begin{align}
	\int_{\mathrm{SO}(4)} g(Z)\, \dmu(Z) & = \frac{1}{4(2\pi)^2} \int_{-\pi}^{\pi}\! \int_{-\pi}^{\pi} \tilde g(\theta_1, \theta_2) \cdot |e^{i\theta_1}-e^{i\theta_2}|^2 \cdot |e^{i\theta_1}-e^{-i\theta_2}|^2 \, \mathrm{d} \theta_1 \mathrm{d} \theta_2,
	\label{eq:weyl4apdx}
\end{align}
where we defined
\begin{align}
	\tilde g(\theta_1, \theta_2) & \triangleq g\left(\diag\left(\begin{pmatrix}\cos\theta_1 & -\sin\theta_1 \\ \sin\theta_1 & \cos\theta_1 \end{pmatrix},\begin{pmatrix}\cos\theta_2 & -\sin\theta_2 \\ \sin\theta_2 & \cos\theta_2 \end{pmatrix}\right)\right).
\end{align}
This reduces the problem to a classical integral over the square---or really the torus---$[-\pi, \pi] \times [-\pi, \pi]$. Evaluating $\tilde g$ is straightforward:
\begin{align}
	\tilde g(\theta_1, \theta_2) = \exp\big(2\kappa \cdot [\cos\theta_1+\cos\theta_2]\big).
	\label{eq:gdiag}
\end{align}
Using trigonometric identities, we also get:
\begin{align}
	& |e^{i\theta_1}-e^{i\theta_2}|^2 \cdot |e^{i\theta_1}-e^{-i\theta_2}|^2 \nonumber \\
	& = 4\big(1-\cos(\theta_1-\theta_2)\big)\big(1-\cos(\theta_1+\theta_2)\big) \nonumber \\
		& = 4\big( 1 - \cos(\theta_1-\theta_2) - \cos(\theta_1+\theta_2) + \cos(\theta_1-\theta_2)\cos(\theta_1+\theta_2) \big) \nonumber \\
		& = 4\left( 1 - 2\cos\theta_1\cos\theta_2+\frac{1}{2}(\cos 2\theta_1 + \cos 2\theta_2) \right).
	\label{eq:prodabs}
\end{align}
Each cosine factor now only depends on one of the angles. Plugging~\eqref{eq:gdiag} and~\eqref{eq:prodabs} in~\eqref{eq:weyl4apdx} and using Fubini's theorem, we get:
\begin{align}
	c_4(\kappa) & = \frac{1}{2\pi}\int_{-\pi}^{\pi} e^{2\kappa\cos\theta_1} \cdot h(\theta_1) \ \mathrm{d}\theta_1,
	\label{eq:c4intbis}
\end{align}
with:
\begin{align}
	h(\theta_1) & = \frac{1}{2\pi}\int_{-\pi}^{\pi} e^{2\kappa\cos\theta_2} \left( 1+\frac{1}{2}\cos 2\theta_1 - 2\cos\theta_1 \cos\theta_2 + \frac{1}{2}\cos 2\theta_2 \right) \mathrm{d}\theta_2.
\end{align}
Now recalling the definition of the modified Bessel functions of the first kind~\citep{wolframbessel},
\begin{align}
	I_\nu(x) = \frac{1}{2\pi} \int_{-\pi}^{\pi} e^{x\cos\theta} \cos(\nu\theta) \, \mathrm{d}\theta,
	\label{eq:bessel}
\end{align}
we further simplify $h$ to get:
\begin{align}
	h(\theta_1) & = \left(1+\frac{1}{2}\cos 2\theta_1\right) \cdot I_0(2\kappa) - 2\cos\theta_1 \cdot I_1(2\kappa) + \frac{1}{2} \cdot I_2(2\kappa).
	\label{eq:h}
\end{align}
Plugging~\eqref{eq:h} in~\eqref{eq:c4intbis} and resorting to Bessel functions again, we finally obtain the practical formula~\eqref{eq:c4} for $c_4(\kappa)$:
\begin{align}
	c_4(\kappa) & = \left[I_0(2\kappa) + \frac{1}{2} I_2(2\kappa)\right] \cdot I_0(2\kappa) - 2I_1(2\kappa)\cdot I_1(2\kappa) + \frac{1}{2} I_0(2\kappa) \cdot I_2(2\kappa) \nonumber \\
               & = I_0(2\kappa)^2-2I_1(2\kappa)^2 + I_0(2\kappa)I_2(2\kappa).
\end{align}

For generic $n$, the necessary manipulations are very similar to the developments in this appendix. For $n = 2$ or $3$, the computations are easier. For $n=5$, the computations take up about the same space. For $n\geq 6$, the same technique will still work but gets quite cumbersome.

In~\citep[Appendix A.6]{chikuse2003statistics}, Chikuse describes how the normalization coefficients for Langevin distributions on $\On$ can be expressed in terms of hypergeometric functions with matrix arguments. One advantage of this method is that it generalizes to non-isotropic Langevin's. The method we demonstrated here, on the other hand, is tailored for our need (isotropic Langevin's on $\SOn$) and yields simple expressions in terms of Bessel functions---which are readily available in Matlab for example.

\section{Curvature terms}\label{apdx:curvatureterms}

\remove{Cramér-Rao bounds for estimation problems on manifolds include extra terms in comparison to the classical CRB's on Euclidean spaces~{\citep{crbsubquot,smith2005covariance,xavier2005intrinsic}}. These terms stem from the possible curvature of the parameter space. As they can be quite tedious to compute, it is tempting to neglect curvature terms. One way to do this is to invoke large enough SNR's such that typical errors would verify \protect{$\dist^2(\bfR, \hat \bfR) \ll K_{\mathrm{max}}^{-1}$}, where $K_{\mathrm{max}}$ is the largest sectional curvature of the parameter space~{\citep{smith2005covariance}}. Indeed, in this regime, error terms are small enough that curvature has essentially no effect. Unfortunately, for synchronization, $K_{\mathrm{max}}^{-1}$ does not grow with $N$. Assuming we are interested in SNR's such that estimators commit a typical error on each $\hat R_i$ of the order of, say, 1 degree, a typical $\dist^2(\bfR, \hat \bfR)$ grows linearly with $N$ (the number of rotations). The bound $K_{\mathrm{max}}^{-1}$ remaining constant w.r.t. $N$, we would quickly reach values of $N$ such that curvature cannot be legitimately neglected via this argument.}
\remove{Consequently, we compute the curvature terms from theorems 4 and 5 in~{\citep{crbsubquot}} for $n=2$ and $n=3$ explicitly. Our conclusion will be that it \emph{is} legitimate to neglect them, at large but still interesting SNR's. This is mainly a consequence of the product nature of $\p$. Indeed, sectional curvatures on product manifolds vanish for most tangent 2-planes (more precisely, for all planes extending over more than one of the underlying terms of the product). Describing the curvature of such spaces only through a global bound $K_{\mathrm{max}}$ hides this important structure. By investigating the curvature terms in detail, we capture the geometry accurately. For the same reason, we expect that curvature terms are negligible for $n\geq 4$ too, but we do not conduct the calculations.}
\add{We compute the curvature terms from theorems {\ref{thm:CRBsubmanifold}} and {\ref{thm:CRBquotientmanifold}} for $n=2$ and $n=3$ explicitly.} We first treat $\p_A$~\eqref{eq:pA}, then $\p_\emptyset$~\eqref{eq:pempty}. We show that for rotations in the plane ($n=2$), the parameter spaces are flat, so that curvature terms vanish exactly. For rotations in space ($n=3$), we compute the curvature terms explicitly and show that they are on the order of $\mathcal{O}(\mathrm{SNR}^{-2})$, whereas dominant terms in the CRB are on the order of $\mathcal{O}(\mathrm{SNR}^{-1})$, for the notion of SNR proposed in Section~\ref{sec:crbsynch}. It is expected that curvature terms are negligible for $n\geq 4$ too for the same reasons, but we do not conduct the calculations.

\subsection{Curvature terms for $\p_A$}

The manifold $\p_A$~\eqref{eq:pA} is a (product) Lie group. Hence, the Riemannian curvature tensor $\mathcal{R}$ of $\p_A$ on the tangent space $\T_{\bfR}\p_A$ is given by a simple formula~\citep[Corollary~11.10, p.\,305]{oneill}:
\begin{align}
	\inner{\mathcal{R}(\bfX,\bfOmega)\bfOmega}{\bfX} & = \frac{1}{4} \|[\bfX, \bfOmega]\|^2,
	\label{eq:curvatureliegroups}
\end{align}
where $[\bfX, \bfOmega]$ is the Lie bracket of $\bfX = (X_1, \ldots, X_N)$ and $\bfOmega = (\Omega_1, \ldots, \Omega_N)$, two vectors (not necessarily orthonormal) in the tangent space $\T_{\bfR}\p_A$. Following~\citep[Theorem~4]{crbsubquot}, in order to compute the curvature terms for the CRB of synchronization on $\p_A$, we first need to compute
\begin{align}
	\mathbf{R_m}[\bfOmega', \bfOmega'] & \triangleq \expectt{\inner{\mathcal{R}(\bfX, P_A\bfOmega') P_A\bfOmega'}{\bfX}},
	\label{eq:RmwithPA}
\end{align}
where $\bfOmega'$ is any tangent vector in $\T_{\bfR}\p$ and $P_A\bfOmega'$ is its orthogonal projection on $\T_{\bfR}\p_A$. We expand $\bfX$ and $\bfOmega = P_A\bfOmega'$ using the orthonormal basis $(\bfxi_{k\ell})_{k=1\ldots N, \ell = 1\ldots d}$~\eqref{eq:basisTRP} of $\T_{\bfR}\p \supset \T_{\bfR}\p_A$:
\begin{align}
	\bfOmega & = \sum_{k,\ell}  \alpha_{k\ell} \bfxi_{k\ell} & \textrm{and} & & \bfX & = \sum_{k,\ell} \beta_{k\ell} \bfxi_{k\ell},
\end{align}
such that $\Omega_k = R_k\sum_\ell \alpha_{k\ell}E_\ell$ and $X_k = R_k\sum_\ell \beta_{k\ell}E_\ell$. Of course, $\alpha_{k\ell} = \beta_{k\ell} = 0\ \forall k\in A$. Then, since $$[\bfX,\bfOmega] = ([X_1,\Omega_1], \ldots, [X_N,\Omega_N]),$$ it follows that:
\begin{align}
	\mathbf{R_m}[\bfOmega,\bfOmega] & = \expectt{\frac{1}{4}\|[\bfX, \bfOmega]\|^2} = \expectt{\frac{1}{4}\sum_k \|[X_k, \Omega_k]\|^2} \\ & = \frac{1}{4}\sum_k \expectt{ \Big\|\sum_{\ell,s} \alpha_{k\ell}\beta_{ks} [E_\ell, E_s] \Big\|^2 }.
	\label{eq:Rmgeneric}
\end{align}
For $\bfX$ the tangent vector in $\T_{\bfR}\p_A$ corresponding to the (random) estimation error $\Log_{\bfR}(\hat \bfR)$, the coefficients $\beta_{k\ell}$ are random variables. The covariance matrix $C_A$~\eqref{eq:CA} is given in terms of these coefficients by:
\begin{align}
	(C_A)_{kk',\ell\ell'} \triangleq \expectt{\inner{\bfX}{\bfxi_{k\ell}}\inner{\bfX}{\bfxi_{k'\ell'}}} = \expectt{\beta_{k\ell}\beta_{k'\ell'}}.
\end{align}
The goal now is to express the entries of the matrix associated to $\mathbf{R_m}$ as linear combinations of the entries of $C_A$.

For $n=2$, of course, $\mathbf{R_m}\equiv 0$ since Lie brackets vanish thanks to the commutativity of rotations in the plane.

For $n=3$, the constant curvature of $\SOt$ leads to nice expressions, which we obtain now. Let us consider the orthonormal basis $(E_1, E_2, E_3)$ of $\mathfrak{so}(3)$~\eqref{eq:basisSOt}. Observe that it obeys $[E_1,E_2]=E_3/\sqrt{2}, [E_2,E_3]=E_1/\sqrt{2}, [E_3,E_1]=E_2/\sqrt{2}$. As a result, equation~\eqref{eq:Rmgeneric} simplifies and becomes:
\begin{multline}
	\mathbf{R_m}[\bfOmega,\bfOmega] = \frac{1}{8}\sum_k \expect\,\big\{ (\alpha_{k2}\beta_{k3}-\alpha_{k3}\beta_{k2})^2 + 
(\alpha_{k3}\beta_{k1}-\alpha_{k1}\beta_{k3})^2 + (\alpha_{k1}\beta_{k2}-\alpha_{k2}\beta_{k1})^2 \big\}.
	\label{eq:Rmfoo}
\end{multline}
We set out to compute the $dN\times dN$ matrix $R_m$ associated to the bi-linear operator $\mathbf{R_m}$ w.r.t.\ the basis~\eqref{eq:basisTRP}. By definition, $(R_m)_{kk',\ell\ell'} = \mathbf{R_m}[\bfxi_{k\ell},\bfxi_{k'\ell'}]$. Equation~\ref{eq:Rmfoo} readily yields the diagonal entries ($k=k', \ell=\ell'$). Using the polarization identity to determine off-diagonal entries,
\begin{align}
	(R_m)_{kk',\ell\ell'} = \frac{1}{4}\big( \mathbf{R_m}[\bfxi_{k\ell}+\bfxi_{k'\ell'},\bfxi_{k\ell}+\bfxi_{k'\ell'}] - \mathbf{R_m}[\bfxi_{k\ell}-\bfxi_{k'\ell'},\bfxi_{k\ell}-\bfxi_{k'\ell'}] \big),
\end{align}
it follows through simple calculations (taking into account the orthogonal projection onto $\T_{\bfR}\p_A$ that appears in~\eqref{eq:RmwithPA}) that:
\begin{align}
	(R_m)_{kk',\ell\ell'} & = \begin{cases} \frac{1}{8} \sum_{s\neq \ell} (C_A)_{kk,ss} & \textrm{ if } k=k'\notin A, \ell=\ell',\\-\frac{1}{8} (C_A)_{kk, \ell\ell'} & \textrm{ if } k=k'\notin A, \ell\neq \ell',\\ 0 & \textrm{ otherwise}.\end{cases}
	\label{eq:Rmanchors}
\end{align}
Hence, $R_m(C_A)$ is a block-diagonal matrix whose nonzero entries are linear functions of the entries of $C_A$. Theorem~\ref{thm:CRBsubmanifold} requires~\eqref{eq:Rmanchors} to compute the matrix $R_m(F_A^\dagger)$. Considering the special structure of the diagonal blocks of $F_A^\dagger$~\eqref{eq:FA} (they are proportional to $I_3$), we find that
\begin{align}
	R_m(F_A^\dagger) & = \frac{1}{4} \ddiag(F_A^\dagger) = \frac{3}{4} \ddiag(\mathcal{L}_A^\dagger)\otimes I_3,
\end{align}
where $\ddiag$ puts all off-diagonal entries of a matrix to zero. Thus, as the SNR goes up and hence as $\mathcal{L}_A^\dagger$ goes down, the curvature term $R_m(F_A^\dagger)F_A^\dagger + F_A^\dagger R_m(F_A^\dagger)$ in Theorem~\ref{thm:CRBsubmanifold} will become negligible compared to the main term in the CRB, $F_A^\dagger$.

\subsection{Curvature terms for $\p_\emptyset$}
\label{subsec:curvaturequotient}

The manifold $\p_\emptyset$~\eqref{eq:pempty} is a quotient manifold of $\p$. Hence, the Riemannian curvature tensor $\mathcal{R}$ of $\p_\emptyset$ is given by O'Neill's formula~\citep[Thm~7.47, p.\,213 and Lemma~3.39, p.\,77]{oneill}, showing that the quotient operation can only increase the curvature of the parameter space:
\begin{align}
	\inner{\mathcal{R}(\D\pi \bfX,\D\pi\bfOmega)\D\pi\bfOmega}{\D\pi \bfX} & = \frac{1}{4} \|[\bfX, \bfOmega]\|^2 + \frac{3}{4} \|[\bfX, \bfOmega]^\mathcal{V}\|^2,
	\label{eq:curvaturequotient}
\end{align}
where $\bfX, \bfOmega$ are horizontal vectors in $\HH_{\bfR} \subset \T_{\bfR}\p$ identified with tangent vectors to $\p_\emptyset$ via the differential of the Riemannian submersion $\D\pi(\bfR)$~\eqref{eq:submersion}, denoted simply as $\D\pi$ for convenience. The vector $[\bfX,\bfOmega]^\mathcal{V} \in \VV_{\bfR} \subset \T_{\bfR}\p$ is the vertical part of $[\bfX,\bfOmega]$, i.e., the component that is parallel to the fibers. Since in our case, moving along a fiber consists in changing all rotations along the same direction, $[\bfX,\bfOmega]^\mathcal{V}$ corresponds to the mean component of $[\bfX,\bfOmega]$:
\begin{align}
	[\bfX,\bfOmega]^\mathcal{V} = (R_1\omega, \ldots, R_N\omega), \textrm{ with } \omega = \frac{1}{N} \sum_{k=1}^N [R_k\transpose X_k^{}, R_k\transpose \Omega_k^{}].
\end{align}

For $n=2$, since $[\bfX,\bfOmega] = 0$, $[\bfX,\bfOmega]^\mathcal{V} = 0$ also, hence $\p_\emptyset$ is still a flat manifold, despite the quotient operation. We now show that for $n=3$ the curvature terms in Theorem~\ref{thm:CRBquotientmanifold} are equivalent to the curvature terms for $\p_A$ with $A:=\emptyset$ plus extra terms that decay as $1/N$ and can thus be neglected.

The curvature operator $\mathbf{R_m}$~\citep[eq.\,(54)]{crbsubquot} is given by:
\begin{align}
	\mathbf{R_m}[\bfxi_{k\ell}, \bfxi_{k\ell}] & \triangleq \expectt{\inner{\mathcal{R}(\D\pi \bfX, \D\pi \bfxi_{k\ell})\D\pi \bfxi_{k\ell}}{\D\pi \bfX}} \\
	& = \expectt{\frac{1}{4}\|[\bfX, \bfxi_{k\ell}^{} - \bfxi_{k\ell}^{\mathcal{V}}]\|^2 + \frac{3}{4}\|[\bfX, \bfxi_{k\ell}^{} - \bfxi_{k\ell}^{\mathcal{V}}]^{\mathcal{V}}\|^2}.
\end{align}
The tangent vector $\bfxi_{k\ell}^{} - \bfxi_{k\ell}^{\mathcal{V}}$ is, by construction, the horizontal part of $\bfxi_{k\ell}$. The vertical part decreases in size as $N$ grows: $\bfxi_{k\ell}^{\mathcal{V}} = \frac{1}{N}(R_1E_\ell, \ldots, R_NE_\ell)$.
It follows that:
\begin{align}
	\expectt{\|[\bfX, \bfxi_{k\ell}^{} - \bfxi_{k\ell}^{\mathcal{V}}]\|^2} = \expectt{\|[\bfX, \bfxi_{k\ell}]\|^2} (1 + \mathcal{O}(1/N)).
\end{align}
Hence, up to a factor that decays as $1/N$, the first term in the curvature operator $\mathbf{R_m}$ is the same as that of the previous section for $\p_A$, with $A:=\emptyset$. We now deal with the second term defining $\mathbf{R_m}$:
\begin{align}
	[\bfX, \bfxi_{k\ell}]^{\mathcal{V}} & = (R_1\omega, \ldots, R_N\omega), \textrm{ with} \\
	\omega & = \frac{1}{N} [R_k\transpose X_k^{}, E_\ell^{}] = \frac{1}{N} \sum_s \beta_{ks} [E_s, E_\ell^{}].
\end{align}
It is now clear that for large $N$ this second term is negligible compared to $\expectt{\|[\bfX, \bfxi_{k\ell}]\|^2}$:
\begin{align}
	\left\| [\bfX, \bfxi_{k\ell}]^{\mathcal{V}} \right\|^2 & = N\|\omega\|^2 = \mathcal{O}(1/N).
\end{align}
Applying polarization to $\mathbf{R_m}$ to compute off-diagonal terms then concludes the argument showing that the curvature terms in the CRB for synchronization of rotations on $\p_\emptyset$, despite an increased curvature owing to the quotient operation~\eqref{eq:curvaturequotient}, are very close (within a $\mathcal{O}(1/N)$ term) to the curvature terms established earlier for synchronization on $\p_A$, with $A:=0$. We do not include an exact derivation of these terms as it is quite lengthy and does not bring much insight to the problem.

\section{Proof that $G_{ij}(QZQ\transpose) = QG_{ij}(Z)Q\transpose$ and that $G_{ij}(Z\transpose) = -G_{ij}(Z)$.}
\label{apdx:GijGji}

Recall the definition of $G_{ij} : \SOn \to \son$~\eqref{eq:defG} introduced in Section~\ref{sec:fimsynch}:
\begin{align}
	G_{ij}(Z) & = \left[\grad\,\log f_{ij}(Z)\right]\transpose Z.
\end{align}
We now establish a few properties of this mapping. Let us introduce a few functions:
\begin{align}
	g & : \SOn \to \mathbb{R} : Z \mapsto g(Z) = \log f_{ij}(Z), \\
	h_1 & : \SOn \to \SOn : Z \mapsto h_1(Z) = QZQ\transpose, \\
	h_2 & : \SOn \to \SOn : Z \mapsto h_2(Z) = Z\transpose.
\end{align}
Notice that because of Assumption~\ref{assu:invariant} ($f_{ij}$ is only a function of the eigenvalues of its argument), we have $g\circ h_i \equiv g$ for $i=1,2$. Hence,
\begin{align}
	\grad\,g(Z) = \grad(g\circ h_i)(Z) = (\D h_i(Z))^*\left[ \grad\,g(h_i(Z)) \right],
	\label{eq:gradggradgh}
\end{align}
where $(\D h_i(Z))^*$ denotes the adjoint of the differential $\D h_i(Z)$, defined by
\begin{align}
	\forall H_1, H_2 \in \T_Z\SOn, \quad \inner{\D h_i(Z)[H_1]}{H_2} = \inner{H_1}{(\D h_i(Z))^*[H_2]}.
	\label{eq:defadjoint}
\end{align}
The rightmost equality of~\eqref{eq:gradggradgh} follows from the chain rule. Indeed, starting with the definition of gradient, we have:
\begin{align}
	\forall H\in\T_Z\SOn, \quad \inner{\grad (g\circ h_i)(Z)}{H} & = \D (g\circ h_i)(Z)[H] \nonumber\\
											 & = \D g(h_i(Z))[\D h_i(Z)[H]] \nonumber\\
											 & = \inner{\grad\, g(h_i(Z))}{\D h_i(Z)[H]} \nonumber\\
											 & = \inner{(\D h_i(Z))^*\left[\grad\, g(h_i(Z)) \right]}{H}.
	\label{eq:gradgcirch}
\end{align}
Let us compute the differentials of the $h_i$'s and their adjoints:
\begin{align}
	\D h_1(Z)[H] & = QHQ\transpose, & (\D h_1(Z))^*[H] & = Q\transpose H Q, \\
	\D h_2(Z)[H] & = H\transpose, & (\D h_2(Z))^*[H] & = H\transpose.
\end{align}
Plugging this in~\eqref{eq:gradggradgh}, we find two identities (one for $h_1$ and one for $h_2$):
\begin{align}
	\grad\,\log f_{ij}(Z) & = Q\transpose [ \grad\,\log f_{ij}(QZQ\transpose) ] Q,\\
	\grad\,\log f_{ij}(Z) & = [ \grad\,\log f_{ij}(Z\transpose) ]\transpose.
\end{align}
The desired result about the $G_{ij}$'s now follows easily. For any $Q\in\On$,
\begin{align}
	G_{ij}(QZQ\transpose) & = [ \grad\,\log f_{ij}(QZQ\transpose) ]\transpose QZQ\transpose
								  = [ Q \grad\,\log f_{ij}(Z) Q\transpose ]\transpose QZQ\transpose 
								  = QG_{ij}(Z)Q\transpose;
\label{eq:Gijprop1}
\end{align}
and similarly:
\begin{align*}
	G_{ij}(Z\transpose) & = [ \grad\,\log f_{ij}(Z\transpose) ]\transpose Z\transpose 
								  = \grad\,\log f_{ij}(Z) Z\transpose 
								  = ZG_{ij}\transpose(Z)Z\transpose 
								  = -ZG_{ij}(Z)Z\transpose 
								  = -G_{ij}(Z),
\end{align*}
where we used that $G_{ij}(Z)$ is skew-symmetric and we used~\eqref{eq:Gijprop1} for the rightmost equality.

\section{Proof of lemma~\ref{lem:Pkl}}
\label{apdx:Pkl}
\add{Lemma~{\ref{lem:Pkl}} essentially states that, given two orthogonal, same-norm vectors $E$ and $E'$ in $\son$, there exists a rotation which maps $E$ to $E'$. Applying that same rotation to $E'$ (loosely, rotating by an additional 90$^\circ$) recovers $-E$. This fact is obvious if we may use any rotation on the subspace $\son$. The set of rotations on $\son$ has dimension $d(d-1)/2$, with $d = \dim\son = n(n-1)/2$. In contrast, for the proof of Lemma~{\ref{lem:decorrelation}} to go through, we need to restrict ourselves to rotations of $\son$ which can be written as $\Omega \mapsto P\transpose \Omega P$, with $P\in\On$ orthogonal. We thus have only $d$ degrees of freedom. The purpose of the present lemma is to show that this can still be done if we further restrict the vectors $E$ and $E'$ as prescribed in Lemma~{\ref{lem:Pkl}}.}

\begin{proof}
We give a constructive proof, distinguishing among three cases. \emph{(Case 1: $\{i,j\}\cap\{k,\ell\}=\emptyset$)}. Construct $T$ as the identity $I_n$ with columns $i$ and $k$ swapped, as well as columns $j$ and $\ell$. Construct $S$ as $I_n$ with $S_{ii} := -1$. By construction, it holds that $T\transpose ET = E'$, $T\transpose E'T = E$, $SES = -E$ and $SE'S = E'$. Set $P = TS$ to conclude: $P\transpose EP = ST\transpose ETS = SE'S = E'$, $P\transpose E'P = ST\transpose E'TS = SES = -E$. \emph{(Case 2: $i=k,j\neq\ell$)}.  Construct $T$ as the identity $I_n$ with columns $j$ and $\ell$ swapped. Construct $S$ as $I_n$ with $S_{jj} := -1$. The same properties will hold. Set $P = TS$ to conclude. \emph{(Case 3: $i=\ell,j\neq k$)}.  Construct $T$ as the identity $I_n$ with columns $j$ and $k$ swapped and with $T_{ii} := -1$. Construct $S$ as $I_n$ with $S_{jj} := -1$. Set $P = TS$ to conclude. \emph{(Cases 4 and 5: $j=k$ or $j=\ell$)}. The same construction goes through.
\end{proof}

\bibliographystyle{plainnat}
\bibliography{synchrotbiblio}

\end{document}